
\documentclass[12pt]{amsart}
\usepackage{amsmath, amsthm, amssymb, amscd, hyperref}
\usepackage{arydshln} 
\usepackage[headings]{fullpage}
\usepackage{xypic}
\xyoption{all}

\usepackage{mathrsfs} 
\usepackage{wasysym} 

\usepackage{color}   

\newcommand*\colvec[3][]{
    \begin{pmatrix}\ifx\relax#1\relax\else#1\\\fi#2\\#3\end{pmatrix}
} 

\makeatletter 
\let\@wraptoccontribs\wraptoccontribs 
\makeatother

\numberwithin{equation}{section}
 
\theoremstyle{plain}
\newtheorem{theorem}[equation]{Theorem}
\newtheorem{corollary}[equation]{Corollary}
\newtheorem{lemma}[equation]{Lemma}
\newtheorem{proposition}[equation]{Proposition}
\newtheorem{question}[equation]{Question}

\theoremstyle{definition}
\newtheorem{definition}[equation]{Definition}
\newtheorem{definitions}[equation]{Definitions}

\newtheorem{remark}[equation]{Remark}

 \numberwithin{figure}{section}

\DeclareMathOperator{\Spec}{Spec}
\DeclareMathOperator{\LSpec}{LSpec}

\newcommand{\cat}{\mathsf}

\DeclareMathOperator{\Ring}{\cat{Ring}}
\DeclareMathOperator{\pRing}{\cat{pRing}}
\DeclareMathOperator{\pBRing}{\cat{pBRing}}
\DeclareMathOperator{\cRing}{\cat{cRing}}

\DeclareMathOperator{\Bool}{\cat{Bool}}
\DeclareMathOperator{\pBool}{\cat{pBool}}
\DeclareMathOperator{\fBool}{\cat{fBool}}
\DeclareMathOperator{\Set}{\cat{Set}}

\DeclareMathOperator{\Top}{\cat{Top}}
\DeclareMathOperator{\Frm}{\cat{Frm}}
\DeclareMathOperator{\Loc}{\cat{Loc}}

\DeclareMathOperator{\AffSch}{\cat{AffSch}}

\DeclareMathOperator{\pSpec}{\mathit{p}-Spec}
\DeclareMathOperator{\KS}{KS}

\DeclareMathOperator{\Proj}{Proj}
\DeclareMathOperator{\Idpt}{Idpt}
\DeclareMathOperator{\Span}{Span}
\DeclareMathOperator{\range}{range}

\DeclareMathOperator{\Hom}{Hom}
\DeclareMathOperator{\pt}{pt}

\DeclareMathOperator{\RIdl}{RIdl}

\newcommand{\Ringop}{\Ring\op}
\newcommand{\cRingop}{\cRing\op}
 
\newcommand{\C}{\mathbb{C}}
\newcommand{\Q}{\mathbb{Q}}
\newcommand{\Z}{\mathbb{Z}}
\newcommand{\F}{\mathbb{F}}
\newcommand{\R}{\mathbb{R}}
\newcommand{\M}{\mathbb{M}}
\newcommand{\GL}{\mathrm{GL}}

\newcommand{\setS}{\mathcal{S}}
\newcommand{\p}{\mathfrak{p}}

\renewcommand{\O}{\mathcal{O}}
\newcommand{\catC}{\mathcal{C}}

\newcommand{\op}{^\mathrm{op}}
\newcommand{\comm}{\odot}

\newcommand{\sym}{\mathrm{sym}}

\newcommand{\two}{\mathbf{2}}


\newcommand{\separate}{\bigskip}

 
\begin{document}

\title[A Kochen-Specker theorem for integer matrices]
{A Kochen-Specker theorem for integer matrices\\
and noncommutative spectrum functors}

\author{Michael Ben-Zvi}
\address{Tufts University\\
Department of Mathematics\\
Bromfield-Pearson Hall\\
503 Boston Avenue\\
Medford, MA 02155, USA}
\email{michael.ben\_zvi@tufts.edu}

%

\author{Alexander Ma} 
\address{Department of Mathematics\\
University of Minnesota\\
206 Church St.\ SE\\
Minneapolis, MN, 55413, USA}
\email{maxx0234@umn.edu}

\author{Manuel Reyes}
\address{Department of Mathematics\\
Bowdoin College\\
8600 College Station\\
Brunswick, ME 04011--8486, USA}
\email{reyes@bowdoin.edu}
\urladdr{http://www.bowdoin.edu/~reyes/}
\thanks{This material is based upon work supported by the National Science Foundation 
under grant no.\ DMS-1407152}

\contrib[with Appendix by]{Alexandru Chirvasitu}
\address{Department of Mathematics\\
Box 354350\\
University of Washington\\
Seattle, WA 98195, USA}
\email{chirva@math.washington.edu}

\date{August 11, 2017}
\subjclass[2010]{Primary: 81P13, 16B50; Secondary: 03G05, 15B33, 15B36.}
\keywords{Kochen-Specker Theorem, contextuality, idempotent integer matrix, prime spectrum, 
noncommutative spectrum, prime partial ideal, partial Boolean algebra}

\begin{abstract}
We investigate the possibility of constructing Kochen-Specker uncolorable sets of idempotent 
matrices whose entries lie in various rings, including the rational numbers, the integers, and finite 
fields. Most notably, we show that there is no Kochen-Specker coloring of the $n \times n$ idempotent 
integer matrices for $n \geq 3$, thereby illustrating that Kochen-Specker contextuality is an inherent 
feature of pure matrix algebra.
We apply this to generalize recent no-go results on noncommutative spectrum functors, 
showing that any contravariant functor from rings to sets (respectively, topological spaces or locales)
that restricts to the Zariski prime spectrum functor for commutative rings must assign the 
empty set (respectively, empty space or locale) to the matrix ring $\M_n(R)$ for any integer 
$n \geq 3$ and any ring $R$.
An appendix by Alexandru Chirvasitu shows that Kochen-Specker colorings of idempotents in 
partial subalgebras of $\M_3(F)$ for a perfect field $F$ can be extended to partial algebra
morphisms into the algebraic closure of $F$.
\end{abstract}

\maketitle

\section{Introduction}
\label{sec:intro}

The Bell-Kochen-Specker Theorem~\cite{Bell, KochenSpecker} is a no-go theorem that demonstrates 
the impossibility of certain hidden variable theories for quantum mechanics. 
The usual formulation of the Heisenberg Uncertainty Principle in terms of matrix mechanics
shows that we can only expect to have precise knowledge of the values of two quantum-mechanical
observables $P$ and $Q$ simultaneously if these observables (represented as operators on some Hilbert
space) commute: $PQ = QP$. Thus commuting observables are also called \emph{commeasurable}. The 
nature of the algebra of operators on a Hilbert space $H$ (of dimension $\dim(H) \geq 2$) is such that
one may have an observable $P$ commeasurable with two observables $Q$ and $Q'$, but such that
$Q$ and $Q'$ are not commeasurable with one another. Thus one may expect to have precise
simultaneous knowledge of the values of $P$ and $Q$, or of $P$ and $Q'$, but not of the triple
$\{P,Q,Q'\}$. 
A hidden variable theory is called \emph{non-contextual} if the value $v(P)$ assigned to an observable
$P$ is independent of the choice of pairwise commeasurable set of observables $\{P,Q_1,Q_2,\dots\}$ 
that also happen to be measured by an experimental setup. 
This property was emphasized by Bell in~\cite[Section~V]{Bell}. 

Now suppose that one restricts to observables that are projections (i.e., self-adjoint idempotent
operators) on $H$. The value of each projection, being an eigenvalue of the operator, is either~$0$
or~$1$, so that such observables represent ``yes-no questions'' that may be asked about the
underlying quantum system. 
Further, if one has an orthogonal set $\{P_i\}$ of projections whose sum is the identity (such as 
the projections onto an orthonormal basis of $H$), these classically correspond to mutually exclusive,
collectively exhaustive propositions about the system. If one measures the values of the $P_i$
simultaneously, then compatibility with the classical logic of Boolean algebras would require that one 
of these projections is assigned the value~$1$ and the rest are assigned value~$0$. Thus, a 
non-contextual hidden variable theory is expected to ``color'' every projection on $H$ with a
value~$0$ or~$1$ in such a way that, for each basis $\{v_i\}$ of $H$, the projection onto exactly one 
of the $v_i$ is assigned the value~$1$. But Kochen and Specker proved such an assignment to be
impossible whenever $\dim(H) \geq 3$, by providing an explicit set of orthogonal projections
(represented by vectors in their ranges) for which such a $\{0,1\}$-valued function does not exist.
Bell~\cite{Bell} provided an alternative proof using Gleason's 
Theorem. (Our own methods follow closely those of Kochen and Specker, especially considering
colorings of finite sets of vectors or idempotents. For this reason, we will refer to the
no-hidden-variables theorem as the \emph{Kochen-Specker Theorem}.)

By now there are many fine discussions of the role of the Bell-Kochen-Specker theorem in the logical 
foundations of quantum mechanics, so we have limited our discussion of this background to a 
brief explanation of the physical intuition behind the mathematical result. Aside from the original
papers of Bell and Kochen-Specker, we refer readers to the textbooks~\cite{Belinfante, Laloe} for
introductions to the theorem in the broader context of hidden variable theories, to the
article~\cite{BarrettKent} for a discussion of various claimed ``loopholes'' to the theorem, and to 
the detailed survey~\cite{StanfordEncyclopedia} for further discussion and references to the literature.

There is much recent interest in examining the Kochen-Specker Theorem from new perspectives.
One of the most notable such programs is the formulation of the theorem in the context of
topos theory~\cite{IshamButterfield}. 
There are also approaches to the general theory of contextuality through sheaf 
theory~\cite{AbramskyBrandenburger} and through graphs and hypergraphs~\cite{CWS, AFLS}.
There has been recent progress~\cite{UijlenWesterbaan} on the problem of determining
lower bounds for the size of a Kochen-Specker uncolorable set of three-dimensional vectors.
The theorem has also found recent application in the well-known ``Free Will Theorems'' of Conway 
and Kochen~\cite{ConwayKochen1, ConwayKochen2}.

Our present work seeks to push the study of the Kochen-Specker Theorem in a new direction by 
allowing the study of contextuality for vectors and matrices whose entries lie in more general coefficient 
rings than the real or complex numbers, and it is motivated by applications in the setting of
noncommutative algebraic geometry.
The original analysis of Kochen and Specker framed the discussion of hidden variables in algebraic
terms as an assignment of values to all observables on a quantum system whose restriction to 
any commesaurable set of observables forms a \emph{homomorphism}. Such an assignment of values
will be called a \emph{morphism of partial rings}, as discussed in more detail in Section~\ref{sec:partial}
below. 
From this perspective, one may view any noncommutative ring $R$ as a purely algebraic analogue of the 
observables of a quantum system, with its commutative subrings as ``commeasurable'' subsets of
observables, so that a morphism of partial rings from $R$ to a commutative ring can be viewed 
as a ``noncontextual hidden variable theory.''
	
In this paper we establish that contextuality---in the purely algebraic sense of inadmissibility of
such morphisms of partial rings---is a property
inherent to any matrix ring of the form $\M_n(R)$ for $n \geq 3$, independent of the choice of 
the ring of scalars $R$. We consider this problem from the intimately related
perspectives of Kochen-Specker colorings of idempotents and of morphisms of partial rings. 
Section~\ref{sec:partial} contains background and fundamental results on partial rings and partial
Boolean algebras in the sense of Kochen and Specker, showing the precise relationships between 
colorability of idempotents, morphisms of partial rings, and the spectrum $\pSpec(R)$ of 
\emph{prime partial ideals} of a partial ring $R$. Most of these relationships are expressed in 
the basic language of categories~\cite{MacLane}, in terms of various functors and natural 
transformations. 
The results in this section are elementary but seem not to have been carefully considered 
elsewhere; thus we hope that this will fill a gap in the literature. 

Then in Section~\ref{sec:coloring} we prove that algebraic analogues of the Kochen-Specker 
theorem do or do not hold in various (partial) rings of matrices. 
These results are summarized in Table~\ref{table:main results}. Given a ring $S$, we let 
$\Idpt(S)$ denote the set of idempotents of $S$, which carries the structure of a partial
Boolean algebra. A formal definition of a Kochen-Specker coloring is given in 
Definition~\ref{def:KS coloring}. For a commutative ring $R$, the set of matrices in 
$\M_3(R)$ that are symmetric (equal to their own transpose) is denoted $\M_3(R)_\sym$.
We remark that the partial rings $S$ in Table~\ref{table:main results} for which 
$\pSpec(S) = \varnothing$ admit no morphism of partial rings $S \to C$ for any (total) 
commutative ring $C$, yielding a direct analogue of the type of obstruction that Kochen 
and Specker originally sought. 

Our motivation for this study stems from the recent application of the Kochen-Specker theorem
to noncommutative geometry in~\cite{Reyes}. There it was shown that any contravariant functor
$F$ from rings to sets (or to topological spaces) whose restriction to commutative rings is the 
prime spectrum functor $\Spec$ must satisfy $F(\M_n(\C)) = \varnothing$ for $n \geq 3$. In
Section~\ref{sec:spectrum} we strengthen this result to conclude that such functors $F$
in fact satisfy $F(\M_n(R)) = \varnothing$ for every ring $R$ and integer $n \geq 3$.

\begin{table}
\centering
\caption{Idempotent colorings and partial spectra of partial rings of matrices} 
\label{table:main results}
\begingroup
\renewcommand{\tabcolsep}{7pt}
\renewcommand{\arraystretch}{1.3}
\begin{tabular}{|ccccc|}
\hline
partial ring $R$ & prime $p$ & $\Idpt(R)$ & $\pSpec(R)$ & result\\ \hline
\hline
$\M_3(\F_p)_\sym$ &$p = 2,3$ & colorable & nonempty & Theorem~\ref{thm:finite sym coloring}\\
$\M_3(\Z)_\sym$ && colorable & nonempty & Corollary~\ref{cor:integer sym coloring} \\
$\M_3(\F_p)_\sym$ & $p \geq 5$ & uncolorable & empty & Theorems~\ref{thm:Bub}, \ref{thm:F5 sym coloring} \\
$\M_3(\Z[1/30])_\sym$ && uncolorable & empty & Theorem~\ref{thm:Bub} \\
$\M_3(\Q)_\sym$ && uncolorable & empty & Theorem~\ref{thm:Bub} \\
$\M_3(\Z)$ && uncolorable & empty & Theorem~\ref{thm:counting argument} \\ \hline
\end{tabular}
\endgroup
\end{table}

Our results may be of particular interest in the study of models of quantum mechanics defined over
fields other than the real or complex numbers. 
Quantum physics over $p$-adic fields has been a subject of interest for some time; for versions of 
$p$-adic quantum theory in which the amplitudes of wavefunctions are $p$-adic (as surveyed,
for instance, in~\cite[Section~9]{DKKV:padic}), the ``matrix entries'' of observable operators will 
also have $p$-adic values. 
Quantum physics over finite fields has also become a topic of recent interest, including investigations
involving modal quantum theory~\cite{SchumacherWestmoreland}, quantum 
computing~\cite{HOST:computing, HOST:theories}, and even quantum field 
theory~\cite{Schnetz, BejleriMarcolli}.
Our results show that some form of contextuality persists in either these settings, where observables
on finite-dimensional systems form matrices with entries over exotic commutative rings.

We wish to thank Alexandru Chirvasitu for several discussions and suggestions throughout the
writing of this paper, as well as Benno van den Berg and Chris Heunen for useful comments on a draft 
of the paper. We also thank the referee for several helpful suggestions.

\section{Partial algebraic structures and Kochen-Specker colorings}
\label{sec:partial}

We will largely follow the basic definitions of~\cite{KochenSpecker}, as adapted to the ring-theoretic
setting in~\cite{Reyes}. We follow the convention that every ring is associative and contains a multiplicative 
identity, and every ring homomorphisms preserves multiplicative identity elements.

The physical intuition for the terminology below is that a partial $k$-algebra $A$ consists of 
``observables of a quantum system,'' with $x,y \in A$ commeasurable if and only if the values of 
$x$ and $y$ can be simultaneously measured with arbitrarily high precision. 

\begin{definitions}
Let $k$ be a commutative ring.  A \emph{partial algebra} $A$ over $k$ is a set equipped with:
\begin{itemize}
\item a reflexive and symmetric binary relation $\comm \subseteq A \times A$, called \emph{commeasurability}, 
\item  ``partial'' addition and multiplication operations $+$ and $\cdot$ that are functions $\comm \to A$,
\item a scalar multiplication operation $k \times A \to A$, and
\item zero and unity elements $0,1 \in A$,
\end{itemize}
satisfying the following axioms:
\begin{enumerate}
\item $0$ and $1$ are commeasurable with all elements of $A$,
\item the partial binary operations preserve commeasurability,
\item for every pairwise commeasurable subset $S \subseteq A$, there exists a pairwise commeasurable
subset $T \subseteq A$ containing $S$ such that the restriction of the partial operations of $A$ make
$T$ into a (unital, associative) commutative $k$-algebra.
\end{enumerate}
If $A$ and $B$ are partial $k$-algebras, then a function $f \colon A \to B$ is called a \emph{morphism of partial
$k$-algebras} if $f(0) = 0$, $f(1) = 1$, $f(\lambda x) = \lambda f(x)$ for all $\lambda \in k$ and $x \in A$, 
and whenever $x,y \in A$ are such that $x \comm y$, it follows
that
\begin{itemize}
\item $f(x) \comm f(y)$ in $B$,
\item $f(x+y) = f(x) + f(y)$, and
\item $f(xy) = f(x)f(y)$.
\end{itemize}
A partial algebra over the ring $k = \Z$ is called a \emph{partial ring.}
A morphism of partial $\Z$-algebras is also called a \emph{morphism of partial rings}. The category of partial
rings with morphisms of partial rings is denoted $\pRing$.
\end{definitions}

We will use the terms \emph{total $k$-algebra} and \emph{total ring} to distinguish the usual notions
of $k$-algebra and ring from their partial counterparts. Every total $k$-algebra $R$ carries a natural
partial $k$-algebra structure, with the commeasurability relation given by $x \odot y$ if and only if $xy = yx$,
and with the restricted operations from the $k$-algebra structure of $R$. In this way we obtain a functor
 $\Ring \to \pRing$, which allows us to view the category of rings as a subcategory of that of partial
rings. Even though this partial algebra structure on a ring is not necessarily unique, we always 
view rings and algebras as partial rings and partial algebras with this canonical structure, trusting that
this will not lead to serious confusion.


We define a \emph{partial $k$-subalgebra} of a partial $k$-algebra $R$ to be a subset 
$S \subseteq R$ that is a partial $k$-algebra under the restricted commeasurability relation 
and partial operations inherited from $R$, or equivalently, such that such that $S$ is closed
under $k$-scalar multiplication as well as sums and products of commeasurable elements.
If all elements of $S$ are pairwise commeasurable, we say that $S$ is a \emph{commeasurable}
subalgebra; it is clear that $S$ becomes a total subalgebra under the induced operations.
In case $k = \Z$, we use the term \emph{partial subring} for a partial $\Z$-subalgebra.
(We note without further discussion that there is a subtlety in this terminology: it is possible 
to have a partial ring $S$, whose underlying set is a subset of a second partial ring $R$, such that the 
commeasurability relation on $S$ is strictly coarser than that of $R$. Then the inclusion 
function $i \colon S \to R$ is a morphism of partial rings, but $S$ is not a partial subring 
of $R$ in the sense above.) 

In the algebraic formulation of quantum mechanics, one views commutative algebras as corresponding
to classical systems. Then commeasurable subalgebras of a partial algebra can be seen as
``classical contexts'' in which a measurement may be performed on the corresponding quantum 
system~\cite[p.~48]{HLSW:Gelfand}. 
For more detail on this perspective, we refer readers to the recent survey~\cite{Heunen:classicalfaces}.

The \emph{(Zariski) spectrum} of a commutative ring $R$, denoted $\Spec(R)$,  is the set of prime 
ideals of $R$. As discussed in Section~\ref{sec:spectrum} below, the spectrum is a spatial 
invariant of a commutative ring; but for the time being, we will view it merely as a set. 
We recall the extension of the spectrum to an invariant of partial rings as in~\cite{Reyes}.

\begin{definition}
A subset $\p$ of a partial ring $R$ is a \emph{prime partial ideal} if, for every commeasurable
subring $C \subseteq R$, the intersection $C \cap \p$ is a prime ideal of $C$;  this is equivalent
to the conditions that $1 \notin \p$ and if $a,b \in R$ are commeasurable with $ab \in \p$, then
either $a \in \p$ or $b \in P$. The set of all prime partial ideals of $R$ is denoted $\pSpec(R)$.
\end{definition}

Given a morphism of partial rings $f \colon R \to S$ and $\p \in \pSpec(S)$, one may verify that
$f^{-1}(\p) \in \pSpec(R)$; see~\cite[Lemma~2.10]{Reyes}. Using this assignment on
morphisms, way we consider $\pSpec \colon \pRing\op \to \Set$ as a functor to the category of
sets. In particular, if we consider the category of rings as a subcategory of $\pRing$ via the 
functor $\Ring \to \pRing$ mentioned above, the partial spectrum restricts to the usual prime spectrum 
functor $\Spec \colon \Ringop \to \Set$. 

The remark and lemma below illustrate that the partial spectrum of a partial ring may be viewed as 
an invariant to detect obstructions of morphisms in $\pRing$ to total commutative rings, reminiscent
of Kochen and Specker's treatment of hidden variable theories.

\begin{remark}\label{rem:empty and nonempty}
It is well-known that every nonzero commutative ring has a (maximal, hence) prime ideal and consequently
has a nonempty spectrum; for instance, see~\cite[Theorem~1.1]{Matsumura}. 
On the other hand, the results of~\cite{Reyes} (to be generalized in Corollary~\ref{cor:any ring}) 
show that sufficiently large matrix rings have empty partial spectrum.
A useful technique to show that a particular partial ring $R$ has empty partial spectrum is to produce a
ring morphism of partial rings $R_0 \to R$ such that $\pSpec(R_0) = \varnothing$. For then functoriality
of the partial spectrum yields a function $\pSpec(R) \to \pSpec(R_0) = \varnothing$, and because the
only set with a function to the empty set is the empty set itself, we deduce $\pSpec(R) = \varnothing$.
\end{remark}

\begin{lemma}\label{lem:empty spec obstructions}
Given a partial ring $R$, if $\pSpec(R) = \varnothing$ then:
\begin{enumerate}
\item There is no morphism of partial rings $R \to C$ for any nonzero (total) commutative ring $C$;
\item The colimit in $\cRing$ of the diagram of commutative subrings of $R$ is zero.
\end{enumerate}
\end{lemma}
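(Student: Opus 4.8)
The plan is to deduce both statements as consequences of the functoriality of $\pSpec$ together with the fact, recalled in Remark~\ref{rem:empty and nonempty}, that every nonzero commutative ring has nonempty spectrum.

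For part~(1), suppose toward a contradiction that there is a morphism of partial rings $f \colon R \to C$ with $C$ a nonzero commutative ring. Applying the functor $\pSpec \colon \pRing\op \to \Set$ yields a function $\pSpec(f) \colon \pSpec(C) \to \pSpec(R)$. Since $C$ is commutative, $\pSpec(C) = \Spec(C)$, which is nonempty because $C \neq 0$; but $\pSpec(R) = \varnothing$ by hypothesis, and there is no function from a nonempty set to the empty set. This contradiction proves~(1). (Concretely, one does not even need functoriality in full: given any $\p \in \Spec(C)$, the preimage $f^{-1}(\p)$ lies in $\pSpec(R)$ by the property of preimages of prime partial ideals cited from~\cite[Lemma~2.10]{Reyes}, immediately contradicting emptiness.)

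For part~(2), let $D$ be the diagram in $\cRing$ whose objects are the commeasurable (equivalently, commutative) subrings $C \subseteq R$ and whose morphisms are the inclusions, and let $L = \varinjlim D$ be its colimit in $\cRing$. I claim $L = 0$. Suppose not; then $L$ is a nonzero commutative ring, so by Remark~\ref{rem:empty and nonempty} it has a prime ideal, and hence there is a ring homomorphism $g \colon L \to \kappa$ to a field $\kappa$ (take the residue field at that prime). Composing the colimit cocone maps $C \hookrightarrow R$ with... more carefully: the colimit $L$ comes equipped with a cocone, i.e.\ compatible ring maps $\iota_C \colon C \to L$ for each commutative subring $C$. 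I want to assemble the maps $g \circ \iota_C \colon C \to \kappa$ into a single morphism of partial rings $R \to \kappa$, which would contradict part~(1). The key point is that every element $x \in R$ lies in \emph{some} commeasurable subring (e.g.\ by axiom~(3) applied to $S = \{x\}$, or simply $\Z[x]$), so we may \emph{attempt} to define $h(x) = g(\iota_{C}(x))$ for any commutative subring $C$ containing $x$.

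The main obstacle is precisely well-definedness and partial-multiplicativity of this $h$: I must check that $g(\iota_C(x))$ is independent of the chosen $C \ni x$, and that whenever $x \comm y$ in $R$, the values $h(x), h(y)$ are computed compatibly so that $h(x+y) = h(x)+h(y)$ and $h(xy)=h(x)h(y)$. Independence follows because any two commutative subrings $C, C'$ containing $x$ both sit inside the diagram $D$ and the cocone maps $\iota_C, \iota_{C'}$ agree on the common element $x$ up to the diagram's connectedness at $x$ — but one must be slightly careful, since $C$ and $C'$ need not be comparable, and their union need not be a subring. The clean fix is to observe that $C = \Z[x]$ is the canonical choice, that $\Z[x] \subseteq \Z[x,y]$ (a commutative subring, since $x \comm y$) whenever $x \comm y$, and that the cocone compatibility $\iota_{\Z[x,y]}|_{\Z[x]} = \iota_{\Z[x]}$, $\iota_{\Z[x,y]}|_{\Z[y]} = \iota_{\Z[y]}$ forces $h(x+y)$, computed inside $\Z[x,y]$, to equal $h(x)+h(y)$, and likewise for products; scalar multiplication and the preservation of $0,1$ are immediate. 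Thus $h \colon R \to \kappa$ is a morphism of partial rings to a nonzero commutative ring, contradicting~(1), and therefore $L = 0$. I expect verifying that $\Z[x,y]$ is genuinely a commutative subring of $R$ (so that it appears in $D$) to be the one spot requiring a careful appeal to the partial-algebra axioms, but this is exactly what axiom~(3) guarantees.
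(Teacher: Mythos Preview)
Your proof is correct and follows essentially the same approach as the paper. The one cosmetic difference is that the paper constructs the morphism of partial rings $R \to L$ directly and applies part~(1) to $L$ itself, whereas you compose further with a map $L \to \kappa$ to a residue field before invoking~(1); this detour is harmless but unnecessary, and your careful treatment of well-definedness via the subrings $\Z[x] \subseteq \Z[x,y]$ is in fact more explicit than the paper's one-line assertion that $f_C(x)$ is independent of $C$.
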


\begin{proof}
Let $f \colon R \to C$ be a morphism as in~(1) where $C \neq 0$. There exists a prime ideal 
$\p \in \Spec(C)$ as in Remark~\ref{rem:empty and nonempty}. Thus $f^{-1}(\p) \in \pSpec(R)$,
contradicting the assumption that $\pSpec(R) = \varnothing$. 

For~(2), let $L = \varinjlim C$ be the colimit in $\cRing$ of all commeasurable subrings 
$C \subseteq R$, equipped with canonical morphisms $f_C \colon C \to L$. 
Each $x \in R$ is contained in a commeasurable subring $C \subseteq R$, and the construction of
the colimit is such that the value $f(x) = f_C(x) \in L$ is independent of the choice of $C$. In this way
we obtain a well-defined function $f \colon R \to L$ that is readily verified to be a morphism of partial
rings (since $f$ restricts on each commeasurable subring to a ring homomorphism). It now follows
from~(1) that if $\pSpec(R) = \varnothing$, then $L = 0$.
\end{proof}

Thanks to the above, an important special case for us is the integer matrix ring $\M_n(\Z)$. This ring 
plays a \emph{universal} role in no-go theorems, due to the fact that each ring $R$ admits a unique 
ring homomorphism $\Z \to R$, which induces a ring homomorphism $\M_n(\Z) \to \M_n(R)$ by
when applied to each matrix entry. 
Thus a Kochen-Specker type of theorem proved for $\M_n(\Z)$ typically extends to $\M_n(R)$ for 
\emph{any} ring $R$. 

Kochen and Specker also considered ``partial logical structures'' in the following way. (We follow
the terse, but efficient, alternative definition given in~\cite{BergHeunen:colimit}.) 
Recall that a \emph{Boolean algebra} $(B,\vee,\wedge,\neg,0,1)$ is a structure such that
$(B,\vee,\wedge,0,1)$ is a distributive lattice with bottom element~$0$ and top element~$1$,
and a unary orthocomplement operation $\neg \colon B \to B$ (i.e., $\neg$ is an order-reversing
involution that maps each element to a lattice-theoretic complement).
The category $\Bool$ of Boolean algebras has as its morphisms the lattice homomorphisms
that preserve top and bottom elements along with the orthocomplement operation.
We refer readers to~\cite{Halmos, GivantHalmos} for the basic theory of Boolean algebras.

\begin{definition}
A \emph{partial Boolean algebra} is a set $B$ equipped with:
\begin{itemize}
\item a reflexive and symmetric commeasurability relation $\odot \subseteq B \times B$,
\item a unary operation of negation $\neg \colon B \to B$,
\item partially defined binary operations of meet and join $\wedge, \vee \colon \odot \to B$,
\item elements $0,1 \in B$,
\end{itemize}
such that every set $S \subseteq B$ of pairwise commeasurable elements is contained in a
pairwise commeasurable set $T \subseteq B$ containing $0$ and $1$ for which the restriction
of the operations makes $T$ into a Boolean algebra.
\end{definition}

As in the case of partial rings, we say that a subset of a partial Boolean algebra $B$ is a 
\emph{partial Boolean subalgebra} if it forms a partial Boolean algebra under the restricted
commeasurability relation and partial operations from $B$. We also use the terms
\emph{total Boolean algebra} and \emph{commeasurable Boolean subalgebra} to refer to
the obvious Boolean analogues of the corresponding ring-theoretic notions.

\begin{remark}
There is a classical correspondence~\cite[\S 2]{Halmos} between Boolean algebras and 
\emph{Boolean rings,} which are rings in which every element is idempotent. 
Define a \emph{partial Boolean ring} to be a partial ring in which every element is idempotent,
or equivalently, in which every commeasurable subring is Boolean; these form a full subcategory
of $\pRing$ which we denote by $\pBRing$.
The manner of defining Boolean algebra operations from a Boolean ring and of defining
Boolean ring operations from a Boolean algebra both extend directly by restricting to
commeasurable subalgebras or subrings. This yields an equivalence (even isomorphism!)\ of
categories $\pBool \cong \pBRing$.
\end{remark}

We say that two elements $p$ and $q$ of a partial Boolean algebra $B$ are \emph{orthogonal} if
they are commeasurable and $p \wedge q = 0$. Similarly, we define a partial ordering on $B$ by
declaring $p \leq q$ if $p$ and $q$ are commeasurable and $p \vee q = q$.

\begin{definition}
\label{def:KS coloring}
Let $B$ be a partial Boolean algebra with a subset $\setS \subseteq B$. A black-and-white coloring of 
$\setS$ is called a \emph{Kochen-Specker coloring} if, for every list of pairwise orthogonal elements 
$p_1,\dots,p_n \in B$, 
\begin{enumerate}
\item there is at most one index $i$ such that $p_i$ is colored white, and
\item if furthermore $p_1 \vee \cdots \vee p_n = 1$, then there is exactly one index $i$ such that
$p_i$ is colored white.
\end{enumerate}
\end{definition}

While the definition above is suited to an arbitrary subset of a partial Boolean algebra $B$, the algebraic
theory of such colorings is best behaved in the case where one takes $\setS = B$ to be the entire
algebra, as illustrated in the results outlined in the remainder of this section. 
On the other hand, to prove that the (possibly infinite) partial Boolean algebra $B$ has no 
Kochen-Specker colorings, it clearly suffices to exhibit a smaller (finite) subset $\setS$ of $B$ that has
no such coloring.

Notice immediately that for any Kochen-Specker coloring of a nontrivial ($0 \neq 1$) partial Boolean 
algebra, $0$ is black and $1$ is white by applying the condition above to the orthogonal decomposition 
$1 = 1 \vee 0 \vee 0$. On the other hand, the trivial Boolean algebra has no Kochen-Specker
coloring, as evidenced by condition~(2) applied to $0 = 0 \vee 1 = 1$.

\begin{remark}
\label{rem:vector coloring}
In the physics literature, Kochen-Specker colorings are usually considered on sets of vectors in
real or complex Hilbert spaces $H$. Such colorings can be considered as colorings subsets of the
orthomodular lattice of orthogonal projections on $H$ (viewed as a partial Boolean algebra as
in~\cite[Lemma~3.3]{HeunenReyes:active}, for instance) by identifying a vector $v$ with the rank-1 
orthogonal projection of $H$ onto the line spanned by $v$.
\end{remark} 

Next we aim to show that Kochen-Specker colorings on a piecewise Boolean $B$ algebra are intimately
related to the appropriate notion of prime ideals and ultrafilters of $B$. The suitable generalizations
of these objects are as follows.

\begin{definition}
\label{def:Boolean prime partial}
A subset $I$ of a partial Boolean algebra $B$ is called a \emph{partial ideal} if it satisfies the following 
conditions for commeasurable elements $p,q \in B$:
\begin{enumerate}
\item[(i)] $0 \in I$;
\item[(ii)] If $q \in I$ and $p \leq q$, then $p \in I$;
\item[(iii)] If $p, q \in I$, then $p \vee q \in I$.
\end{enumerate}
A partial ideal $I$ of $B$ is called a \emph{prime} partial ideal if it additionally satisfies the following 
condition for all commeasurable $p,q \in B$:
\begin{enumerate}
\item[(iv)] $1 \notin I$, and if $p \wedge q \in I$, then $p \in B$ or $q \in B$ (or equivalently,
for all $p \in B$, either $p$ or $\neg p$ is in $B$ but not both).
\end{enumerate}
The set of prime partial ideals of $B$ will be denoted $\pSpec(B)$.
\end{definition}

\begin{definition}
\label{def:partial ultrafilter}
A subset $F$ of a partial Boolean algebra $B$ is called a \emph{partial filter} if it satisfies the following 
conditions for commeasurable elements $p,q \in F$:
\begin{enumerate}
\item[(i)] $1 \in F$;
\item[(ii)] $p \in F$ and $p \leq q$ imply $q \in F$;
\item[(iii)] $p, q \in F$ implies $p \wedge q \in F$.
\end{enumerate}
A partial filter $F$ is called a \emph{partial ultrafilter} if it additionally satisfies the following condition 
for all commeasurable $p ,q \in B$:
\begin{enumerate}
\item[(iv)] $0 \notin B$ and if $p \vee q \in F$, then $p \in B$ or $q \in B$ (or equivalently, for all 
$p \in B$, either $p$ or $\neg p$ is in $F$ but not both).
\end{enumerate}
\end{definition}

The definitions above coincide with the usual definitions of  prime ideals and ultrafilters in case the 
partial Boolean algebra is in fact a total Boolean algebra. Furthermore, it is clear that 
a subset $X$ of a partial Boolean algebra $B$ is a partial ideal (respectively, prime partial ideal, partial 
filter, or partial ultrafilter) if and only if $X \cap C$ is an ideal (respectively, prime ideal, filter, or 
ultrafilter) of $C$ for  every commeasurable Boolean subalgebra $C$ of $B$. 
As in the classical case of total Boolean algebras, one may readily verify that a subset
$I$ of a partial Boolean algebra $B$ is a partial ideal if and only if $\neg I = \{\neg x \mid x \in I\}$ 
is a partial filter of $B$, and that $I$ is a prime partial ideal if and only if $\neg I$ is an ultrafilter,
if and only if $B \setminus I$ is an ultrafilter (equal to $\neg I$).

We also note that if $B$ is a partial Boolean algebra and $I$ is a subset of $B$, then $I$ is a prime
partial ideal of $B$ considered as a partial Boolean algebra if and only if $I$ is a prime partial ideal
of $B$ when considered as a partial Boolean ring. (This is perhaps most easily verified considering the
intersection $I \cap C$ for all commeasurable Boolean subalgebras $C \subseteq B$, and recalling 
that the two notions coincide in the classical case of total Boolean algebras and rings.)
Thus there is no danger in our use of the notation
$\pSpec(B)$ for the spectrum of prime partial ideals in both senses, as these two spectra in fact 
coincide. This assignment forms a functor $\pSpec \colon \pBool\op \to \Set$, which acts on a
morphism $f \colon A \to B$ by sending $\p \in \pSpec(B)$ to $\Spec(f)(\p) = f^{-1}(\p) \in
\pSpec(A)$.

\begin{proposition}
\label{prop:ultrafilter}
Let $B$ be a partial Boolean algebra, and fix a black-and-white coloring of $B$. The coloring is a 
Kochen-Specker coloring if and only if 
the set of white elements forms a partial ultrafilter of $B$, if and only if the set of black elements
forms a prime partial ideal of $B$.
\end{proposition}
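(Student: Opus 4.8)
The plan is to reduce the entire statement to the well-understood classical case of total Boolean algebras by systematically intersecting with commeasurable Boolean subalgebras, just as the excerpt suggests is the guiding principle for all such ``partial'' results. First I would record the standard classical fact: for a total Boolean algebra $C$ and a subset $W \subseteq C$, the following are equivalent --- (a) $W$ is an ultrafilter; (b) $C \setminus W$ is a prime ideal; and (c) for every finite list of pairwise orthogonal $p_1,\dots,p_n \in C$ there is at most one $i$ with $p_i \in W$, and if moreover $p_1 \vee \cdots \vee p_n = 1$ then exactly one $p_i$ lies in $W$. The equivalence of (a) and (b) is immediate from the observation (already noted in the excerpt) that $C \setminus I$ is an ultrafilter iff $I$ is a prime ideal. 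The equivalence with (c) is the only genuinely Boolean-algebraic computation: given an ultrafilter $W$, pairwise orthogonality forces $p_i \wedge p_j = 0 \notin W$, so $W$ cannot contain two of the $p_i$; and if the join is $1 \in W$, then since $W$ is prime (in the sense that $x \vee y \in W$ implies $x \in W$ or $y \in W$) one of the $p_i$ must lie in $W$. Conversely, from (c) one extracts the ideal/filter axioms by applying the condition to the two-element orthogonal families $\{p, \neg p\}$ and to $\{p \wedge \neg q,\ q\}$ and similar small configurations, together with the complementation trick ``$p \in W$ or $\neg p \in W$ but not both,'' which is exactly the $n=2$ case with $p_1 = p$, $p_2 = \neg p$.

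Next I would invoke the characterization already stated in the excerpt: a subset $X \subseteq B$ is a partial ultrafilter (respectively prime partial ideal) if and only if $X \cap C$ is an ultrafilter (respectively prime ideal) of $C$ for every commeasurable Boolean subalgebra $C \subseteq B$. So it suffices to show that a coloring of $B$ is a Kochen-Specker coloring if and only if, for every commeasurable Boolean subalgebra $C$, the set $W \cap C$ of white elements of $C$ is an ultrafilter of $C$ (equivalently $(B \setminus W) \cap C = C \setminus (W \cap C)$ is a prime ideal of $C$). Given the classical equivalence of (c) with (a)/(b) above applied to each such $C$, this reduces to the claim that the coloring satisfies Definition~\ref{def:KS coloring} globally if and only if its restriction to each commeasurable Boolean subalgebra $C$ satisfies the corresponding condition (c) relative to $C$.

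The forward direction of that last claim is trivial: any pairwise orthogonal list inside a commeasurable Boolean subalgebra $C$ is in particular a pairwise orthogonal list in $B$, and a join computed in $C$ equals the join in $B$, so conditions (1) and (2) of Definition~\ref{def:KS coloring} restrict. For the converse, suppose $p_1,\dots,p_n \in B$ are pairwise orthogonal; the point is that by axiom (3) in the definition of a partial Boolean algebra, the pairwise commeasurable set $\{p_1,\dots,p_n, 0, 1\}$ is contained in some commeasurable subset $T$ on which the restricted operations form a Boolean algebra --- i.e., a commeasurable Boolean subalgebra $C := T$ containing all the $p_i$. Orthogonality and the value of the join $p_1 \vee \cdots \vee p_n$ are the same whether computed in $B$ or in $C$, so condition (c) for $C$ gives exactly conditions (1) and (2) for this list. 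This completes the reduction.

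I expect the main obstacle to be purely bookkeeping rather than conceptual: one must be careful that the partial operations ``agree'' across different commeasurable Boolean subalgebras --- in particular that the value of a finite join of pairwise orthogonal elements does not depend on which commeasurable Boolean subalgebra one uses to compute it. This follows because any two such subalgebras containing a given pairwise commeasurable set are both contained in a common one (again by axiom (3), applied to the union), on which the Boolean operations are unambiguous; so the joins and the orthogonality relations are intrinsic to the pairwise commeasurable set. Once this consistency is in hand, the proof is the concatenation of the classical equivalence (c)$\Leftrightarrow$(a)$\Leftrightarrow$(b) with the ``check on each commeasurable subalgebra'' characterizations of partial ultrafilters and prime partial ideals, and there is nothing further to prove.
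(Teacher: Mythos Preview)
Your argument is correct and takes a genuinely different route from the paper's. The paper proves the proposition by direct manipulation: starting from a Kochen-Specker coloring it verifies the four ultrafilter axioms one by one using well-chosen orthogonal decompositions such as $1 = p \vee (q \wedge \neg p) \vee \neg q$ and $1 = (p \wedge q) \vee (p \wedge \neg q) \vee (q \wedge \neg p) \vee \neg(p \vee q)$, and for the converse it runs an induction on the length $n$ of the orthogonal decomposition $1 = p_1 \vee \cdots \vee p_n$. Your approach is more structural: you isolate the classical equivalence (ultrafilter $\Leftrightarrow$ complement is prime ideal $\Leftrightarrow$ condition (c)) for total Boolean algebras, then transport it to the partial setting via the ``check on each commeasurable subalgebra'' characterizations of partial ultrafilters and prime partial ideals already recorded in the text. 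The paper's approach is entirely self-contained and exhibits the relevant orthogonal decompositions explicitly; yours is cleaner and modular, making transparent that the proposition is nothing more than the classical fact localized to commeasurable subalgebras, at the cost of deferring the actual Boolean-algebra computation to the classical lemma (which you sketch but do not write out in full).

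One small correction to your bookkeeping paragraph: the claim that two commeasurable Boolean subalgebras $C_1, C_2$ containing a given pairwise commeasurable set are ``both contained in a common one (by axiom (3) applied to the union)'' is not justified, since $C_1 \cup C_2$ need not be pairwise commeasurable. Fortunately this claim is unnecessary. The partial operations $\vee, \wedge \colon \odot \to B$ are defined as \emph{functions} on the commeasurability relation, so the value of $p_i \vee p_j$ is determined intrinsically, independent of any ambient subalgebra; axiom~(3) then guarantees that iterated joins of a pairwise commeasurable family are well-defined and associative. With that adjustment your reduction goes through without change.
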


\begin{proof}
First suppose that the coloring is Kochen-Specker; we verify the four conditions of 
Definition~\ref{def:partial ultrafilter} for the set of white elements.  Condition~(i) follows by 
applying the Kochen-Specker condition to the orthogonal decomposition $1 = 1 \vee 0 \vee 0$. 
Condition~(iv) follows easily by applying the Kochen-Specker condition to the orthogonal decomposition 
$1 = p \vee (\neg p)$. For condition~(ii), suppose that $p \leq q$ in $B$ with $p$ white. In the
orthogonal decomposition $1 = p \vee (q \wedge \neg p) \vee \neg q$, because $p$ is white the
third term $\neg q$ must be black, so that it follows from~(iv) that $q$ is white.
For~(iii), suppose that $p,q \in B$ are commeasurable and white. In the decomposition
\[
1 = (p \wedge q) \vee (p \wedge \neg q) \vee (q \wedge \neg p) \vee \neg(p \vee q),
\]
exactly one of the four joined terms on the right is white. If any of the second, third, or fourth
terms is white, then its complement is black by~(iv); as each of these elements $x$ has either
$p \leq x$ or $q \leq x$, we would deduce from condition~(ii) the contradiction that either 
$p$ or $q$ is black. Thus we must have $p \wedge q$ white as desired. 

Conversely, suppose that the set of white elements of the coloring satisfies conditions 
(i)--(iv) of Definition~\ref{def:partial ultrafilter}. To verify the Kochen-Specker condition, suppose 
\begin{equation}\label{eq:decomp}
1 = p_1 \vee \cdots \vee p_n
\end{equation}
for pairwise orthogonal elements $p_i \in B$;
we prove inductively that exactly one of the $p_i$ is white. The trivial case $n = 1$ follows from~(i). 
In case $n = 2$, the fact that exactly one of $p_1$ or $p_2 = \neg p_1$ is white follows from~(iv). 
Proceeding inductively, suppose the Kochen-Specker condition holds for all orthogonal decompositions 
of the unit into $n-1 \geq 2$ elements. We may rewrite~\eqref{eq:decomp} as 
\[
1 = (p_1 \vee p_2) \vee p_3 \vee \cdots \vee p_n
\]
and deduce by the inductive hypothesis that exactly one of $q = (p_1 \vee p_2), p_3, \dots, p_n$ is
white. If one of $p_3, \dots, p_n$ is white then $q$ is black. Applying~(ii) to 
$p_1, p_2 \leq q$ we obtain that $p_1$ and $p_2$ are both black, as desired. Now in case 
$p_3, \dots, p_n$ are black and $q$ is white, we only need to verify that exactly one of $p_1$ or $p_2$
is white. In the orthogonal decomposition $1 = q \vee \neg q = p_1 \vee p_2 \vee (\neg q)$, 
condition~(iv) implies that $\neg q$ is black. If $p_1$ and $p_2$ are both black, then 
$\neg (p_1 \vee p_2) = (\neg p_1) \wedge (\neg p_2)$ is a join of white elements and therefore
is white by~(iii), implying the contradiction that $p_1 \vee p_2 = q$ is black. Thus at least one of
$p_1$ or $p_2$ is white. Because $p_1 \wedge p_2 = 0 = \neg 1$ is black, condition~(iii) now shows
that only one of $p_1$ or $p_2$ can be white, as desired.

The set of white elements of the coloring is an ultrafilter if and only if its complement, the set of black 
elements, is a prime partial ideal. This completes the proof.
\end{proof}

Let $\KS(B)$ denote the set of Kochen-Specker colorings of a partial Boolean algebra $B$.
Given a morphism $f \colon B_1 \to B_2$ in $\pBool$ and a Kochen-Specker coloring of $B_2$, one
may readily verify using Proposition~\ref{prop:ultrafilter} that the coloring of $B_1$ given by declaring
$b \in B_1$ white if and only if $f(b) \in B_2$ is white yields a Kochen-Specker coloring of $B_1$.
Thus we obtain a functor $\KS \colon \pBool\op \to \Set$.

In the following, we let $\two = \{0,1\}$ denote the two-element Boolean algebra, which is the initial
object of both the category of Boolean algebras and $\pBool$.

\begin{theorem}
Let $B$ be a partial Boolean algebra.  Then the following three sets
are in bijection:
\begin{enumerate}
\item The set $\pBool(B,\two)$ of morphisms of partial Boolean algebras $B \to \two$;
\item The set $\KS(B)$ of Kochen-Specker colorings of $B$;
\item The set $\pSpec(B)$ of prime partial ideals of $B$.
\end{enumerate}
These bijections are natural in $B$ and thus form natural isomorphisms 
$\pBool(-,\two) \cong \KS \cong \pSpec$ as functors $\pBool\op \to \Set$.
\end{theorem}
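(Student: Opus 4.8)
The plan is to establish the three bijections one by one and then check naturality. Between items~(2) and~(3), the bijection is essentially Proposition~\ref{prop:ultrafilter}: a Kochen-Specker coloring of $B$ is the same data as its set of black elements, which by that proposition is precisely a prime partial ideal (its complement being the corresponding partial ultrafilter of white elements). So the assignment sending a coloring to the set of its black elements is a well-defined bijection $\KS(B) \to \pSpec(B)$, with inverse sending $\p$ to the coloring that colors $b$ black iff $b \in \p$. The only thing to remark is that distinct colorings have distinct black sets, which is immediate since a black-and-white coloring \emph{is} the partition into black and white parts.

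Next I would connect~(1) with~(2) (or equivalently with~(3)). Given a morphism $f \colon B \to \two$ of partial Boolean algebras, define a coloring of $B$ by declaring $b$ white iff $f(b) = 1$. I claim this is a Kochen-Specker coloring. The quickest route is to verify that the white set $f^{-1}(1)$ is a partial ultrafilter and invoke Proposition~\ref{prop:ultrafilter}: but in fact it is cleanest to check that $f^{-1}(1) \cap C$ is an ultrafilter of $C$ for each commeasurable Boolean subalgebra $C \subseteq B$, which holds because $f|_C \colon C \to \two$ is an honest Boolean algebra homomorphism and the preimage of $1$ under such a map is a (classical) ultrafilter of $C$; by the characterization recorded just before Proposition~\ref{prop:ultrafilter}, this makes $f^{-1}(1)$ a partial ultrafilter of $B$. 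Conversely, given a Kochen-Specker coloring, hence (by Proposition~\ref{prop:ultrafilter}) a partial ultrafilter $F$ of white elements, define $f \colon B \to \two$ by $f(b) = 1$ iff $b \in F$. To see $f$ is a morphism of partial Boolean algebras, it suffices to check that its restriction to each commeasurable Boolean subalgebra $C$ is a Boolean homomorphism $C \to \two$; but $F \cap C$ is a classical ultrafilter of $C$, and the indicator function of an ultrafilter is exactly a Boolean homomorphism to $\two$. These two constructions are visibly mutually inverse, and both preserve the underlying ``white set,'' so composing with the bijection of the previous paragraph gives all three bijections compatibly.

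Finally I would verify naturality, i.e.\ that for a morphism $g \colon B_1 \to B_2$ in $\pBool$ the evident squares commute. For $\pSpec$ and $\KS$ the action on morphisms was already defined in the excerpt (pullback of prime partial ideals, respectively pullback of colorings along $g$: a point of $B_1$ is white iff its image in $B_2$ is white), and these visibly correspond to each other under ``take the black set'' since $g^{-1}$ commutes with complementation. For $\pBool(-,\two)$ the action is precomposition $f \mapsto f \circ g$, and the identification with colorings sends $f \circ g$ to the coloring whose white set is $(f\circ g)^{-1}(1) = g^{-1}(f^{-1}(1))$ --- exactly the pulled-back coloring. Hence all three functors $\pBool\op \to \Set$ are naturally isomorphic. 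I do not expect any serious obstacle here; the one point requiring a little care is the ``only if'' direction of the correspondence between morphisms to $\two$ and partial ultrafilters, namely checking that a function $B \to \two$ whose restriction to every commeasurable Boolean subalgebra is a homomorphism is genuinely a morphism of partial Boolean algebras --- but this is immediate from the definition of morphism of partial Boolean algebras, since all the required identities ($f$ preserves $0,1$, and preserves $\neg,\wedge,\vee$ of commeasurable pairs, and preserves commeasurability since everything in $\two$ is commeasurable) are witnessed inside commeasurable subalgebras.
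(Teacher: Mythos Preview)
Your proposal is correct and follows essentially the same approach as the paper: both rely on Proposition~\ref{prop:ultrafilter} to identify colorings with prime partial ideals (or partial ultrafilters), and both reduce the check that the function $B \to \two$ is a partial Boolean morphism to the classical fact that the characteristic function of an ultrafilter (equivalently, the complement of a prime ideal) on a total Boolean algebra is a homomorphism to $\two$. The only organizational difference is that the paper arranges the three maps into a single cycle $\pBool(B,\two) \to \pSpec(B) \to \KS(B) \to \pBool(B,\two)$ and verifies that the composite around the cycle is the identity, whereas you set up two separate bijections with explicit inverses; the content is the same.
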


\begin{proof}
We will define functions
\begin{equation}\label{eq:bijections}
\xymatrix{
& \pSpec(B) \ar[dr] & \\
\pBool(B,\two) \ar[ur] & & \KS(B) \ar[ll]
}
\end{equation}
as follows. Given $\phi \in \pBool(B,\two)$, set $\p_\phi = \phi^{-1}(0) \subseteq B$. Because
$\phi$ can equivalently be viewed as a morphism of partial Boolean rings $B \to \two$, we find 
that $\p_\phi = \pSpec(\phi)(0) \in \pSpec(B)$.

Next, given $\p \in \pSpec(B)$, it follows from Proposition~\ref{prop:ultrafilter} that the coloring of $B$ 
assigning black to all elements of $\p$ and white to all elements of $B \setminus \p$ is a 
Kochen-Specker coloring. This yields our function $\pSpec(B) \to \KS(B)$.

Finally, given a Kochen-Specker coloring of $B$, define a function $\phi \colon B \to \two$
by $\phi(b) = 0$ if $b \in B$ is colored black and $\phi(b) = 1$ if $b$ is colored white. Then
$\phi^{-1}(0)$ is a prime partial ideal of $B$ by Proposition~\ref{prop:ultrafilter}. Now the restriction of
$\phi$ to any commeasurable subalgebra $C$ of $B$ is such that $\phi|_C^{-1}(0) 
= \phi^{-1}(0) \cap C$ is a prime ideal, and it is well-known~\cite[Lemma~22.1]{GivantHalmos} 
that this implies that $\phi|_C$ is a homomorphism of Boolean algebras. So $\phi$ restricts to a 
Boolean algebra homomorphism on all commeasurable subalgebras, from which we conclude that 
it is a morphism in $\pBool(B,\two)$.

The composite of the three functions in the cycle~\eqref{eq:bijections} beginning at any of the three 
sets yields is readily seen to be the identity. Thus each of the functions is bijective.
Finally, it is straightforward to see from the construction of these bijections that they are natural in $B$,
yielding natural isomorphisms between the three functors $\pBool\op \to \Set$ as claimed.
\end{proof}

Given a partial ring $R$, let $\Idpt(R) = \{e \in R \mid e = e^2\}$ denote the set of idempotents 
elements of $R$. Given commeasurable elements $e,f \in \Idpt(R)$, it is straightforward to verify that
$e \vee f = e + f - ef$ and $e \wedge f = ef$ are both idempotents. Clearly $0,1 \in \Idpt(R)$ as well. 
It is straightforward to verify that the above operations endow $\Idpt(R)$ with the structure of a partial 
Boolean algebra. 
Furthermore, any morphism of partial rings $f \colon R \to S$ restricts to a morphism of partial
Boolean algebras $\Idpt(R) \to \Idpt(S)$. In this way we may view this assignment as a functor
\[
\Idpt \colon \pRing \to \pBool.
\]
In particular, if we again consider the category of rings to be a subcategory of $\pRing$, then this 
restricts to a functor $\Idpt \colon \Ring \to \pBool$.

It is straightforward to verify that with the above definitions in place, if $\p \in \pSpec(R)$ for a partial
ring $R$, then $\p \cap \Idpt(R)$ is a prime partial ideal of the partial Boolean algebra of idempotents.
In this way one obtains a natural transformation of functors $\pRing\op \to \Set$
\begin{equation}\label{eq:spectrum transformation}
\pSpec \to \pSpec \circ \Idpt \cong \KS \circ \Idpt.
\end{equation}
This allows us to deduce information about the partial spectrum of a ring from the (un)colorability
of its idempotents.

\begin{corollary}\label{cor:coloring to spectrum}
If $R$ is a partial ring such that the partial Boolean algebra $\Idpt(R)$ has no Kochen-Specker
colorings, then $\pSpec(R) = \varnothing$.
\end{corollary}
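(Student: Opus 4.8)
The plan is to chain together the natural transformation~\eqref{eq:spectrum transformation} with Remark~\ref{rem:empty and nonempty}. First I would recall that for any partial ring $R$ we have the natural transformation $\pSpec \to \KS \circ \Idpt$ of functors $\pRing\op \to \Set$; evaluating at $R$ gives a function $\pSpec(R) \to \KS(\Idpt(R))$, the set of Kochen-Specker colorings of the partial Boolean algebra of idempotents. (Concretely, this function sends a prime partial ideal $\p$ to the coloring whose black elements are $\p \cap \Idpt(R)$, which is a prime partial ideal of $\Idpt(R)$ and hence, by Proposition~\ref{prop:ultrafilter} and the theorem identifying $\KS$ with $\pSpec$ on $\pBool$, corresponds to a Kochen-Specker coloring.)

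Then the argument is immediate: if $\Idpt(R)$ has no Kochen-Specker colorings, then $\KS(\Idpt(R)) = \varnothing$. Since there is a function $\pSpec(R) \to \KS(\Idpt(R)) = \varnothing$, and the empty set is the only set admitting a function to $\varnothing$, we conclude $\pSpec(R) = \varnothing$. This is exactly the technique described in Remark~\ref{rem:empty and nonempty} (a function into the empty set forces the domain to be empty), just applied to the transformation~\eqref{eq:spectrum transformation} rather than to a morphism of partial rings.

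There is essentially no obstacle here, as the substantive content has already been established: the verification that $\p \cap \Idpt(R)$ is a prime partial ideal of $\Idpt(R)$ (asserted just before~\eqref{eq:spectrum transformation}), the bijection $\pSpec(\Idpt(R)) \cong \KS(\Idpt(R))$ (the preceding theorem), and Proposition~\ref{prop:ultrafilter}. The only thing worth being careful about is making explicit that ``no Kochen-Specker colorings'' means $\KS(\Idpt(R)) = \varnothing$ as a set, so that the emptiness transfers back along the function; but this is a matter of unwinding definitions rather than a genuine difficulty. The proof is therefore a one-line corollary of the machinery built up in this section.
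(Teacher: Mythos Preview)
Your argument is correct and matches the paper's proof essentially verbatim: invoke the natural transformation~\eqref{eq:spectrum transformation} to obtain a function $\pSpec(R) \to \KS(\Idpt(R))$, and conclude that the domain is empty because the codomain is. The additional unpacking you provide (the explicit description of the map via $\p \cap \Idpt(R)$ and the appeal to Remark~\ref{rem:empty and nonempty}) is helpful exposition but not logically required beyond what the paper already states.
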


\begin{proof}
The natural transformation~\eqref{eq:spectrum transformation} provides a function
$\pSpec(R) \to \KS(\Idpt(R))$. Because the latter set is empty, so is the former. 
\end{proof}

A square matrix over a commutative ring is said to be \emph{symmetric} if it is equal
to its own transpose.
For a commutative ring $R$ and positive integer $n$, we let $\M_n(R)_\sym$ denote 
the subset of $\M_n(R)$ consisting of symmetric matrices, and we let 
$\Proj(R) = \Idpt(\M_n(R)_\sym)$ denote the set of symmetric idempotents, which
we call \emph{projections}.
It is clear that $\M_n(R)_\sym$ is a partial $R$-subalgebra of $\M_n(R)$, and that
$\Proj(\M_n(R))$ is a partial Boolean subalgebra of $\Idpt(\M_n(R))$. 
Together, we obtain a diagram of sets
\[
\xymatrix{
\pSpec(\M_n(R)_\sym) \ar[d] & \pSpec(\M_n(R)) \ar[l] \ar[d] \\
\pSpec(\Proj(\M_n(R)) & \pSpec(\Idpt(\M_n(R)) \ar[l]
}
\]
that is easily shown to commute.
Thus, to show that $\pSpec(\M_n(R)) = \varnothing$, it suffices to show that any one 
of the other three partial spectra is empty.

The next lemma shows that the nonexistence of either Kochen-Specker colorings or of prime partial
ideals extends from matrix rings of a fixed order to all matrix rings of larger order. Throughout
the following, for a ring $R$, we let $E_{ij} \in \M_n(R)$ denote the matrix unit whose $(i,j)$-entry
is~$1$ and whose other entries are~$0$.

\begin{lemma}\label{lem:obstruction passes on}
Let $R$ be a ring and let $m \geq 1$ be an integer.
\begin{enumerate}
\item If $\Idpt(\M_m(R))$ has no Kochen-Specker colorings, then also $\Idpt(\M_n(R))$ has no 
Kochen-Specker colorings for all integers $n \geq m$.
\item If $\pSpec(\M_m(R)) = \varnothing$, then also $\pSpec(\M_n(R)) = \varnothing$ for all integers 
$n \geq m$.
\end{enumerate}
Now assume furthermore that $R$ is commutative. 
\begin{enumerate}
\item[(3)] If $\Proj(\M_m(R))$ has no Kochen-Specker colorings, then also $\Proj(\M_n(R))$ has no 
Kochen-Specker colorings for all integers $n \geq m$.
\item[(4)] If $\pSpec(\M_m(R)_\sym) = \varnothing$, then also $\pSpec(\M_n(R)_\sym) = \varnothing$ 
for all integers $n \geq m$.
\end{enumerate}
\end{lemma}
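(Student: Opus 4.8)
The plan is to prove all four statements by the same mechanism: exhibit a morphism of partial rings (or partial Boolean algebras) from the smaller matrix structure to the larger one, and then invoke functoriality of $\KS$ and $\pSpec$ together with Remark~\ref{rem:empty and nonempty} (the ``obstruction pulls back'' principle). Concretely, for statement~(2) I would define a map $\iota \colon \M_m(R) \to \M_n(R)$ that is a morphism of partial rings, whereupon $\pSpec(\iota) \colon \pSpec(\M_n(R)) \to \pSpec(\M_m(R)) = \varnothing$ forces $\pSpec(\M_n(R)) = \varnothing$. Statement~(1) follows the same way using the functor $\KS \circ \Idpt$; statements~(3) and~(4) are the symmetric analogues, using that $\iota$ should additionally respect transposes so that it restricts to $\M_m(R)_\sym \to \M_n(R)_\sym$ and hence to $\Proj(\M_m(R)) \to \Proj(\M_n(R))$.

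The key point is choosing $\iota$ correctly. The naive inclusion $A \mapsto A \oplus 0_{n-m}$ (embed $A$ in the top-left block, pad with zeros) is \emph{not} unital — it sends $1_m$ to a proper idempotent, not to $1_n$ — so it is not a morphism of partial rings. The fix is the \emph{unital} corner embedding: write $n = m + r$ with $r \geq 0$ and set
\[
\iota(A) = \begin{pmatrix} A & 0 \\ 0 & I_r \end{pmatrix},
\]
i.e. $\iota(A) = \sum_{i,j \leq m} A_{ij} E_{ij} + \sum_{k > m} E_{kk}$. First I would check $\iota$ is a morphism of partial rings: it clearly satisfies $\iota(0) = 0$, $\iota(I_m) = I_n$, it is $\Z$-linear, and it is in fact a (total, injective, non-unital-looking-but-actually-unital) ring homomorphism $\M_m(R) \to \M_n(R)$ — additivity is obvious and multiplicativity follows because the identity block is multiplicatively absorbed correctly: $\iota(A)\iota(B)$ has top-left block $AB$ and bottom-right block $I_r$. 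Since $\iota$ is an honest ring homomorphism, it automatically respects commeasurability ($AB = BA \implies \iota(A)\iota(B) = \iota(B)\iota(A)$) and the partial operations, so it is a morphism in $\pRing$; and since $(\iota(A))^{\mathrm{T}} = \iota(A^{\mathrm{T}})$, it carries symmetric matrices to symmetric matrices, giving the morphism $\M_m(R)_\sym \to \M_n(R)_\sym$ of partial $R$-algebras needed for~(3) and~(4).

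With $\iota$ in hand the four conclusions are immediate. Applying the functor $\Idpt$ gives a morphism of partial Boolean algebras $\Idpt(\M_m(R)) \to \Idpt(\M_n(R))$, so by functoriality of $\KS$ (via Proposition~\ref{prop:ultrafilter}, or just the established functor $\KS \colon \pBool^{\mathrm{op}} \to \Set$) a Kochen-Specker coloring of $\Idpt(\M_n(R))$ would restrict to one of $\Idpt(\M_m(R))$; the hypothesis in~(1) says there are none of the latter, hence none of the former. For~(2), functoriality of $\pSpec \colon \pRing^{\mathrm{op}} \to \Set$ gives $\pSpec(\M_n(R)) \to \pSpec(\M_m(R)) = \varnothing$, and the only set mapping to $\varnothing$ is $\varnothing$ (exactly the argument in Remark~\ref{rem:empty and nonempty}). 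Statements~(3) and~(4) are identical with $\M_\bullet(R)_\sym$ and $\Proj(\M_\bullet(R))$ in place of $\M_\bullet(R)$ and $\Idpt(\M_\bullet(R))$, using that $\iota$ restricts appropriately. A routine induction on $n$ (or simply taking $r = n - m$ directly) covers all $n \geq m$ at once. I do not anticipate a genuine obstacle here — the only thing to be careful about is the unitality point that kills the naive zero-padding embedding, which is why I would spell out the $I_r$ block explicitly.
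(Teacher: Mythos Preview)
Your map $\iota(A) = \begin{pmatrix} A & 0 \\ 0 & I_r \end{pmatrix}$ is \emph{not} additive: $\iota(A) + \iota(B)$ has bottom-right block $2I_r$, not $I_r$, and in particular $\iota(0) = \begin{pmatrix} 0 & 0 \\ 0 & I_r \end{pmatrix} \neq 0$. So $\iota$ is multiplicative and unital but fails to be a ring homomorphism or a morphism of partial rings. On idempotents the failure is equally fatal: if $e,f \in \Idpt(\M_m(R))$ are orthogonal then $\iota(e)\iota(f) = \begin{pmatrix} 0 & 0 \\ 0 & I_r \end{pmatrix} \neq 0$, so $\iota$ does not send orthogonal decompositions of $1_m$ to orthogonal decompositions of $1_n$, and a Kochen--Specker coloring cannot be pulled back along it. This is precisely the same tension you flagged with the zero-padding map, just on the other side: zero-padding is additive and preserves $0$ but not $1$; your $\iota$ is multiplicative and preserves $1$ but not $0$. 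There is in general no unital ring homomorphism $\M_m(R) \to \M_n(R)$ for $m < n$ (for $R$ a field and $m \nmid n$ this is ruled out, e.g., by comparing simple module dimensions), so no choice of $\iota$ will rescue this strategy.

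The paper instead argues in the opposite direction, by restriction to a corner. Assuming a Kochen--Specker coloring of $\Idpt(\M_{n+1}(R))$, exactly one of the diagonal matrix units $E_{ii}$ is white; setting $E$ to be the sum of the remaining $n$ of them plus the white one (equivalently $E = 1 - E_{jj}$ for the unique black $E_{jj}$ among those adjacent to the white one --- the point is just that $E$ dominates a white rank-one idempotent), one gets $E$ white by the ultrafilter property. Then the coloring restricts to $\Idpt(E\M_{n+1}(R)E) \cong \Idpt(\M_n(R))$, where it is still a Kochen--Specker coloring because the corner's identity $E$ is white. The same corner-restriction argument handles~(2), and since the $E_{ii}$ and $E$ are symmetric, it restricts to the symmetric setting for~(3) and~(4). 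The essential idea you are missing is that the passage between sizes must go \emph{downward} via a non-unital corner, with the unit condition recovered from the coloring itself.
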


\begin{proof}
To prove~(1), it suffices by induction assume that $\Idpt(\M_n(R))$ has no Kochen-Specker coloring and
deduce that $\Idpt(M)$ has no such coloring for $M = \M_{n+1}(R)$. Suppose toward a contradiction 
that there $\Idpt(M)$ does have a Kochen-Specker coloring. 
Consider the diagonal idempotents $E_{ii}$ for $i = 1, \dots, n, n+1$.
It follows that exactly one of the $E_{ii}$ is white; assume without loss of generality that this is $E_{11}$. 

For the idempotent $E = E_{11} + \cdots + E_{nn} = 1 - E_{n+1,n+1} \in M$, because $E_{11} \leq E$ in
$\Idpt(M)$ we have that $E$ is white according to Proposition~\ref{prop:ultrafilter}. 
The corner ring $EME$ has multiplicative identity $E$, so that the restriction of the coloring to 
$\Idpt(EME)$ satisfies condition~(i) of Proposition~\ref{prop:ultrafilter}. But conditions~(ii)--(iv) of the same 
lemma are easily seen to pass to $\Idpt(EME)$, from which it follows that this restriction is a
Kochen-Specker coloring. 
But it is clear from the choice of $E$ that $EME \cong \M_n(R)$.  Thus we obtain the contradiction that 
$\Idpt(\M_n(R)) \cong \Idpt(EME)$ has a Kochen-Specker coloring.

The proof of~(2) also proceeds inductively, showing that any $\p \in \pSpec(M)$ for $M = 
\M_{n+1}(R)$ induces a prime partial ideal of $\M_n(R)$. This is done similarly to the proof of part~(1) 
above, this time noting that without loss of generality we may assume $E_{11} \in \p$ with the rest of 
the $E_{ii} \notin \p$, so that for $E = E_{11} + \cdots + E_{nn}$ the restriction $\p \cap EME$ is a 
prime partial  ideal of $E M E \cong \M_n(R)$.

The proofs of~(3) and~(4) follow the same arguments as those given above with only minor
modifications: the idempotents $E_{ii}$ and $E$ above are symmetric, so that the
transpose restricts to an involution of the corner ring $EME$, resulting in isomorphisms
$EME_\sym \cong \M_n(R)_\sym$.
\end{proof}

\begin{remark}\label{rem:constructably uncolorable}
If $\setS \subseteq \Idpt(\M_n(R))$ is a subset that has no Kochen-Specker coloring,
then one may adapt the proof above to explicitly construct a new set $\setS^+ \subseteq 
\Idpt(\M_{n+1}(R))$ that also has no Kochen-Specker coloring, by taking the diagonal matrix units 
$E_{ii}$ for $i = 1, \dots, n,n+1$ along with isomorphic copies of $\setS$ in each of the partial 
Boolean algebras $\Idpt((1-E_{ii}) \M_{n+1}(R) (1-E_{ii})) \cong \Idpt(\M_n(R))$.
\end{remark}

\section{Colorability of idempotent matrices over various rings}
\label{sec:coloring}

In the following, for a field $F$, we consider the $F$-vector spaces $F^n$
to consist of column vectors.
Given vectors $u,v \in F^n$, we denote their usual ``dot product'' by
\[
u \cdot v = u^T v = \sum u_i v_i.
\]
This defines a bilinear form on $F^n$, but this may be a degenerate pairing depending
upon the ground field $F$.

\begin{lemma}\label{lem:denominator}
Let $F$ be a field, and let $v \in F^n \setminus \{0\}$. Then the following are equivalent:
\begin{enumerate}
\item There is a symmetric idempotent in $\M_n(F)$ with range $\Span(v)$;
\item The sum of squares $v^T v \in F$ is nonzero.
\end{enumerate}
In case $v^T v = \lambda \neq 0$, the symmetric idempotent with range
$\Span(v)$ is $P_v = \lambda^{-1} v v^T$. Finally, given $u,v \in F^n$ with
$u^T u \neq 0 \neq v^T v$, the projections $P_u$ and $P_v$ are orthogonal
if and only if $u \cdot v = 0$.
\end{lemma}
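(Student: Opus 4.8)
The plan is to prove the three claims in order: first the equivalence (1)$\Leftrightarrow$(2), then the explicit formula for $P_v$, and finally the orthogonality criterion. Throughout I will use the elementary linear algebra of rank-one matrices over a field, being careful that the bilinear form $u^T v$ may be degenerate so that $v^T v = 0$ is genuinely possible.

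For the direction (2)$\Rightarrow$(1), suppose $\lambda := v^T v \neq 0$ and set $P_v = \lambda^{-1} v v^T$. I would check directly that $P_v$ is symmetric (since $(vv^T)^T = vv^T$) and idempotent: $P_v^2 = \lambda^{-2} v(v^Tv)v^T = \lambda^{-2}\lambda\, vv^T = P_v$. Its range is contained in $\Span(v)$ since every column of $vv^T$ is a scalar multiple of $v$, and it is all of $\Span(v)$ because $P_v v = \lambda^{-1} v (v^T v) = v \neq 0$. This simultaneously establishes the displayed formula for $P_v$. For (1)$\Rightarrow$(2), suppose $P \in \M_n(F)$ is symmetric idempotent with $\range(P) = \Span(v)$. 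Since $v \in \range(P)$ and $P$ is idempotent, $Pv = v$. Then I compute $v^T v = (Pv)^T v = v^T P^T v = v^T P v = v^T (Pv) = v^T v$ — which is a tautology and not yet enough; instead I should argue as follows. Because $\range(P) = \Span(v)$ is one-dimensional, $P$ has rank one, so $P = v w^T$ for some $w \in F^n$ (the columns of $P$ being multiples of $v$, collect the coefficients into $w$). Idempotency gives $v w^T v w^T = v w^T$, i.e. $(w^T v)\, v w^T = v w^T$; since $P = vw^T \neq 0$ this forces $w^T v = 1$, so in particular $w^T v \neq 0$. Symmetry gives $v w^T = w v^T$; multiplying on the right by $v$ yields $v(w^T v) = w(v^T v)$, i.e. $v = (v^T v)\, w$ using $w^T v = 1$. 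If $v^T v = 0$ this would give $v = 0$, a contradiction; hence $v^T v \neq 0$, proving (2). (As a bonus this recovers $w = \lambda^{-1} v$, confirming $P = P_v$ and hence uniqueness.)

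For the final claim, assume $u^T u \neq 0 \neq v^T v$, so both $P_u = (u^Tu)^{-1} u u^T$ and $P_v = (v^Tv)^{-1} v v^T$ exist. Recall from the partial Boolean algebra structure on idempotents that $P_u$ and $P_v$ are orthogonal precisely when they commute and $P_u P_v = P_u \wedge P_v = 0$. I would compute $P_u P_v = (u^Tu)^{-1}(v^Tv)^{-1} u (u^T v) v^T = \frac{u^T v}{(u^Tu)(v^Tv)}\, u v^T$. Since $uv^T \neq 0$ (as $u,v \neq 0$) and the scalar field is a field with no zero divisors, $P_u P_v = 0$ if and only if $u^T v = 0$. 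One then checks the symmetric computation $P_v P_u = \frac{u^T v}{(u^Tu)(v^Tv)}\, v u^T$, so when $u^T v = 0$ we get $P_v P_u = 0 = P_u P_v$, i.e. they commute and their product is zero, hence they are orthogonal; conversely if they are orthogonal then $P_u P_v = 0$ forces $u^T v = 0$. This completes all three parts.

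I do not anticipate a serious obstacle here — the argument is entirely linear algebra over a field. The one subtlety worth flagging, and the place where a careless proof goes wrong, is the step in (1)$\Rightarrow$(2): one must actually \emph{use} symmetry of $P$ to conclude $v^Tv \neq 0$, since for a general (non-symmetric) rank-one idempotent with range $\Span(v)$ there is no constraint on $v^Tv$. The clean way to do this is the manipulation $v = (v^Tv) w$ above, which pins down $w$ as a multiple of $v$ and thereby links $v^Tv$ to the nonvanishing of $P$. Everything else is direct verification, and I would present it compactly.
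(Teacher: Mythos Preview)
Your proposal is correct. The direction (2)$\Rightarrow$(1) and the explicit formula for $P_v$ are handled identically to the paper. For the orthogonality criterion, both you and the paper compute $P_u P_v$ directly; the paper's converse instead writes $u \cdot v = (P_u u)\cdot(P_v v) = (P_v P_u u)\cdot v = 0$, but your argument via $P_u P_v = \frac{u^Tv}{(u^Tu)(v^Tv)}\, uv^T$ with $uv^T \neq 0$ is equally clean and arguably more symmetric.

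The genuine difference is in (1)$\Rightarrow$(2). The paper argues that $P = P^T P$ forces some nonzero entry of $P$ to equal $v_i^T v_j$ for columns $v_i, v_j \in \Span(v)$, hence a nonzero scalar multiple of $v^T v$. Your route---writing $P = vw^T$, extracting $w^Tv = 1$ from idempotency, then $v = (v^Tv)w$ from symmetry---is a different and somewhat more structural manipulation. It has the advantage of simultaneously yielding $w = \lambda^{-1}v$, so you get uniqueness of the symmetric idempotent with range $\Span(v)$ for free, which the paper's statement asserts (``\emph{the} symmetric idempotent'') but its proof does not explicitly establish. Both arguments are elementary; yours packages slightly more information.
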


\begin{proof}
Assume~(1) holds, so that $P = P^2 = P^T \in \M_n(F)$ with $\range(P) = \Span(v)$. Because
$v \neq 0$, we have $P \neq 0$. Thus some entry of $P$ is nonzero, say the $(i,j)$-entry. 
Let $v_i$ and $v_j$ denote the $i$th and $j$th rows of $P$ respectively. Then the $(i,j)$-entry
of $P = P^2 = P^T P$ is equal to $v_i^T v_j \neq 0$. But as $v_i, v_j \in \range(P) = \Span(v)$, 
the product $v_i^T v_j$ is a scalar multiple of $v^T v$. Thus~(2) must hold.

Conversely, suppose~(2) holds, and set $\lambda = v^T v \neq 0$. Then $P = \lambda^{-1} v v^T$
satisfies $P = P^T$ and 
\[
P^2 = (\lambda^{-1} vv^T) (\lambda^{-1} vv^T) = \lambda^{-2}  v (v^T v) v^T = \lambda^{-1} v v^T = P.
\]
Given any $w \in F^n$, since $Pw = \lambda^{-1} vv^Tw = (\lambda^{-1} v \cdot w)v \in \Span(v)$
and $Pv = v$,  we see that $\range(P) = \Span(v)$. Thus~(1) holds.

Finally, suppose $u,v \in F^n$ are as in the last sentence of the lemma. If $u \cdot v = 0$ then we have
\[
P_u P_v = (u^Tu)^{-1} u u^T \cdot (v^Tv)^{-1} vv^T = (u^T u \cdot v^Tv)^{-1} u (u^T v) v^T = 0,
\]
and $P_v P_u = (P_u P_v)^T = 0$. Conversely, suppose that $P_u P_v = 0$. Then 
\[
u \cdot v = (P_u u) \cdot (P_v v) = (P_v^T P_u u) \cdot v = (P_v P_u u) \cdot v = 0
\]
as desired.
\end{proof}

The lemma above allows us to equate Kochen-Specker colorings of rank-1 symmetric projections
over a field $F$ with Kochen-Specker colorings of vectors $v$ satisfying $v^T v \neq 0$, in a
manner analogous to Remark~\ref{rem:vector coloring}.

Our first uncolorability result makes use of one of the few vector configurations in
the literature on Kochen-Specker colorings for which all vectors have integer entries. An account is given in~\cite{Bub:Schutte}
(also~\cite[Chapter~3]{Bub:interpreting}) of a proof of the Kochen-Specker Theorem due to Kurt 
Sch\"{u}tte, making use of a certain classical tautology that does not remain a tautology when 
interpreted in the partial Boolean algebra $\Proj(\M_3(\R))$. 
The orthogonal projections used to represent this logical proposition happen
to be projections onto lines spanned by vectors with integer entries. Though these vectors do not have
unit length, and their normalizations have irrational entries, we observe in the proof below that the
resulting projection matrices do in fact have rational entries. We recall below that for any rational
$q \in \Q$, the ring $\Z[q]$ denotes the subring of $\Q$ generated by the integers and $q$, and
consists of elements of the form $f(q)$ where $f$ is any polynomial with integer entries.

\begin{theorem}\label{thm:Bub}
The partial Boolean algebra $\Proj(\M_3(\Z[1/30])$ has no Kochen-Specker coloring.
Consequently, the partial ring $\M_3(\Z[1/30])_\sym$ has no prime partial ideals.
\end{theorem}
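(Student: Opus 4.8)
The plan is to produce an explicit finite set of vectors in $\Z^3$, taken from Schütte's configuration as presented in~\cite{Bub:Schutte}, together with an explicit list of orthogonality relations among them, and to argue that no black-and-white coloring of the corresponding rank-1 projections can satisfy the Kochen-Specker conditions of Definition~\ref{def:KS coloring}. By Lemma~\ref{lem:denominator}, a nonzero vector $v \in \Q^3$ gives rise to a symmetric idempotent $P_v = (v^Tv)^{-1}vv^T$ precisely when $v^Tv \neq 0$, and for integer vectors this is automatic since $v^Tv$ is then a positive integer; moreover two such projections are orthogonal iff the vectors have vanishing dot product. So the combinatorial content is entirely a statement about dot products of integer vectors. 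The role of the ring $\Z[1/30]$ is exactly to guarantee that each of the finitely many projections $P_v$ appearing in the argument actually has entries in the ring: the relevant $v^Tv$ values for the Schütte vectors are products of the primes $2$, $3$, and $5$, so inverting $30$ suffices to make every $P_v$ lie in $\M_3(\Z[1/30])_\sym$, hence in $\Proj(\M_3(\Z[1/30]))$.

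First I would record the list of vectors and verify (by direct computation of dot products, which I will not grind through here) that the claimed orthogonality relations hold and that each $v^Tv$ divides $30^k$ for suitable $k$, so that $P_v \in \Proj(\M_3(\Z[1/30]))$. Then I would reproduce the logical heart of Schütte's argument: one assumes a Kochen-Specker coloring exists, uses condition~(2) of Definition~\ref{def:KS coloring} on orthogonal triples summing to $1$ to force certain vectors white, uses condition~(1) on orthogonal pairs and triples to propagate ``black'' constraints, and derives a contradiction — typically that some orthogonal triple is forced to be entirely black, violating condition~(2), or that some vector is forced both black and white. Equivalently, and perhaps more cleanly, I would invoke Proposition~\ref{prop:ultrafilter}: a Kochen-Specker coloring corresponds to a partial ultrafilter on $\Proj(\M_3(\Z[1/30]))$, and the ``white'' set must then be upward closed, closed under commeasurable meets, and contain exactly one of each orthogonal pair $\{P_v, P_w\}$ with $P_v + P_w \leq 1$; the Schütte relations make these closure properties mutually inconsistent.

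The second sentence of the theorem then follows immediately from Corollary~\ref{cor:coloring to spectrum}: since $\Proj(\M_3(\Z[1/30])) = \Idpt(\M_3(\Z[1/30])_\sym)$ has no Kochen-Specker coloring, the partial ring $\M_3(\Z[1/30])_\sym$ has $\pSpec(\M_3(\Z[1/30])_\sym) = \varnothing$, i.e. no prime partial ideals.

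The main obstacle is bookkeeping rather than conceptual: one must present the Schütte vector configuration correctly, confirm every orthogonality relation used, and lay out the case analysis of the coloring in a way that is complete (no omitted cases) and self-contained. A secondary technical point to handle carefully is the passage from ``the configuration is uncolorable over $\R$'' to ``it is uncolorable as a subset of $\Proj(\M_3(\Z[1/30]))$'': this is where Lemma~\ref{lem:denominator} does the work, ensuring both that the projections exist over $\Z[1/30]$ and that ``orthogonal'' has the same meaning (vanishing dot product) as over $\R$, so that the real-coefficient contradiction transports verbatim. Since the uncolorability argument only ever manipulates the finitely many listed projections and their orthogonality relations — never any metric or completeness property of $\R$ — it applies unchanged once we know these projections live in the smaller partial Boolean algebra.
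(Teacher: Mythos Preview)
Your proposal is correct and follows essentially the same approach as the paper: invoke the Sch\"utte vector configuration from~\cite{Bub:Schutte}, use Lemma~\ref{lem:denominator} to observe that the associated rank-1 projections lie in $\Proj(\M_3(\Z[1/30]))$ because each $v^Tv$ is invertible in $\Z[1/30]$, and then transport the known uncolorability argument verbatim; the second claim follows from Corollary~\ref{cor:coloring to spectrum}. The paper's proof is terser---it simply cites the Sch\"utte--Bub argument rather than reproducing the case analysis, and it notes the sharper fact that each $v^Tv$ actually divides $30$ (not just $30^k$)---but the logical content is the same.
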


\begin{proof}
Consider the uncolorable set of vectors~\cite[Section~4]{Bub:Schutte} used in Sch\"{u}tte's proof
of the Kochen-Specker theorem. These are vectors in $\Z^3$, and each of the vectors $v$
is such that $\|v\|^2 = v^Tv$ divides $30$. Thus each of the corresponding orthogonal projections
$p_v = (v^Tv)^{-1} vv^T$ lies in $\Proj(\M_3(\Z[1/30])$ by Lemma~\ref{lem:denominator}. 
Thus the argument of Sch\"{u}tte and Bub in fact shows more generally that $\Proj(\M_3(\Z[1/30]))$ has no 
Kochen-Specker coloring.

Because $\Proj(\M_3(\Z[1/30])) = \Idpt(\M_3(\Z[1/30])_\sym)$, the claim about prime partial ideals 
follows from Corollary~\ref{cor:coloring to spectrum}.
\end{proof}

Let $F$ be a field and consider the canonical ring homomorphism $\Z \to F$. If the characteristic
of $F$ does not divide $30$, then this homomorphism factors uniquely as
$\Z \to \Z[1/30] \to F$. This induces a ring homomorphism $\M_3(\Z[1/30]) \to \M_3(F)$, along
with a morphism of partial Boolean algebras $\Proj(\M_3(\Z[1/30])) \to \Proj(\M_3(F))$. By
functoriality of Kochen-Specker colorings, the theorem above implies that $\Proj(\M_3(F))$ has 
no Kochen-Specker colorings and therefore that $\M_3(F)_\sym$ has no prime partial ideals.
In particular, these remarks apply to the field $F = \Q$ of rational numbers. 

\begin{remark}
In the literature addressing finite-precision loopholes to the Kochen-Specker theorem (as
in~\cite{Meyer, CliftonKent} and many further references discussed in~\cite{BarrettKent})
it is well-documented that the set $S = \Q^3 \cap S^2$ of vectors with rational coordinates on
the unit sphere has a Kochen-Specker coloring~\cite{GodsilZaks}.
The apparent conflict between this fact and the uncolorability of $\Proj(\M_3(\Q))$ may be resolved
as follows.
The mapping $\phi \colon S \to \Proj(\M_3(\Q))$ given by $\phi(v) = P_v = \|v\|^{-2} vv^T 
= vv^T$ preserves orthogonality by Lemma~\ref{lem:denominator} and has image contained in the
rank-1 rational projections. By the same lemma, every rank-1 projection in $\Proj(\M_3(\Q))$
is of the form $P = P_v = \|v\|^{-2} vv^T$ for any nonzero vector $v$ in the range of $P$. 
The image of $\phi$ forms a proper subset of the rank-1 rational projections, as one readily
verifies that for vectors such as $v = \begin{pmatrix} 1 & 1 & 1 \end{pmatrix}^T$ such that $v/\|v\|$
has irrational entries, the projection $P_v$ lies outside of the image of $\phi$. So the rational
unit vectors correspond to a Kochen-Specker colorable subset of the Kochen-Specker uncolorable
set of \emph{all} rank-1 rational projections.
\end{remark}

On the other hand, if $F$ has characteristic $p$ dividing $30$ (i.e., $p = 2,3,5$), then the
ring homomorphism $\Z \to F$ \emph{does not} factor through $\Z[1/30]$, so we cannot
make the same conclusion about Kochen-Specker colorings of projections or prime partial
ideals in $\M_3(F)_\sym$. 
In the following we use $\F_q$ to denote the finite field with $q$ elements. In case 
$F = \F_p$ we will show below that such Kochen-Specker colorings and prime partial ideals 
do in fact exist in case $p = 2,3$ but not in case $p = 5$.

\begin{theorem}\label{thm:finite sym coloring}
There exist Kochen-Specker colorings of $\Proj(\M_3(\F_p))$ and prime partial ideals 
of the partial rings $\M_3(\F_p)_\sym$ for $p = 2,3$. 
\end{theorem}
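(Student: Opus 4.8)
The plan is to exhibit an explicit Kochen-Specker coloring of $\Proj(\M_3(\F_p))$ for $p = 2$ and $p = 3$ separately; by Proposition~\ref{prop:ultrafilter} it suffices to produce a prime partial ideal of this partial Boolean algebra, and by the natural transformation~\eqref{eq:spectrum transformation} (together with $\Proj(\M_3(\F_p)) = \Idpt(\M_3(\F_p)_\sym)$) this will simultaneously yield the desired prime partial ideal of $\M_3(\F_p)_\sym$. The key point is that for such small fields the dot product $u \cdot v = u^T v$ on $\F_p^3$ is highly degenerate, so Lemma~\ref{lem:denominator} tells us that many vectors do \emph{not} give rise to any symmetric idempotent at all, which drastically shrinks $\Proj(\M_3(\F_p))$ and makes it colorable. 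Concretely, Lemma~\ref{lem:denominator} says that the rank-$1$ projections correspond to lines $\Span(v)$ with $v^T v \neq 0$, and two such projections $P_u, P_v$ are orthogonal iff $u \cdot v = 0$.

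First I would enumerate, for each $p \in \{2,3\}$, the lines $\ell = \Span(v)$ in $\F_p^3$ with $v^T v \neq 0$; note $v^T v$ is well-defined up to a nonzero square on a line, so ``$v^Tv \neq 0$'' depends only on $\ell$. For $p = 2$ there are very few such lines (the condition $v_1^2 + v_2^2 + v_3^2 = v_1 + v_2 + v_3 \neq 0$ in $\F_2$ picks out vectors with an odd number of $1$'s, i.e. the three standard basis vectors $e_1, e_2, e_3$ and $(1,1,1)$ — only $4$ lines), and for $p = 3$ one has a somewhat larger but still small finite list. Then I would build the proposed prime partial ideal $\p$ by hand: declare $e_1$ white (equivalently, $e_1 \notin \p$) and attempt to propagate the constraints of a Kochen-Specker coloring. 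The essential structural fact I expect to use is that any orthogonal triple of rank-$1$ projections summing to $1$ corresponds to an orthogonal basis $\{u_1, u_2, u_3\}$ of $\F_p^3$ with each $u_i^T u_i \neq 0$; I would check that these bases are few enough and rigid enough that the ``exactly one white per complete orthogonal triple'' condition can be satisfied consistently, and then extend the coloring to all projections of rank $0,1,2,3$ via $P \mapsto 1 - P$ and the partial-ideal axioms. Finally I would verify conditions (i)--(iv) of Definition~\ref{def:Boolean prime partial} for $\p$ by checking them inside each commeasurable Boolean subalgebra, i.e. each maximal Boolean subalgebra generated by a resolution of the identity into orthogonal projections, of which there are only finitely many.

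The main obstacle is the bookkeeping: one must correctly compute which lines carry a symmetric idempotent in characteristic $2$ and $3$, correctly list the maximal pairwise-commeasurable families of projections (equivalently the relevant orthogonal decompositions of $\F_p^3$), and then confirm that a globally consistent choice of ``white'' element in each complete orthogonal triple exists — this is exactly where the analogous argument fails for $p = 5$ (Theorem~\ref{thm:F5 sym coloring}), so the proof is genuinely using the smallness of $p$. I would organize the verification around the observation that, because $0$ and $1$ are commeasurable with everything and the projections of rank $1$ and $2$ are complementary in pairs, a coloring is determined by its values on rank-$1$ projections subject only to the completeness condition on orthogonal bases; so the entire problem reduces to a finite consistency check on a small explicit graph whose vertices are the admissible lines and whose ``forbidden'' structures are the complete orthogonal triples. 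Presenting that finite check (perhaps as a small table or an explicit formula such as ``color $\Span(v)$ white iff its first nonzero coordinate is in a specified position'' for a suitably chosen rule) completes the proof.
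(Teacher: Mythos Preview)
Your approach to constructing the Kochen-Specker colorings themselves is essentially the same as the paper's: enumerate the admissible lines via Lemma~\ref{lem:denominator}, list the orthogonal triples summing to the identity, and verify that a consistent choice exists. The paper carries this out explicitly, finding four rank-$1$ projections for $p=2$ (with two disjoint maximal commeasurable Boolean subalgebras, so any independent choice works) and nine rank-$1$ projections in four orthogonal triples for $p=3$ (so any coloring of $\{E_{11},E_{22},E_{33}\}$ extends).

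However, there is a genuine gap in your deduction of prime partial ideals of the \emph{partial ring} $\M_3(\F_p)_\sym$. The natural transformation~\eqref{eq:spectrum transformation} goes in the direction
\[
\pSpec(R) \longrightarrow \pSpec(\Idpt(R)), \qquad \p \longmapsto \p \cap \Idpt(R),
\]
and does \emph{not} furnish a map the other way. So a prime partial ideal of the partial Boolean algebra $\Proj(\M_3(\F_p)) = \Idpt(\M_3(\F_p)_\sym)$ does not, by the machinery you cite, produce one of the partial ring $\M_3(\F_p)_\sym$; indeed Corollary~\ref{cor:coloring to spectrum} is only the contrapositive implication. The paper fills this gap with the nontrivial Lemma~\ref{lem:coloring to morphism} from the appendix: for a perfect field $F$ and a partial $F$-subalgebra $A \subseteq \M_3(F)$, any Kochen-Specker coloring of $\Idpt(A)$ extends to a morphism of partial $F$-algebras $\phi \colon A \to K$ into a suitable field extension $K$, and then $\phi^{-1}(0) \in \pSpec(A)$. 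Nothing in your outline supplies this extension from idempotents to arbitrary symmetric matrices, so as written your argument does not establish $\pSpec(\M_3(\F_p)_\sym) \neq \varnothing$.
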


\begin{proof}
We establish the existence of Kochen-Specker colorings below. Then it will follow from
Lemma~\ref{lem:coloring to morphism} that there exists a morphism of partial $F$-algebras
$\phi \colon \M_3(\F_p)_\sym \to K$ for a field extension $K$ of $\F_p$, making $\phi^{-1}(0)$
a prime partial ideal of $\M_3(\F_p)_\sym$ for $p = 2,3$.

$p = 2$: There are four vectors $v \in \F_2^3$ satisfying $v^Tv \neq 0$, yielding four rank-1
projections in $\M_3(\F_2)$; three of these projections are the diagonal matrix units $E_{ii} = e_i e_i^T$
from the standard basis vectors $\{e_i \mid i = 1,2,3\}$, and the fourth is
\[
U = uu^T = \begin{pmatrix} 1 & 1 & 1 \\ 1 & 1 & 1 \\ 1 & 1 & 1 \end{pmatrix} 
\quad \mbox{for} \quad u = \begin{pmatrix} 1 \\ 1 \\ 1 \end{pmatrix}.
\]
Thus $\Proj(\M_3(\F_2))$ has two maximal commeasurable Boolean subalgebras: one is generated
by the $E_{ii}$ (and therefore isomorphic to the power set algebra on a three-element set), and
the other is given by $\{0, U, I-U, I\}$ (isomorphic to the power set of a two-element set). Now any
independent choice of a Kochen-Specker coloring on each of these maximal commeasurable subalgebras
(given by any homomorphism into $\two$) gives a Kochen-Specker coloring of $\Proj(\M_3(\F_2))$.

$p = 3$: In this case every projection in $\M_3(\F_3)$ is a sum of orthogonal rank-$1$ projections,
of which there are nine. In fact, there are only four orthogonal triples of rank-$1$ projections that
sum to the identity (the off-diagonal entries that are omitted below are zero):
\begingroup
\allowdisplaybreaks
\begin{align*}
I &= 
\begin{pmatrix}
1 & & \\
 & 0 & \\
 & & 0
\end{pmatrix}
+
\begin{pmatrix}
0 & & \\
 & 1 & \\
 & & 0
\end{pmatrix}
+
\begin{pmatrix}
0 & & \\
 & 0 & \\
 & & 1
\end{pmatrix} \\
&= 
\begin{pmatrix}
1 & & \\
 & 0 & \\
 & & 0
\end{pmatrix}
+
\begin{pmatrix}
0 & & \\
 & 2 & 2\\
 & 2 & 2
\end{pmatrix}
+
\begin{pmatrix}
0 & & \\
 & 2 & 1 \\
 & 1 & 2
\end{pmatrix} \\
&= 
\begin{pmatrix}
0 & & \\
 & 1 & \\
 & & 0
\end{pmatrix}
+
\begin{pmatrix}
2 & 0 & 2 \\
0 & 0 & 0 \\
2 & 0 & 2
\end{pmatrix}
+
\begin{pmatrix}
2 & 0 & 1 \\
0 & 0 & 0 \\
1 & 0 & 2
\end{pmatrix} \\
&= 
\begin{pmatrix}
0 & & \\
 & 0 & \\
 & & 1
\end{pmatrix}
+
\begin{pmatrix}
2 & 2 & \\
2 & 2 & \\
 &  & 0
\end{pmatrix}
+
\begin{pmatrix}
2 & 1 & \\
1 & 2 &  \\
 & & 0
\end{pmatrix}
\end{align*}
\endgroup
As each of the diagonal matrix units $E_{ii}$ is contained in $T = \{E_{11}, E_{22}, E_{33}\}$ and exactly
one other orthogonal triple, it is easy to verify that any Kochen-Specker coloring of the triple
$T$ can be extended to a Kochen-Specker coloring of the nine projections above and thereby 
to all of $\Proj(\M_3(\F_3))$.
\end{proof}

\begin{corollary}\label{cor:integer sym coloring}
There exist morphisms of partial rings $\M_3(\Z)_\sym \to \F_{p^6}$ for $p = 2,3$. Consequently,
$\pSpec(\M_3(\Z)_\sym) \neq \varnothing$.
\end{corollary}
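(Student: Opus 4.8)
The plan is to deduce Corollary~\ref{cor:integer sym coloring} from Theorem~\ref{thm:finite sym coloring} together with the functoriality machinery already developed. The key observation is that $\M_3(\Z)_\sym$ receives a morphism of partial rings from nothing (it is already "universal" among $\M_3(R)_\sym$ for commutative $R$ via the unique map $\Z \to R$), but rather that it \emph{maps out} to $\M_3(\F_p)_\sym$ for every prime $p$. So the strategy is: first fix $p = 2$ or $p = 3$; the unique ring homomorphism $\Z \to \F_p$ induces a ring homomorphism $\M_3(\Z) \to \M_3(\F_p)$ entrywise, which restricts to a morphism of partial rings $\M_3(\Z)_\sym \to \M_3(\F_p)_\sym$ (symmetric matrices go to symmetric matrices, and commeasurability is preserved since the map is a genuine ring homomorphism). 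By Theorem~\ref{thm:finite sym coloring}, $\Proj(\M_3(\F_p)) = \Idpt(\M_3(\F_p)_\sym)$ admits a Kochen-Specker coloring.

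**Next I would** invoke the as-yet-unstated Lemma~\ref{lem:coloring to morphism} (referenced in the proof of Theorem~\ref{thm:finite sym coloring}), which converts a Kochen-Specker coloring of $\Idpt(\M_3(\F_p)_\sym)$ into a morphism of partial $\F_p$-algebras $\M_3(\F_p)_\sym \to K$ for some field extension $K/\F_p$. The claim that one can take $K = \F_{p^6}$ requires identifying the size of the field generated by the image: a morphism of partial rings out of $\M_3(\F_p)_\sym$ is determined by its restriction to the maximal commeasurable subalgebras, each of which is a commutative $\F_p$-algebra of $\F_p$-dimension at most $3$ (the matrices simultaneously diagonalizable over a splitting field; a $3$-dimensional commutative semisimple $\F_p$-algebra is a product of field extensions of total degree $3$, and its simple quotients embed in $\F_{p^3}$ — but actually since an orthogonal triple of rank-$1$ projections over $\F_p$ may require a quadratic extension to split, the diagonal entries live in $\F_{p^2}$, and the Galois orbit forces degree dividing $6$). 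Composing the morphism $\M_3(\Z)_\sym \to \M_3(\F_p)_\sym \to K$ and noting the image lands in the subfield generated by finitely many elements each of degree dividing $6$, one gets a morphism $\M_3(\Z)_\sym \to \F_{p^6}$.

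**Then** the consequence $\pSpec(\M_3(\Z)_\sym) \neq \varnothing$ follows immediately: given the morphism of partial rings $\psi \colon \M_3(\Z)_\sym \to \F_{p^6}$, the field $\F_{p^6}$ is a nonzero total commutative ring, so it has the prime ideal $(0) \in \Spec(\F_{p^6})$, and by functoriality of $\pSpec$ (as recalled before Remark~\ref{rem:empty and nonempty}) we get $\psi^{-1}(0) \in \pSpec(\M_3(\Z)_\sym)$. This is precisely the mechanism described in Remark~\ref{rem:empty and nonempty}, run in the "nonempty" direction via Lemma~\ref{lem:empty spec obstructions}(1) read contrapositively.

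**The main obstacle** I anticipate is pinning down the exponent $6$ precisely — that is, showing $\F_{p^6}$ (and not some larger field) suffices. This hinges on a careful analysis of the maximal commeasurable Boolean subalgebras of $\Proj(\M_3(\F_p))$ and the commutative subalgebras of $\M_3(\F_p)_\sym$ they sit inside: one must check that each relevant commutative $\F_p$-algebra is a quotient of $\F_p[x]/(f)$ with $\deg f \le 3$ \emph{and} account for the fact that a non-split orthogonal triple (e.g.\ a rank-$1$ projection whose range is an anisotropic line) forces passage to $\F_{p^2}$, so the compositum of all simple components has degree dividing $\operatorname{lcm}(3, 2) = 6$ over $\F_p$. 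Everything else — functoriality of $\Idpt$, $\pSpec$, and $\KS$; the reduction map $\M_3(\Z)_\sym \to \M_3(\F_p)_\sym$; the final application of Remark~\ref{rem:empty and nonempty} — is formal given the results already in hand.
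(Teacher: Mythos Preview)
Your approach is essentially the paper's: reduce modulo $p$, invoke Theorem~\ref{thm:finite sym coloring} and Lemma~\ref{lem:coloring to morphism}, compose, and pull back the zero ideal. The one point where you diverge is the justification of the exponent~$6$: the paper simply cites Remark~\ref{rem:finite extension}, which records that for a finite field $F$ the target $K$ in Lemma~\ref{lem:coloring to morphism} may be taken to be the unique degree-six extension (because $K$ need only contain every degree-two and degree-three extension of $F$); your sketch about anisotropic lines and splitting orthogonal triples is not quite the right mechanism---the degrees $2$ and $3$ arise in the appendix from the minimal polynomials of purely non-diagonalizable semisimple elements of $\M_3$, not from orthogonality considerations---though your conclusion $\operatorname{lcm}(2,3)=6$ is of course correct.
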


\begin{proof}
For $p = 2,3$ let $\M_3(\Z)_\sym \to \M_3(\F_p)_\sym$ be the canonical homomorphism that acts
``modulo~$p$'' in each matrix entry. By Theorem~\ref{thm:finite sym coloring} and 
Lemma~\ref{lem:coloring to morphism} (see also Remark~\ref{rem:finite extension}), there is a 
morphism of partial $\F_p$-algebras $\M_3(\F_p) \to \F_{p^6}$, since $\F_{p^6}$ is up to isomorphism 
the unique degree six extension of $\F_p$. The composite of these morphisms yields the desired 
function, and the preimage of the zero ideal of $\F_{p^6}$ is a prime partial ideal of $\M_3(\Z)_\sym$.
\end{proof}

Next we will show that the projections in $\M_3(\F_5)$ do not have a Kochen-Specker coloring.
To this end, we note that each $v \in \F_5^3$ satisfying $v^T v \neq 0$ is a scalar multiple of a 
unique vector in the list below.
\begingroup
\allowdisplaybreaks
\addtolength{\jot}{1em}
\begin{align*}
v_1 &= \colvec[1]{0}{0} & v_2 &= \colvec[0]{1}{0} & v_3 &= \colvec[0]{0}{1} & v_4 &= \colvec[1]{1}{0} & v_5 &= \colvec[1]{0}{1} \\
v_6 &= \colvec[0]{1}{1} & v_7 &= \colvec[-1]{1}{0} & v_8 &= \colvec[-1]{0}{1} & v_9 &= \colvec[0]{-1}{1} & v_{10} &= \colvec[1]{1}{1} \\
v_{11} &= \colvec[-1]{1}{1} & v_{12} &= \colvec[1]{-1}{1} & v_{13} &= \colvec[1]{1}{-1} & v_{14} &= \colvec[2]{1}{1} & v_{15} &= \colvec[1]{2}{1} \\
v_{16} &= \colvec[1]{1}{2} & v_{17} &= \colvec[-2]{1}{1} & v_{18} &= \colvec[1]{-2}{1} & v_{19} &= \colvec[1]{1}{-2} & v_{20} &= \colvec[-2]{2}{1} \\
v_{21} &= \colvec[-2]{1}{2} & v_{22} &= \colvec[1]{-2}{2} & v_{23} &= \colvec[2]{-2}{1} & v_{24} &= \colvec[2]{1}{-2} & v_{25} &= \colvec[1]{2}{-2}
\end{align*}
\endgroup
Thus the $25$ rank-1 symmetric projections are exactly those of the form 
$P_i = P_{v_i}$ for the vectors $v_i$ above.

Suppose that $Q \in \M_3(\F_5)$ is an invertible matrix with $Q^{-1} = Q^T$. If 
$A \in \M_3(\F_5)_\sym$ then $(QAQ^{-1})^T = (Q^{-1})^T A^T Q^T = QAQ^{-1}$, so that
conjugation by $Q$ restricts from an automorphism of the $\F_5$-algebra $\M_3(\F_5)$ to a 
partial $\F_5$-algebra automorphism of $\M_3(\F_5)^\sym$ and to an automorphism of the 
partial Boolean algebra $\Proj(\M_3(\F_5))$. This applies in particular if $Q$ is any
permutation matrix.

\begin{theorem}
\label{thm:F5 sym coloring}
There is no Kochen-Specker coloring of $\Proj(\M_3(\F_5))$. Thus $\M_3(\F_5)_\sym$ has
no prime partial ideals.
\end{theorem}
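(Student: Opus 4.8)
The plan is to rule out a Kochen-Specker coloring already on the $25$ rank-$1$ projections $P_i = P_{v_i}$, via a counting (divisibility) argument. First I would record the combinatorial setup furnished by Lemma~\ref{lem:denominator}: the $P_i$ are precisely the rank-$1$ projections, $P_i$ and $P_j$ are orthogonal (hence commeasurable, with $P_i \vee P_j = P_i + P_j$) if and only if $v_i \cdot v_j = 0$, and any three mutually orthogonal anisotropic vectors in $\F_5^3$ are linearly independent, hence span. Thus an orthogonal triple of the $P_i$ always joins to $I$, so by Definition~\ref{def:KS coloring} a Kochen-Specker coloring must color \emph{exactly one} projection of each such triple white. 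It therefore suffices to show that there is no set $W$ of ``white'' projections meeting every orthogonal triple in exactly one element.

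Next I would work out the incidence structure of these triples. Classify the $25$ anisotropic lines by the class of $v^T v$ in $\F_5^* / (\F_5^*)^2 \cong \{\pm 1\}$, calling a line \emph{square-type} if $v^T v \in \{1,4\}$ and \emph{non-square-type} if $v^T v \in \{2,3\}$; from the explicit list there are $15$ of the former and $10$ of the latter. Using elementary facts about the form $x^2 + y^2 + z^2$ over $\F_5$ --- that $|\mathrm{O}_3(\F_5)| = 240$, that a non-degenerate plane is hyperbolic exactly when its discriminant is a square, and that a hyperbolic plane contains two square-type and two non-square-type anisotropic lines (pairing up orthogonally by type) while an anisotropic plane contains three of each --- one obtains: (a) every orthogonal triple consists of three square-type lines, or else of one square-type and two non-square-type lines (the product of the three norms must be a square); (b) there are exactly $5$ all-square-type triples, and since each square-type line lies in exactly one of them, these triples \emph{partition} the $15$ square-type lines; (c) there are exactly $15$ mixed triples, for a total of $20$; (d) every square-type line lies in exactly $2$ of the $20$ triples and every non-square-type line in exactly $3$. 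I would cross-check (a)--(d) against a direct enumeration over $v_1,\dots,v_{25}$ (the five all-square-type triples being $\{v_1,v_2,v_3\}$, $\{v_{14},v_{15},v_{16}\}$, and three further ones among $v_{17},\dots,v_{25}$).

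Granting (a)--(d), the contradiction is immediate. If $W$ meets every triple in one element, then by (b) exactly one of the three square-type lines in each of the $5$ all-square-type triples is white, so $W$ contains exactly $5$ square-type projections; write $b = |W| - 5$ for the number of non-square-type projections in $W$. Counting incidences between $W$ and the $20$ triples in two ways --- as the number of triples (each contributing its unique white member) and as a sum over $W$ using the degrees in (d) --- yields $20 = 5 \cdot 2 + b \cdot 3 = 10 + 3b$, forcing $b = 10/3$, which is absurd. Hence $\Proj(\M_3(\F_5))$ admits no Kochen-Specker coloring, and since $\Proj(\M_3(\F_5)) = \Idpt(\M_3(\F_5)_\sym)$, Corollary~\ref{cor:coloring to spectrum} gives $\pSpec(\M_3(\F_5)_\sym) = \varnothing$.

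The main obstacle is step (a)--(d): one must determine the orthogonality incidence structure of the $25$ anisotropic lines exactly. The quadratic-form bookkeeping (square classes of discriminants, the hyperbolic-versus-anisotropic dichotomy for binary forms, the order $|\mathrm{O}_3(\F_5)| = 240$) is elementary but delicate, and the safest route is to back it up with an explicit enumeration of all orthogonal triples modulo $5$ --- finite but error-prone arithmetic. Once the structure is in hand --- $20$ triples, $5$ of which partition the $15$ square-type lines, with line-degrees $2$ and $3$ --- the divisibility obstruction $3 \nmid 10$ closes the argument with no further computation.
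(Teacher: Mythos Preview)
Your argument is correct and takes a genuinely different route from the paper. The paper proceeds by direct case analysis: after using the action of permutation matrices and of $\diag(1,-1,1)$ to normalise the coloring so that $v_1,v_2$ are black, $v_3$ white, $v_6$ black, $v_9$ white, it follows a chain of eight orthogonality deductions among explicit triples $\{v_i,v_j,v_k\}$ until two white vectors land in a single orthogonal triple. Your approach instead exploits the global incidence structure: the quadratic-form dichotomy (square-type versus non-square-type anisotropic lines) yields degrees $2$ and $3$ in the hypergraph of $20$ orthogonal triples, the five all-square triples partition the $15$ square-type lines, and a double count forces the impossible $3b=10$. The paper's method is entirely self-contained---only a dozen dot products need checking---and uses fewer than half of the $v_i$. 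Your method is more structural and explains \emph{why} no coloring can exist as a divisibility obstruction, very much in the spirit of the paper's own Theorem~\ref{thm:counting argument} for $\Idpt(\M_3(\F_p))$ with $p\equiv 2\pmod 3$; the price is that one must either invoke the classification of binary quadratic forms over $\F_5$ (hyperbolic versus anisotropic planes and the type distribution of their anisotropic lines) or else enumerate all $20$ triples by hand. Both approaches finish with the same appeal to Corollary~\ref{cor:coloring to spectrum} for the claim about prime partial ideals.
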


\begin{proof}
Assume for contradiction that there is a Kochen-Specker coloring of the rank-1 projections
in $\M_3(\F_5)$; this induces a coloring on the vectors $\{v_i \mid i = 1, \dots, 25\}$ above.

As remarked above, conjugation by any $3 \times 3$-permutation matrix restricts to an
automorphism of the partial Boolean algebra $\Idpt(\M_3(\F_5))$. This automorphism
clearly preserves the rank of a projection. So it restricts to a bijection on the set of rank-1
projections $\{P_i \mid i = 1,\dots,25\}$ defined by the vectors above. These automorphisms
permute the diagonal matrix units $E_{ii} = P_i$ for $i = 1,2,3$. Thus after conjugating by
an appropriate permutation matrix, we may assume that the coloring of the vectors is such
that $v_1, v_2$ are black and $v_3$ is white. By orthogonality of the triple $\{v_3, v_4, v_7\}$,
we must have $v_4$ and $v_7$ colored black.

Similarly, conjugation by the symmetric matrix 
$Q = \left(\begin{smallmatrix} 1 & 0 & 0 \\ 0 & -1 & 0 \\ 0 & 0 & 1 \end{smallmatrix}\right)$ 
restricts to an automorphism of $\Proj(\M_3(\F_5))$ that preserves rank. This automorphism 
fixes $P_1$ (as $Qv_1= v_1$) and permutes $P_6$ with $P_9$ (as $Qv_6 = v_9$ and 
$Qv_9 = v_6$). Thus, after conjugating by $Q$ if necessary, we assume without loss of generality that 
$v_6$ is colored black and $v_9$ is white. From the following orthogonality relations, we deduce 
the colorings below:
\begingroup
\allowdisplaybreaks
\begin{align*}
\{v_9, v_{10}, v_{17}\} \mbox{ orthogonal} &\implies v_{10} \mbox{ black}, \\
\{v_9, v_{11}, v_{14}\} \mbox{ orthogonal} &\implies v_{11} \mbox{ black}, \\
\{v_4, v_{11}, v_{20}\} \mbox{ orthogonal} &\implies v_{20} \mbox{ white}, \\
\{v_7, v_{10}, v_{19}\} \mbox{ orthogonal} &\implies v_{19} \mbox{ white}, \\
\{v_{18}, v_{20}, v_{25}\} \mbox{ orthogonal} &\implies v_{25} \mbox{ black}, \\
\{v_{19}, v_{21}, v_{22}\} \mbox{ orthogonal} &\implies v_{21} \mbox{ black}, \\
\{v_5, v_{11}, v_{21}\} \mbox{ orthogonal} &\implies v_5 \mbox{ white}, \\
\{v_6, v_{13}, v_{25}\} \mbox{ orthogonal} &\implies v_{13} \mbox{ white}. \\
\end{align*}
\endgroup
But now orthogonality of the triple $\{v_5, v_{13}, v_{24}\}$ contradicts the coloring of the vectors
$v_5$ and $v_{13}$ above, establishing uncolorability of the set of projections $\{P_i\}_{i=1}^{25}$
and consequently the uncolorability of $\Proj(\M_3(\F_5))$. 

Because $\Proj(\M_3(\F_5)) = \Idpt(\M_3(\F_5)_\sym)$, the second claim follows from the first by
Corollary~\ref{cor:coloring to spectrum}.
\end{proof}

Our results on Kochen-Specker colorings on symmetric idempotents over finite fields has the
following consequence for Kochen-Specker colorings of vectors in $\R^3$ whose coordinate entries
happen to be integers (such as those considered in~\cite{Bub:Schutte}).

\begin{corollary}
Suppose that $\{v_i\}$ is a set of vectors in $\R^3$ for which there is no Kochen-Specker coloring. If
all $v_i$ have integer coordinates, then the least common multiple of the integers 
$v_i \cdot v_i = \|v_i\|^2$ is divisible by $6$.
\end{corollary}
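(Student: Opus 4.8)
The plan is to argue by contradiction using the reductions already established over finite fields. Suppose that $\{v_i\} \subseteq \R^3$ is a Kochen-Specker uncolorable set with all $v_i$ having integer coordinates, and suppose toward a contradiction that $6 \nmid \operatorname{lcm}_i \|v_i\|^2$. Then either $2 \nmid \|v_i\|^2$ for all $i$, or $3 \nmid \|v_i\|^2$ for all $i$; fix such a prime $p \in \{2,3\}$, so that $\|v_i\|^2 = v_i^T v_i$ is a unit modulo $p$ for every $i$. The idea is to push the configuration down to $\F_p$ and derive a contradiction with Theorem~\ref{thm:finite sym coloring}, which guarantees that $\Proj(\M_3(\F_p))$ \emph{is} Kochen-Specker colorable for $p = 2,3$.

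First I would record that each $v_i$, having integer entries, reduces modulo $p$ to a vector $\bar v_i \in \F_p^3$, and that because $v_i^T v_i$ is a $p$-adic unit we have $\bar v_i^T \bar v_i \neq 0$ in $\F_p$; in particular $\bar v_i \neq 0$. By Lemma~\ref{lem:denominator}, each $\bar v_i$ therefore determines a rank-1 symmetric projection $P_{\bar v_i} = (\bar v_i^T \bar v_i)^{-1} \bar v_i \bar v_i^T \in \Proj(\M_3(\F_p))$. The next step is to observe that reduction modulo $p$ is compatible with orthogonality in the relevant sense: if $v_i \cdot v_j = 0$ in $\Z$, then $\bar v_i \cdot \bar v_j = 0$ in $\F_p$, and so by the last sentence of Lemma~\ref{lem:denominator} the projections $P_{\bar v_i}$ and $P_{\bar v_j}$ are orthogonal. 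I would also want to handle completeness relations: if a subset of the $v_i$ spans $\R^3$ and the corresponding rank-1 projections sum to the identity $I_3$ over $\R$ (equivalently, the $v_i/\|v_i\|$ form an orthonormal basis), then the reduced projections are pairwise orthogonal rank-1 idempotents in $\M_3(\F_p)$, and a rank count (each is rank 1, they are orthogonal, and an orthogonal sum of three rank-1 projections in $\M_3(\F_p)$ has rank 3) forces their sum to be $I_3$ as well.

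Now I would invoke Theorem~\ref{thm:finite sym coloring}: $\Proj(\M_3(\F_p))$ admits a Kochen-Specker coloring $\chi$. Pull this back along the assignment $v_i \mapsto P_{\bar v_i}$ to obtain a black-and-white coloring of the set $\{v_i\}$, declaring $v_i$ white if and only if $\chi(P_{\bar v_i})$ is white. By the two compatibility facts just established, every orthogonal list among the $v_i$ maps to an orthogonal list of projections in $\M_3(\F_p)$, and every such list that sums to the identity over $\R$ maps to one that sums to the identity over $\F_p$; hence the Kochen-Specker conditions (1) and (2) of Definition~\ref{def:KS coloring} for the pulled-back coloring follow from the corresponding conditions for $\chi$. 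This produces a Kochen-Specker coloring of $\{v_i\}$, contradicting the hypothesis that no such coloring exists. Therefore $6 \mid \operatorname{lcm}_i \|v_i\|^2$.

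The main obstacle I anticipate is the bookkeeping around completeness relations: Definition~\ref{def:KS coloring} is phrased for orthogonal lists in an ambient partial Boolean algebra $B$, whereas a Kochen-Specker coloring of a \emph{set of vectors} in $\R^3$ is usually understood via the identification of vectors with rank-1 projections in $\Proj(\M_3(\R))$ (cf.\ Remark~\ref{rem:vector coloring}), and one must be careful that the uncolorability hypothesis is being used with the same convention on both sides — in particular that ``$p_1 \vee \cdots \vee p_n = 1$'' on the $\R$ side is exactly the orthonormal-basis condition, which is what transfers cleanly to $\F_p$ via the rank argument. Once that identification is pinned down, the rest is a routine reduction-modulo-$p$ compatibility check; no delicate estimates are involved.
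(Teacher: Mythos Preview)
Your proposal is correct and follows essentially the same route as the paper: assume some $p\in\{2,3\}$ fails to divide $\operatorname{lcm}_i\|v_i\|^2$, reduce the vectors modulo $p$ to obtain rank-1 projections in $\Proj(\M_3(\F_p))$ via Lemma~\ref{lem:denominator}, note that orthogonality is preserved, and then pull back a Kochen-Specker coloring of $\Proj(\M_3(\F_p))$ (supplied by Theorem~\ref{thm:finite sym coloring}) to contradict uncolorability of $\{v_i\}$. Your explicit rank argument for why an orthogonal triple of rank-1 projections over $\R$ reduces to one summing to $I_3$ over $\F_p$ is a detail the paper leaves implicit, but otherwise the arguments coincide.
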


\begin{proof}
Let $N = \operatorname{lcm}\{\|v_i\|^2\}$ be the least common multiple described in the statement.
Given any prime $p$, entrywise application of the canonical map $\Z \twoheadrightarrow \F_p$, 
denoted $x \mapsto \overline{x}$, induces a linear mapping $\Z^3 \to \F_p^3$, which we similarly 
denote by $v \mapsto \overline{v}$. Suppose that $p \nmid N$; then each of the images
$\overline{v_i} \cdot \overline{v_i} = \overline{v_i \cdot v_i} \in \F_p$ are nonzero. Thus we obtain 
projections $Q_i = (\overline{v_i \cdot v_i})^{-1} \overline{v_i} \overline{v_i}^T$ in $\Proj(\M_3(\F_p))$ 
for all $i$, with $Q_i$ and $Q_j$ orthogonal whenever $v_i$ is orthogonal to $v_j$. So if the set
$\{v_i\}$ has no Kochen-Specker coloring, the same must be true of the set $\{Q_i\}$, making
$\Proj(\M_3(\F_p))$ uncolorable.

But now it follows from Theorem~\ref{thm:finite sym coloring} that $N$ must be divisible by both
$p = 2$ and $p = 3$, yielding the desired result.
\end{proof}

Related to the results presented above, we ask two related questions:

\begin{question}
(A) Can the conclusion of the corollary above  be strengthened to state that 
$30 = 2 \cdot 3 \cdot 5$ must divide the least common multiple of the $\|v_i\|^2$?

(B) Does $\Proj(\M_3(\Z[1/6]))$ have a Kochen-Specker coloring?
\end{question}

The method of proof used above does not extend to the prime $p = 5$ because of the uncolorability of
$\Proj(\M_3(\F_5))$, which leads to question (A). 
Note that every rank-1 projection in $\M_3(\Z[1/6])$ is of the form $\|v\|^{-2} v v^T$ for
some $v \in \Z^3$. Choosing $v$ to have the least common multiple of its entries equal to~$1$,
then we may conclude that $\|v\|^2 = v \cdot v$ divides a power of~$6$. So a negative
answer to question~(B) would imply a negative answer to question~(A).

\separate

We now turn our attention to partial Boolean algebras of (non-symmetric) idempotents
over various rings, beginning with finite fields. While Theorem~\ref{thm:counting argument}
will be generalized in Corollary~\ref{cor:any ring} below, the uncolorable set of idempotents 
to be constructed for the proof of Theorem~\ref{thm:integer KS} is actually motivated
by this preliminary result regarding finite fields.

The following theorem was communicated to us by Alexandru Chirvasitu, whom we thank for
kindly allowing us to include it here. 
Its proof uses some well-known methods of counting subspaces in vector spaces over finite 
fields via Gaussian binomial coefficients; see~\cite{Prasad}, for instance.
Let $q$ be a prime power, let $F = \F_q$ be the field of $q$ elements, and let $V$
be an $F$-vector space of dimension $n$. 
The number of $k$-dimensional subspaces of $V$ is given by the number of ordered
linearly independent lists of $k$ vectors in $V$ (each of which spans a $k$-dimensional
subspace) divided by the number of bases for a $k$-dimensional vector space:
\[
\binom{n}{k}_q = \frac{(q^n - 1)(q^n - q) \cdots (q^n - q^{k-1})}{(q^k - 1)(q^k - q) 
  \cdots (q^k - q^{k-1})}.
\]

The dual vector space $V^* = \Hom_F(V,F)$ is also an $n$-dimensional vector space.
Given a subspace $W \subseteq V$ of dimension $k$, there is a corresponding
subspace $W^\perp = \{f \in V* \mid f(W) = 0\}$ of dimension $n - k$ in $V^*$.
Note that $W_1 \subseteq W_2$ implies $W_2^\perp \subseteq W_1$ for subspaces
$W_i$ of $V$. Because $V \cong V^*$ as vector spaces, this provides a bijection
between the subspaces of $V$ having dimension $k$ and the subspaces of $V$
having dimension $n -k$, which reverses inclusion of subspaces.

For instance, when $V = F^3$ so that $n = 3$, the number of 1-dimensional
subspaces in $V$ is $\binom{3}{1}_q = (q^3-1)/(q-1) = q^2 + q + 1$ and the
number of 2-dimensional subspaces is also $\binom{3}{2}_q = q^2 + q + 1$.
Also, the number of 2-dimensional subspaces of $V$ that contain a given
1-dimensional subspace is equal to the number of 1-dimensional subspaces
contained in a given 2-dimensional subspace, and is therefore equal to
$\binom{2}{1}_q = (q^2 - 1)/(q-1) = q+1$.

\begin{theorem}\label{thm:counting argument}
Let $p$ be a prime such that $p \equiv 2 \pmod{3}$. Then $\Idpt(\M_3(\F_p))$ has no
Kochen-Specker coloring. 
\end{theorem}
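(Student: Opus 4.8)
The plan is to produce an explicit finite set of idempotents in $\M_3(\F_p)$ that admits no Kochen-Specker coloring, working not with symmetric projections (which behave differently, as the $\F_5$ example shows) but with \emph{all} rank-$1$ idempotents. A rank-$1$ idempotent in $\M_3(\F_p)$ is determined by a pair $(v,f)$ with $v \in F^3 \setminus \{0\}$, $f \in (F^3)^* \setminus \{0\}$, and $f(v) = 1$; its range is $\Span(v)$ and its kernel is $\ker(f)$. Two such idempotents $e_{(v,f)}$ and $e_{(w,g)}$ are orthogonal precisely when $f(w) = 0 = g(v)$, i.e.\ each range sits in the other's kernel. The key combinatorial input is the inclusion-reversing bijection between $1$-dimensional and $2$-dimensional subspaces of $F^3$ recalled just above the statement, together with the counts $\binom{3}{1}_p = \binom{3}{2}_p = p^2+p+1$ and $\binom{2}{1}_p = p+1$. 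When $p \equiv 2 \pmod 3$ we have $3 \nmid p^2+p+1$ (since $p^2+p+1 \equiv 4+2+1 = 7 \equiv 1 \pmod 3$), and this divisibility failure is what I expect to drive a counting contradiction.

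The steps I would carry out: \textbf{(1)} Reduce to a coloring problem on the projective plane $\mathbb{P}^2(\F_p)$: a Kochen-Specker coloring of $\Idpt(\M_3(\F_p))$ restricts to a choice, for each line $\ell$ (a $1$-dim subspace, thought of as a ``point'' of the dual plane) and each plane $\pi \supseteq$ complementary data, of a consistent white/black assignment; the orthogonality structure forces that along any ``full flag-type'' orthogonal triple summing to $I$, exactly one range-line is white. \textbf{(2)} Isolate the relevant structure as a (hyper)graph: vertices are the $p^2+p+1$ one-dimensional subspaces, and the constraint is that for every $2$-dimensional subspace $W$ and every decomposition of $F^3$ as a sum of three lines refining the flag $0 \subset L \subset W \subset F^3$, exactly one of the three summand lines is white. \textbf{(3)} Fix a white line $L_0$ (some line must be white, since $1$ is white and $1$ is a sum of three orthogonal rank-$1$ idempotents, e.g.\ the $E_{ii}$). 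Then show every line orthogonal to $L_0$ in the relevant sense must be black, and chase the forced colorings through the $p+1$ planes through $L_0$ and the $p+1$ lines in each such plane. \textbf{(4)} Count: summing an appropriate indicator of ``white'' over a suitable orbit or over all $(p^2+p+1)$ lines, the Kochen-Specker conditions force the total to be simultaneously congruent to two incompatible residues mod $3$, using $3 \nmid p^2+p+1$ and $3 \mid p+1$ (both consequences of $p \equiv 2 \pmod 3$).

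The main obstacle, I expect, is Step (4): packaging the local ``exactly one white per orthogonal triple'' conditions into a global parity/counting identity. The clean way is probably a double-counting of incident (line, triple) pairs: each line lies in a computable number of orthogonal triples summing to $I$ (governed by $\binom{2}{1}_p = p+1$ and the bijection of Step~1), and summing ``$1$ = number of white lines in the triple'' over all such triples counts each white line with a fixed multiplicity $m$, giving $m \cdot \#\{\text{white lines}\} = \#\{\text{triples}\}$; since $\#\{\text{triples}\}$ and $m$ have known values mod $3$ while $\#\{\text{white lines}\}$ is constrained (e.g.\ it cannot be $0$, and a dual count bounds it), the congruence $p \equiv 2 \pmod 3$ makes this impossible. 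Getting the multiplicity $m$ and the triple-count exactly right---and verifying that every maximal orthogonal family of rank-$1$ idempotents summing to $I$ in $\M_3(\F_p)$ arises from a genuine direct-sum decomposition $F^3 = L_1 \oplus L_2 \oplus L_3$ (so that the combinatorics of subspaces really does govern everything)---is the delicate part; the rest is bookkeeping with Gaussian binomial coefficients modulo $3$.
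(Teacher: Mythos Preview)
Your Step~(4) double-counting is exactly the paper's argument, and it is all that is needed; Steps~(1)--(3) are detours, and Step~(2) in particular contains a gap. You propose to reduce the coloring on $\Idpt(\M_3(\F_p))$ to a coloring of the $p^2+p+1$ one-dimensional subspaces of $\F_p^3$, but nothing in the definition of a Kochen-Specker coloring forces two rank-$1$ idempotents with the \emph{same} range (and different kernels) to receive the same color, so this reduction is not justified. Fortunately it is also unnecessary: the double count goes through verbatim with rank-$1$ idempotents in place of lines.

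The paper's proof runs as follows. Let $\setS$ be the set of unordered orthogonal triples $\{E_1,E_2,E_3\}$ of rank-$1$ idempotents with $E_1+E_2+E_3=I$. The quantity you call $m$ (the number of triples containing a fixed rank-$1$ idempotent $E$) is constant in $E$ because $\GL_3(\F_p)$ acts \emph{transitively} by conjugation on rank-$1$ idempotents, and this action permutes $\setS$. This transitivity is the ``clean'' step you anticipated needing; no explicit computation of $m$ is required. Since each triple has exactly one white member, $m\cdot\#\{\text{white}\}=|\setS|$ and $m\cdot\#\{\text{black}\}=2|\setS|$, so the total number of rank-$1$ idempotents is $3|\setS|/m$, a multiple of $3$. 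But that total is $(p^2+p+1)\cdot p^2$ (choose a range line, then a kernel plane not containing it), and $3\nmid (p^2+p+1)p^2$ when $p\equiv 2\pmod 3$. That is the entire proof: no line chasing as in your Step~(3), and no delicate congruence bookkeeping beyond the single observation that $3\mid p^2+p+1$ forces $p\equiv 1\pmod 3$.
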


\begin{proof}
We prove the contrapositive. 
Suppose that there exists a Kochen-Specker coloring of $\Idpt(\M_3(\F_p))$.
Consider the set $\setS$ of unordered orthogonal triples $\{E_1,E_2,E_3\}$ of rank-1 
idempotents in $\M_3(\F_p)$ such that $E_1 + E_2 + E_3 = I$.
Given a rank-1 idempotent $E$, let $\setS_E = \{ T \in \setS \mid E \in T\}$;
we claim that these sets have the same cardinality for all $E$.
Indeed, the general linear group $G = \GL_3(\F_p)$ acts transitively on the rank-1
idempotents in $\M_3(\F_p)$ via conjugation, and this induces an action on $\setS$ by
\[
{}^U T = \{U E_i U^{-1} \mid i = 1,2,3\} \quad \mbox{for } U \in G \mbox{ and }
T = \{E_1,E_2,E_3\} \in \setS.
\]
Now given rank-1 idempotents $E$ and $F$, if we fix $U \in G$ with $U E U^{-1} = F$,
the action of $U$ on $\setS$ carries the elements of $\setS_E$ to the elements of
$\setS_F$. Thus these two sets are in bijection, establishing the claim. We let
$N = |\setS_E|$ denote the number of triples in $\setS$ that contain any given
rank-1 idempotent $E$, independent of $E$.

By the Kochen-Specker property of the coloring, each $T \in \setS$ contains exactly
one white idempotent and two black idempotents. 
Thus the number of white rank-1 idempotents is equal to $|\setS|/N$ and the number 
of black rank-1 idempotents is equal to $2(|\setS|/N)$. In particular, the
number of rank-1 idempotents is $3(|\setS|/N)$, a multiple of~$3$.

On the other hand, each rank-1 idempotent is uniquely determined by the choice of its 
range, which can be any line in the vector space $V = \F_p^3$, along with its kernel, which
can be any plane in $V$ not containing that line. The number of lines in $V$ is equal to 
$\binom{3}{1}_p = p^2 + p + 1$. The number of planes not containing a given line in 
$V$ is equal (by duality) to the number of lines not contained in a given plane, and is
therefore equal to $\binom{3}{1}_p - \binom{2}{1}_p = (p^2 + p + 1) - (p+1) = p^2$.
Thus we may alternatively calculate the number of rank-1  idempotents in $\M_3(\F_p)$ 
to be equal to $(p^2 + p + 1)p^2$. 

It follows that $3$ divides $(p^2 + p + 1)p^2$. This is only possible if 
$p \not\equiv 2 \pmod{3}$, completing the proof.
\end{proof}

At this point, we are able to deduce that \emph{for any prime $p \neq 3$, there is no 
Kochen-Specker coloring of $\Idpt(\M_3(\F_p))$}. Indeed, if $p \notin \{2,3,5\}$, then the 
existence of a (unique) ring homomorphism $\Z[1/30] \to \F_p$ induces morphisms of partial 
Boolean algebras $\Proj(M_3(\Z[1/30])) \subseteq \Idpt(\M_3(\Z[1/30]) \to \Idpt(\M_3(\F_p))$. 
It follows from Theorem~\ref{thm:Bub} and functoriality of Kochen-Specker colorings that 
$\Idpt(\M_3(\F_p))$ has no Kochen-Specker colorings. On the other hand, for $p \in \{2,5\}$ 
the result follows directly from Theorem~\ref{thm:counting argument}.
But as we have already mentioned, a far stronger conclusion will be made in
Corollary~\ref{cor:any ring} below.


\separate

We will now define a set $\setS \subseteq \Idpt(\M_3(\Z))$ that will be shown to have
no Kochen-Specker coloring. 
For a commutative ring $R$, given a basis $\{u,v,w\}$ of the free $R$-module $R^3$, we use the notation 
$\left[\begin{array}{@{}c|cc@{}} u & v & w \end{array} \right]$ to denote the the idempotent in
$\M_3(R)$ with range spanned by $u$ and with kernel spanned by $v$ and $w$. (This may be explicitly
constructed via the invertible matrix $U = \begin{pmatrix} u & v & w \end{pmatrix}$ as
$U E_{11} U^{-1}$.)
The argument given in the proof of Theorem~\ref{thm:counting argument}
shows that number of rank-1 idempotents in $\M_3(\F_2)$ is equal to $2^2(2^2 + 2 + 1) = 28$.
Below we list all~28 idempotent matrices of rank~1 in $\M_3(\F_2)$ in terms of the above notation.
\begingroup
\allowdisplaybreaks
\addtolength{\jot}{1em}
\begin{align*}
P_1 &= \left[ \begin{array}{@{}c|cc@{}} 
1 & 0 & 0 \\
0 & 1 & 0 \\
0 & 0 & 1
\end{array} \right]
& P_2 &= \left[ \begin{array}{@{}c|cc@{}} 
0 & 1 & 0 \\
1 & 0 & 0 \\
0 & 0 & 1
\end{array} \right]
& P_3 &= \left[ \begin{array}{@{}c|cc@{}} 
0 & 1 & 0 \\
0 & 0 & 1 \\
1 & 0 & 0
\end{array} \right]
& P_4 &= \left[ \begin{array}{@{}c|cc@{}} 
0 & 1 & 0 \\
1 & 0 & 1 \\
0 & 0 & 1
\end{array} \right] \\
P_5 &= \left[ \begin{array}{@{}c|cc@{}} 
0 & 1 & 0 \\
1 & 0 & 1 \\
1 & 0 & 0
\end{array} \right]
& P_6 &= \left[ \begin{array}{@{}c|cc@{}} 
0 & 1 & 0 \\
0 & 0 & 1 \\
1 & 0 & 1
\end{array} \right]
& P_7 &= \left[ \begin{array}{@{}c|cc@{}} 
0 & 1 & 0 \\
1 & 0 & 0 \\
1 & 0 & 1
\end{array} \right]
& P_8 &= \left[ \begin{array}{@{}c|cc@{}} 
1 & 0 & 1 \\
0 & 1 & 0 \\
0 & 0 & 1
\end{array} \right] \\
P_9 &= \left[ \begin{array}{@{}c|cc@{}} 
1 & 0 & 1 \\
0 & 1 & 0 \\
1 & 0 & 0
\end{array} \right]
& P_{10} &= \left[ \begin{array}{@{}c|cc@{}} 
0 & 0 & 1 \\
0 & 1 & 0 \\
1 & 0 & 1
\end{array} \right]
& P_{11} &= \left[ \begin{array}{@{}c|cc@{}} 
1 & 0 & 0 \\
0 & 1 & 0 \\
1 & 0 & 1
\end{array} \right]
& P_{12} &= \left[ \begin{array}{@{}c|cc@{}} 
1 & 0 & 1 \\
0 & 0 & 1 \\
0 & 1 & 0
\end{array} \right] \\
P_{13} &= \left[ \begin{array}{@{}c|cc@{}} 
1 & 0 & 1 \\
1 & 0 & 0 \\
0 & 1 & 0
\end{array} \right]
& P_{14} &= \left[ \begin{array}{@{}c|cc@{}} 
0 & 0 & 1 \\
1 & 0 & 1 \\
0 & 1 & 0
\end{array} \right]
& P_{15} &= \left[ \begin{array}{@{}c|cc@{}} 
1 & 0 & 0 \\
1 & 0 & 1 \\
0 & 1 & 0
\end{array} \right]
& P_{16} &= \left[ \begin{array}{@{}c|cc@{}} 
1 & 0 & 1 \\
1 & 1 & 0 \\
1 & 0 & 0
\end{array} \right] \\
P_{17} &= \left[ \begin{array}{@{}c|cc@{}} 
1 & 0 & 0 \\
1 & 1 & 0 \\
1 & 0 & 1
\end{array} \right]
& P_{18} &= \left[ \begin{array}{@{}c|cc@{}} 
0 & 0 & 1 \\
1 & 1 & 0 \\
1 & 0 & 1
\end{array} \right]
& P_{19} &= \left[ \begin{array}{@{}c|cc@{}} 
1 & 0 & 1 \\
1 & 1 & 0 \\
0 & 1 & 0
\end{array} \right]
& P_{20} &= \left[ \begin{array}{@{}c|cc@{}} 
1 & 1 & 0 \\
0 & 1 & 1 \\
0 & 0 & 1
\end{array} \right] \\
P_{21} &= \left[ \begin{array}{@{}c|cc@{}} 
0 & 1 & 0 \\
1 & 1 & 1 \\
0 & 0 & 1
\end{array} \right]
& P_{22} &= \left[ \begin{array}{@{}c|cc@{}} 
0 & 0 & 1 \\
1 & 0 & 1 \\
1 & 1 & 0
\end{array} \right]
& P_{23} &= \left[ \begin{array}{@{}c|cc@{}} 
1 & 0 & 1 \\
1 & 0 & 0 \\
1 & 1 & 0
\end{array} \right]
& P_{24} &= \left[ \begin{array}{@{}c|cc@{}} 
1 & 0 & 1 \\
0 & 1 & 0 \\
1 & 1 & 0
\end{array} \right] \\
P_{25} &= \left[ \begin{array}{@{}c|cc@{}} 
0 & 1 & 0 \\
0 & 1 & 1 \\
1 & 0 & 1
\end{array} \right]
& P_{26} &= \left[ \begin{array}{@{}c|cc@{}} 
1 & 1 & 0ß† \\
1 & 0 & 1 \\
0 & 1 & 0
\end{array} \right]
& P_{27} &= \left[ \begin{array}{@{}c|cc@{}} 
1 & 0 & 1 \\
0 & 0 & 1 \\
1 & 1 & 0
\end{array} \right]
& P_{28} &= \left[ \begin{array}{@{}c|cc@{}} 
1 & 1 & 0 \\
1 & 1 & 1 \\
1 & 0 & 1
\end{array} \right]
\end{align*}
\endgroup
Considering the column vectors above as elements of $\Z^3$, note that each triple of vectors given 
above forms a basis of the free $\Z$-module $\Z^3$. (This can be easily verified, for instance, by 
noting that the matrix $U = \begin{pmatrix} u & v & w \end{pmatrix}$ formed by placing the three 
vectors from a triple into its columns has determinant in the group  of units $\{\pm 1\} \subseteq \Z$.) Thus if we 
interpret the above notation in $\M_3(\Z)$, the $P_i$ for $1 \leq i \leq 28$ define distinct rank-1 
idempotents in $\M_3(\Z)$. We let $\setS = \{P_i \mid 1 \leq i \leq 28\}$ denote the set of~$28$ 
idempotents in $\M_3(\Z)$ given above.

Note that two idempotents $P =\left[\begin{array}{@{}c|cc@{}} u & v & w \end{array} \right]$ and
$Q = \left[\begin{array}{@{}c|cc@{}} x & y & z \end{array} \right]$ presented as above are orthogonal
if and only if the range vector $u$ is contained in the $\Z$-span of the kernel vectors $y$ and $z$,
and also $x$ is contained in the $\Z$-span of $v$ and $w$.

Under the canonical ring homomorphism $ \phi \colon \M_3(\Z) \to \M_3(\F_2)$ induced entrywise by 
$\Z \twoheadrightarrow \F_2$, the set 
$\setS$ is constructed in such a way that $\phi$ restricts to a bijection 
$\setS \overset{\sim}{\longrightarrow} \Idpt(\M_3(\F_2))$. In this sense, we may think of $\setS$ as a
set of ``lifts'' of the idempotents of $\M_3(\F_2)$ to the integer $3 \times 3$ matrices.

If the idempotents in $\setS$ satisfied the same orthogonality relations as their images in $\M_3(\F_2)$,
then it would follow directly from Theorem~\ref{thm:counting argument} that $\setS$ has no 
Kochen-Specker coloring. But as it happens, there are triples of idempotents in $\setS$ that
are not pairwise orthogonal over $\Z$, but whose image under $\phi$ become pairwise orthogonal 
over $\F_2$. 

To be precise, the following list displays all triples $\{i,j,k\}$ such that $\{P_i, P_j, P_k\} \subseteq 
\M_3(\Z)$ and $\{\phi(P_i), \phi(P_j), \phi(P_k)\} \subseteq \M_3(\F_2)$ are both orthogonal.
\begingroup
\allowdisplaybreaks
\begin{align*}
\O_1 &= \{1,2,3\} & \O_2 & = \{1,4,5\} & \O_3 &= \{1,6,7\} & \O_4 &= \{2,8,9\} \\
\O_5 &= \{2,10,11\} & \O_6 &= \{3, 12, 13\} & \O_7 &= \{3, 14, 15\} & \O_8 &= \{4, 8, 16\} \\
\O_9 &= \{4,17,18\} & \O_{10} &= \{5,19,20\} & \O_{11} &= \{5,15,21\} & \O_{12} &= \{6,17,22\} \\
\O_{13} &= \{6,12,23\} & \O_{14} &= \{8,23,24\} & \O_{15} &= \{10,14,17\} & \O_{16} &= \{10,23,27\} \\
\O_{17} &= \{12, 16, 19\} & \O_{18} &= \{15, 22, 25\} & \O_{19} &= \{14,16,26\} & \O_{20} &= \{19,22,28\}
\end{align*}
\endgroup
On the other hand, the list below displays the pairs $\O_{m}^a = \{i,j\}$ and $\O_{m}^b = 
\{i,k\}$ of indices such that $\{P_i,P_j\}$ and $\{P_j, P_k\}$ are orthogonal pairs with $P_i$ and $P_k$ 
not orthogonal over $\Z$, but $\{\phi(P_i), \phi(P_j), \phi(P_k)\}$ forms an orthogonal triple over 
$\F_2$.
\begingroup
\allowdisplaybreaks
\begin{align*}
\O_{21}^a &= \{7,24\} & \O_{21}^b &= \{7, 20\} & \O_{22}^a &= \{7,11\} & \O_{22}^b &= \{7,25\} \\
\O_{23}^a &= \{9,26\} & \O_{23}^b &= \{21,26\} & \O_{24}^a &= \{9,13\} & \O_{24} &= \{13,20\} \\
\O_{25}^a &= \{11,18\} & \O_{25}^b &= \{18,21\} & \O_{26}^a &= \{13, 27\} & \O_{26}^b &= \{13,25\} \\
\O_{27}^a &= \{18, 28\} & \O_{27}^b &= \{18,24\} & \O_{28}^a &= \{26, 28\} & \O_{28}^b &= \{26, 27\}
\end{align*}
\endgroup

Thus the proof that $\setS$ is uncolorable requires a different argument. Note that the following
proof only gives a complete argument that a larger set $\setS' \supseteq \setS$ is uncolorable.
But as we indicate below, with extra work it is possible to prove that $\setS$ itself is uncolorable.

\begin{theorem}\label{thm:integer KS}
There is no Kochen-Specker coloring of $\Idpt(\M_3(\Z))$.
\end{theorem}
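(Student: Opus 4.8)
The plan is to exhibit a finite, explicitly constructed subset $\setS' \subseteq \Idpt(\M_3(\Z))$ that admits no Kochen-Specker coloring; since $\setS'$ is finite and the ambient partial Boolean algebra is larger, this suffices, and by Lemma~\ref{lem:obstruction passes on} the result then propagates to $\Idpt(\M_n(\Z))$ for all $n \geq 3$. The set $\setS' $ should contain the $28$ lifts $P_1,\dots,P_{28}$ listed above, together with a handful of additional rank-$1$ idempotents in $\M_3(\Z)$ introduced precisely to "repair" the orthogonality relations $\O_{21},\dots,\O_{28}$ that hold mod $2$ but not over $\Z$. The guiding idea is this: over $\F_2$, Theorem~\ref{thm:counting argument} (with $p = 2 \equiv 2 \pmod 3$) tells us that the $28$ rank-$1$ idempotents of $\M_3(\F_2)$ are uncolorable because a counting argument forces the number of rank-$1$ idempotents to be a multiple of $3$, which $28$ is not. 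We want to import just enough of that combinatorial rigidity back to $\Z$.

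The key steps, in order, are as follows. First, I would set up the correspondence $\phi\colon \Idpt(\M_3(\Z)) \to \Idpt(\M_3(\F_2))$ induced by reduction mod $2$, under which $\setS$ maps bijectively onto $\Idpt(\M_3(\F_2))$, and record that a Kochen-Specker coloring of $\setS$ pulls back the coloring of the orthogonal triples $\O_1,\dots,\O_{20}$ honestly, but over the "broken" configurations $\O_{21},\dots,\O_{28}$ we only know that $\{P_i, P_j\}$ and $\{P_j, P_k\}$ are orthogonal pairs. Second, for each broken configuration $\O_m$ ($21 \leq m \leq 28$), I would find an auxiliary rank-$1$ idempotent $Q_m \in \Idpt(\M_3(\Z))$ that is genuinely orthogonal over $\Z$ to the relevant vectors, chosen so that running the Kochen-Specker condition through $Q_m$ forces $P_i$ and $P_k$ to receive the same color assignment that the mod-$2$ argument would have forced from the (false over $\Z$) orthogonality of $\{P_i, P_j, P_k\}$. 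Concretely, since over $\F_2$ we would conclude "$P_j$ white $\implies$ $P_i, P_k$ black" and "$P_j$ black $\implies$ exactly one of $P_i, P_k$ white," I need $Q_m$ (and possibly a second auxiliary idempotent) completing $\{P_i, P_j\}$ and $\{P_j,P_k\}$ to genuine orthogonal triples summing to $I$ over $\Z$; this is possible because the relevant planes (kernels) are free rank-$2$ $\Z$-submodules of $\Z^3$ that are direct summands, so they contain primitive vectors orthogonal (in the module-span sense of the displayed orthogonality criterion) to the given lines. Third, with $\setS' = \setS \cup \{Q_m\}$ in hand, I would argue that any Kochen-Specker coloring of $\setS'$ induces, via $\phi$, a well-defined black-and-white coloring of $\Idpt(\M_3(\F_2))$ that satisfies the Kochen-Specker conditions on every orthogonal triple summing to the identity --- the triples $\O_1,\dots,\O_{20}$ directly, and the "broken" ones because the auxiliary idempotents force the colors of $P_i$ and $P_k$ into exactly the relation the genuine $\F_2$-triple would demand. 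This contradicts Theorem~\ref{thm:counting argument}.

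The main obstacle I expect is the second step: verifying that the repair idempotents $Q_m$ exist with the exact orthogonality relations needed, and --- more delicate --- that the color-forcing they produce really does reproduce the $\F_2$ constraint rather than merely a weaker consequence of it. In particular, over $\F_2$ the triple $\{\phi(P_i),\phi(P_j),\phi(P_k)\}$ sums to the identity, and one uses \emph{both} Kochen-Specker conditions (at most one white, and exactly one white); to recover this over $\Z$ from pairwise data I may need two auxiliary idempotents per broken configuration (one completing each of the two orthogonal pairs), and I must check that the four-or-more-term orthogonal decompositions they sit in are genuinely orthogonal over $\Z$ and genuinely sum to $I$. Checking these module-theoretic orthogonality conditions for each of the eight broken configurations is the routine-but-lengthy heart of the argument, and it is where a careful case analysis (or a uniform argument exploiting the $\GL_3(\Z)$- or permutation-symmetry of the configuration) would be needed. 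Once that bookkeeping is done, the contradiction with the counting theorem is immediate, and Remark~\ref{rem:constructably uncolorable} together with Lemma~\ref{lem:obstruction passes on} extends uncolorability from $\M_3(\Z)$ to all $\M_n(\Z)$ with $n \geq 3$.
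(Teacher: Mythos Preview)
Your strategy of reducing to Theorem~\ref{thm:counting argument} over $\F_2$ via auxiliary ``repair'' idempotents has a genuine gap at the second step, and it is not merely a matter of bookkeeping. For a broken configuration $\{P_i,P_j,P_k\}$ with, say, $P_i\perp P_j$ and $P_j\perp P_k$ over $\Z$ but $P_i\not\perp P_k$, the natural repair is to adjoin $Q=I-P_i-P_j$ and $Q'=I-P_j-P_k$, giving honest triples $\{P_i,P_j,Q\}$ and $\{P_j,P_k,Q'\}$ over $\Z$. When $P_j$ is white these do force $P_i,P_k$ black, as desired. But when $P_j$ is black they only say that exactly one of $\{P_i,Q\}$ and exactly one of $\{P_k,Q'\}$ is white; nothing prevents $P_i$ and $P_k$ from both being white (with $Q,Q'$ black) or both black (with $Q,Q'$ white). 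Note that $\phi(Q)=\phi(P_k)$ and $\phi(Q')=\phi(P_i)$, so to salvage the argument you would need $Q$ and $P_k$ to receive the same color---but they are two distinct lifts of the same $\F_2$-idempotent, they cannot be orthogonal to one another over $\Z$, and no finite collection of further auxiliary idempotents obviously links their colors. In short, the induced coloring on the $28$ rank-$1$ idempotents of $\M_3(\F_2)$ is well-defined but need not satisfy the KS condition on the eight broken triples, so the appeal to the counting argument fails.

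The paper avoids this issue entirely: rather than attempting to push the coloring down to $\F_2$, it runs a direct case analysis on $\setS$ itself using only the twenty honest triples $\O_1,\dots,\O_{20}$ and the sixteen honest pairs $\O_{21}^a,\dots,\O_{28}^b$. After exploiting the permutation action of $G\subseteq\GL_3(\Z)$ to normalize the coloring of $\{P_1,P_2,P_3\}$, a short chain of implications through the listed orthogonality relations produces a contradiction. Your intuition that the $\F_2$ structure is what makes the configuration rigid is correct and is indeed what motivated the choice of $\setS$, but the actual uncolorability proof has to be carried out over $\Z$ by hand.
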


\begin{proof}
It suffices to show that the set $\setS$ defined above is uncolorable. This is achieved through
a case-splitting argument which shows that various colorings of the triples $\O_i$ for $i = 1,2,3$
lead to contradictions. Thus, we assume toward a contradiction that $\setS$ has a Kochen-Specker
coloring.

\textbf{Case I:} \emph{$P_1$ black, $P_2$ black, and $P_3$ white in $\O_1$.}
Then in the orthogonal triple $\O_6$ we have that $P_{12}$ is black. We examine two possible
subcases.

\textbf{Case I.1:} \emph{$P_4$ is black and $P_5$ is white in $\O_2$.}
We obtain the following sequence of deductions:
\begin{align*}
P_{19} \mbox{ black in triple } \O_{10} &\Rightarrow P_{16} \mbox{ white in triple } \O_{17} \\
&\Rightarrow P_8 \mbox{ white in triple } \O_8 \\
\end{align*}
We deduce contradictions in two further subcases of Case~I.1 as follows.

\textbf{Case I.1.a:} \emph{$P_6$ black and $P_7$ white in $\O_3$.} 
In this case we deduce:
\begin{align*}
P_{24} \mbox{ black in pair } \O_{21}^a &\Rightarrow P_{11} \mbox{ black in pair } \O_{22}^a \\
&\Rightarrow P_{10} \mbox{ black in triple } \O_5 \\
&\Rightarrow P_{23} \mbox{ black in triple } \O_{14},
\end{align*}
arriving at the contradiction that two idempotents in triple $\O_{16}$ are white. 

\textbf{Case I.1.b:} \emph{$P_6$ black and $P_7$ white in $\O_3$.} 
Now we deduce that $P_{17}$ and $P_{22}$ are black in triple $\O_{12}$, which implies that  
$P_{18}$ is white in triple $\O_9$ and $P_{28}$ is white in triple $\O_{20}$. We obtain a 
contradictory coloring of the pair $\O_{28}^a$. This completes the proof that case~I.1 leads to 
a contradiction.

\textbf{Case I.2:} \emph{$P_4$ is white and $P_5$ is black in $\O_2$.} 
In this setting we have $P_{14}$ black in triple $\O_7$, while $P_8$ and $P_{16}$ are black in 
triple $\O_8$. It follows that $P_9$ is white in triple $\O_4$ and $P_{26}$ is white in triple 
$\O_{19}$. We obtain a contradictory coloring of the pair $\O_{23}^a$. 

We deduce from these contradictions that there is no Kochen-Specker coloring of $\setS$ satisfying
the condition in case~I. Let $G \subseteq \M_3(\Z)$ be the group of permutation matrices, acting on 
$\Idpt(\M_3(\Z))$ by conjugation. While $\setS$ is not fixed under the action of $G$, it is contained in 
a smallest set $\setS' = G\setS$ closed under this action. Now if $\setS'$ had a Kochen-Specker 
coloring, conjugation by a suitable element of $G$ would yield a coloring of $\setS' \supseteq \setS$ 
for which case~I indeed holds. So we find that $\setS'$ is  uncolorable, proving the theorem.

(The interested reader may reason as above to verify similar contradictions in either
\textbf{case~II:} $P_1$ black, $P_2$ white, $P_3$ black, or \textbf{case~III:} $P_1$ 
white, $P_2$ black, $P_3$ black. This proves that $\setS$ itself has no Kochen-Specker coloring.)
\end{proof}

\begin{corollary}\label{cor:any ring}
Let $R$ be a ring, and fix any integer $n \geq 3$.
\begin{enumerate}
\item There is no Kochen-Specker coloring of $\Idpt(\M_n(R))$. 
\item $\pSpec(\M_n(R)) = \varnothing$. 
\item There is no morphism of partial rings from $\M_n(R)$ to any (total) commutative ring.
\item The colimit in $\cRing$ of the diagram of commutative subrings of $\M_n(R)$ is zero.
\end{enumerate}
\end{corollary}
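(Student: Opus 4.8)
The plan is to reduce everything to Theorem~\ref{thm:integer KS} using the universal property of $\Z$ together with the functoriality and reduction results already in hand. The crucial point is that $\Z$ is the initial object of $\Ring$: for any ring $R$ the unique homomorphism $\Z \to R$ induces, by entrywise application, a ring homomorphism $\M_3(\Z) \to \M_3(R)$, hence (being in particular a morphism of partial rings) a morphism of partial Boolean algebras $\Idpt(\M_3(\Z)) \to \Idpt(\M_3(R))$ under the functor $\Idpt$.

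First I would establish~(1) in the base case $n = 3$. Suppose $\Idpt(\M_3(R))$ admitted a Kochen-Specker coloring. Applying the functor $\KS \colon \pBool\op \to \Set$ to the morphism above --- that is, pulling the coloring back along $\Idpt(\M_3(\Z)) \to \Idpt(\M_3(R))$ --- would yield a Kochen-Specker coloring of $\Idpt(\M_3(\Z))$, contradicting Theorem~\ref{thm:integer KS}. Thus $\Idpt(\M_3(R))$ has no Kochen-Specker coloring, and Lemma~\ref{lem:obstruction passes on}(1) then propagates this to $\Idpt(\M_n(R))$ for all $n \geq 3$, giving~(1) in full.

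Statement~(2) is immediate from~(1) via Corollary~\ref{cor:coloring to spectrum}. Statement~(3) follows from~(2) and Lemma~\ref{lem:empty spec obstructions}(1): a morphism of partial rings $\M_n(R) \to C$ with $C$ a nonzero commutative ring would pull a prime ideal of $C$ back to a prime partial ideal of $\M_n(R)$, contradicting $\pSpec(\M_n(R)) = \varnothing$. Statement~(4) follows from~(2) and Lemma~\ref{lem:empty spec obstructions}(2), applied with the partial ring $\M_n(R)$ in the role of $R$.

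Since all of the substantive labor was carried out in Theorem~\ref{thm:integer KS} and the auxiliary lemmas of Section~\ref{sec:partial}, there is essentially no obstacle in this corollary; it is a matter of assembling pieces. The only step meriting a moment's attention is confirming that the entrywise map $\M_3(\Z) \to \M_3(R)$ is a genuine ring homomorphism --- so that it is a morphism of partial rings and hence interacts correctly with $\Idpt$ and with the functor $\KS$ --- and this is immediate from initiality of $\Z$. (One should also bear in mind that, strictly, the zero ring is a trivial exception in~(3), the real content being the nonexistence of morphisms of partial rings from $\M_n(R)$ to \emph{nonzero} commutative rings, exactly as in Lemma~\ref{lem:empty spec obstructions}(1).)
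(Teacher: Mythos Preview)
Your proposal is correct and follows essentially the same approach as the paper's own proof: reduce to Theorem~\ref{thm:integer KS} via the canonical map $\M_n(\Z) \to \M_n(R)$, invoke Lemma~\ref{lem:obstruction passes on} to pass from $n=3$ to general $n$, then read off (2)--(4) from Corollary~\ref{cor:coloring to spectrum} and Lemma~\ref{lem:empty spec obstructions}. The only cosmetic difference is the order in which you apply the two reductions---you first change base from $\Z$ to $R$ at $n=3$ and then increase $n$ over $R$, whereas the paper first increases $n$ over $\Z$ and then changes base---but either order works equally well. Your closing parenthetical about the zero ring is a valid observation: statement~(3) as written should strictly read ``nonzero commutative ring'' to match Lemma~\ref{lem:empty spec obstructions}(1).
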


\begin{proof}
(1) It follows from Lemma~\ref{lem:obstruction passes on} and Theorem~\ref{thm:integer KS} that 
$\Idpt(\M_n(\Z))$ has no Kochen-Specker coloring. The existence of a ring homomorphism
$\M_n(\Z) \to \M_n(R)$ and functoriality of $\Idpt$ yield a morphism of partial Boolean algebras
$\Idpt(\M_n(\Z)) \to \Idpt(\M_n(R))$. Now by functoriality of Kochen-Specker colorings, we obtain
a function $\KS(\Idpt(\M_n(R))) \to \KS(\Idpt(\M_n(\Z))) = \varnothing$. Now we deduce that 
$\Idpt(\M_n(R))$ has no Kochen-Specker colorings.   

Part~(2) follows from~(1) by Corollary~\ref{cor:coloring to spectrum}. Then~(3) and~(4) are 
immediate from Lemma~\ref{lem:empty spec obstructions}.
\end{proof}

We close this section with some open questions related to ``ring-theoretic contextuality''
as it is discussed in Section~\ref{sec:intro}. To date, the only proofs that a noncommutative 
ring has an empty partial spectrum rely upon Kochen-Specker uncolorability of idempotents, 
as in Corollary~\ref{cor:coloring to spectrum} above. Thus it would be interesting to find a ring
with empty partial spectrum in the extreme case with only the trivial idempotents $0$ and $1$.

\begin{question}
Does there exist a nonzero ring $R$ with no nontrivial idempotents such that 
$\pSpec(R) = \varnothing$?
\end{question}

By Lemma~\ref{lem:empty spec obstructions}, an example of such a ring will have no morphism
of partial rings $R \to C$ for any nonzero commutative ring $C$, and thus will have no 
``noncontextual hidden variable theory.'' If $R$ is a domain (that is, a nonzero ring without zero divisors),
then the zero ideal is readily seen to be a prime partial ideal so that $\pSpec(R)$ is nonempty;
nevertheless, it would be interesting to find an example of such $R$ that still has no
``noncontextual hidden variable theory.'' Thus we ask the following.

\begin{question}
Does there exist a domain $R$ such that there is no morphism of partial rings
$R \to C$ for any nonzero commutative ring $C$?
\end{question} 

\section{Applications to spectrum functors in noncommutative algebraic geometry}
\label{sec:spectrum}

In this final section, we apply the above results on the partial spectrum of integer matrix rings
to strengthen the main result of~\cite{Reyes} and certain results from~\cite{BergHeunen:extending}.

Modern algebraic geometry provides a way to view \emph{every} commutative ring
as a ring of ```globally defined functions'' (the global sections of a sheaf of rings) on a 
geometric object (a locally ringed space) called an \emph{affine scheme}. 
The scheme associated to a commutative ring is called its
\emph{spectrum}, and the assignment of the spectrum to each ring forms an equivalence of
categories $\cRingop \to \AffSch$.
For a commutative ring $R$, the Zariski prime spectrum $\Spec(R)$ (the set of prime ideals of
$R$) forms the underlying set of its affine scheme. 
We refer readers to~\cite[I.2]{EisenbudHarris} for an introduction to the spectrum of a ring
in algebraic geometry.

In the spirit of noncommutative geometry, it is natural to wonder whether every noncommutative
ring may be given a similar ``spatial realization.'' The most obvious way to attempt to build a
``noncommutative affine scheme'' would be to use a ringed space for which the sheaf
of rings is not necessarily commutative. Indeed, such constructions have been intensely pursued
in past decades; an outstanding survey of these efforts may be found in~\cite{VanOystaeyen}.
In order to obtain a true correspondence between algebra and geometry, one would wish
for such a construction to be a contravariant functor. Thus, at the very least, one would
require a functor $F$ from $\Ringop$ to the category $\Top$ of topological spaces, or
even to $\Set$ if we forget about topology, that yields the underlying point set of each
ringed space, such that the restriction of $F$ to $\cRingop$ is (isomorphic to) the usual
spectrum functor $\Spec \colon \cRingop \to \Set$. Furthermore, in order to obtain a
nontrivial construction, one should require that if $R$ is a nonzero ring then $F(R)$ is
nonempty. 

However, it was shown in~\cite{Reyes} that any functor $F$ as above necessarily assigns 
$F(\M_n(R)) = \varnothing$ for any ring $R$ containing $\C$ as a subring and any integer
$n \geq 3$. The proof of this result crucially relied upon the fact that the Kochen-Specker 
Theorem implies that $\pSpec(\M_n(R))= \varnothing$ for any such $R$.
Our algebraic analogues of Kochen-Specker will allow us to extend this result to any ring
$R$, not only those that contain the complex numbers.

In hindsight, the connection between the connection between this algebro-geometric
obstruction and the Kochen-Specker Theorem is arguably a natural one. For
a ring $R$, let $\catC(R)$ denote the diagram in $\cRing$ whose objects are the 
commutative subrings $C \subseteq R$ and whose morphisms $C_1 \to C_2$ are the
inclusions of subrings $C_1 \subseteq C_2 \subseteq R$. It is shown in~\cite[Proposition~2.14]{Reyes} that
as sets, the partial spectrum of $R$ is the limit in the category of sets of the spectra
$\Spec(C)$ for $C \in \catC(R)$:
\begin{equation}\label{eq:Spec limit}
\pSpec(R) \cong \varprojlim_{C \in \catC(R)} \Spec(C).
\end{equation}
This bijection allows us to view a point in $\pSpec(R)$ as a ``noncontextual choice
of points'' in the spectra $\Spec(C)$ of all commutative subrings of $R$. The
obstruction of~\cite{Reyes} used the Kochen-Specker Theorem to show that
there does not exist any such ``noncontextual choice of points'' in the case where
$R = \M_n(\C)$ for $n \geq 3$. See also ~\cite[Sec.~3--4]{Doring} for a related 
discussion.

Further, it is interesting to note that Kochen and Specker's motivating discussion 
in~\cite[Section~1]{KochenSpecker} phrases the problem of hidden variables as the search for
a probability space $\Omega$ of ``hidden pure states,'' with a hidden variable theory being
a morphism of partial algebras from the algebra of observables to the algebra of real-valued
measurable functions on $\Omega$. If one imagines $\Omega$ as a kind of spectrum associated
to a quantum system, then it seems entirely natural that the Kochen-Specker theorem should have 
led to the results of~\cite{Reyes}.

We now prove the strengthened version of~\cite[Theorem~1.1]{Reyes}, 
answering the question posed in~\cite[Question~4.2]{Reyes}. We use
essentially the same argument, relying upon Theorem~\ref{thm:integer KS}
in place of the Kochen-Specker Theorem. 

\begin{theorem}\label{thm:empty Spec}
Let $F \colon \Ringop \to \Set$ (or $F \colon \Ringop \to \Top$) be a functor whose 
restriction to the full subcategory $\cRingop$ is isomorphic to $\Spec$. Then 
$F(\M_n(R)) = \varnothing$ for any ring $R$ and any integer $n \geq 3$.
\end{theorem}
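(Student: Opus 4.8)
The plan is to mimic the structure of the proof of \cite[Theorem~1.1]{Reyes}, replacing the invocation of the classical Kochen-Specker theorem over $\C$ with our integer-matrix analogue, Corollary~\ref{cor:any ring}. The key point is that the hypothesis pins down $F$ on commutative rings, and the ring $\M_n(R)$ is ``glued'' from its commutative subrings via the colimit in $\cRing$; functoriality of $F$ then forces $F(\M_n(R))$ to receive a map from $F$ applied to that colimit, which we have shown to be zero.

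First I would fix an isomorphism $\eta \colon F|_{\cRingop} \xrightarrow{\ \sim\ } \Spec$ of functors $\cRingop \to \Set$ (composing with the forgetful functor $\Top \to \Set$ in the topological case, since emptiness of a space is detected on its underlying set). Next, recall from Corollary~\ref{cor:any ring}(4) that for $n \geq 3$ the colimit in $\cRing$ of the diagram $\catC(\M_n(R))$ of commutative subrings $C \subseteq \M_n(R)$ is the zero ring. Dually, in $\cRingop$ this diagram has \emph{limit} equal to the zero ring, with the cone legs being the opposites of the inclusions $C \hookrightarrow \M_n(R)$. Here the zero ring is the terminal object of $\cRing$ and hence the initial object of $\cRingop$; its spectrum is $\Spec(0) = \varnothing$.

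Then I would exploit the universal property. For each commutative subring $C \subseteq \M_n(R)$ the inclusion $\iota_C \colon C \hookrightarrow \M_n(R)$ is a morphism in $\Ring$, so $F(\iota_C) \colon F(\M_n(R)) \to F(C) \cong \Spec(C)$ is a function, and these are compatible with the transition maps of $\catC(\M_n(R))$ (because $F$ is a functor and $\eta$ is natural). Thus $F(\M_n(R))$, equipped with the maps $\eta_C \circ F(\iota_C)$, is a cone over the diagram $(\Spec(C))_{C \in \catC(\M_n(R))}$ in $\Set$. By the universal property of the limit together with the bijection~\eqref{eq:Spec limit}, we obtain a canonical function
\[
F(\M_n(R)) \longrightarrow \varprojlim_{C \in \catC(\M_n(R))} \Spec(C) \cong \pSpec(\M_n(R)).
\]
By Corollary~\ref{cor:any ring}(2), the target is empty, and the only set admitting a function to $\varnothing$ is $\varnothing$ itself; hence $F(\M_n(R)) = \varnothing$, as claimed. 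In the topological case the underlying set of $F(\M_n(R))$ is empty, so the space itself is empty.

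The step most in need of care is the verification that $F(\M_n(R))$ genuinely forms a \emph{cone} over the relevant diagram in $\Set$ --- i.e.\ that for an inclusion of commutative subrings $C_1 \subseteq C_2 \subseteq \M_n(R)$ the triangle relating $F(\iota_{C_1})$, $F(\iota_{C_2})$, and $F$ of the inclusion $C_1 \hookrightarrow C_2$ commutes --- and that the resulting map lands in $\pSpec$ compatibly with~\eqref{eq:Spec limit}. This is exactly where naturality of $\eta$ and functoriality of $F$ are used, and it is the heart of the argument in \cite{Reyes}; I would simply invoke \cite[Proposition~2.14]{Reyes} for the identification~\eqref{eq:Spec limit} and reproduce the short diagram-chase, noting that all of it is formal and that the only genuinely new input is Corollary~\ref{cor:any ring} in place of the Kochen-Specker theorem over $\C$.
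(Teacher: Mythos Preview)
Your argument is correct and follows essentially the same route as the paper: the paper invokes \cite[Theorem~2.15]{Reyes} to obtain a natural transformation $F \to \pSpec$ in one stroke, whereas you unpack that result by hand via the cone/limit argument and~\eqref{eq:Spec limit}, and then both proofs conclude from $\pSpec(\M_n(R)) = \varnothing$ (Corollary~\ref{cor:any ring}). The detour through the colimit being zero at the start of your sketch is not actually used and can be omitted.
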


\begin{proof}
By~\cite[Theorem~2.15]{Reyes}, the hypothesis on $F$ ensures that there exists a natural 
transformation $F \to \pSpec$. 
For a fixed ring $R$ and integer $n \geq 3$, the unique ring homomorphism $\Z \to R$ induces a ring
homomorphism $\M_n(\Z) \to \M_n(R)$. The natural transformation and functoriality of $\pSpec$
yield a composite function
\[
F(\M_n(R)) \to \pSpec(\M_n(R)) \to \pSpec(\M_n(\Z)) = \varnothing,
\]
with the last equality following from Corollary~\ref{cor:any ring}. As the only set with a function
to the empty set is $\varnothing$ itself, we conclude that $F(\M_n(R)) = \varnothing$.
\end{proof}

This obstruction to ``noncommutative spectrum functors'' has the following immediate application,
which strengthens~\cite[Corollary~4.3]{Reyes} regarding ``abelianization functors''
defined on the category of rings.

\begin{corollary}\label{cor:abelianization}
Let $R$ be a ring and $n \geq 3$ be an integer. 
If $\alpha \colon \Ring \to \cRing$ is any functor whose restriction to $\cRing$ is isomorphic
to the identity functor, then $\alpha(\M_n(R)) = 0$.
\end{corollary}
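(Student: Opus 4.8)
The plan is to deduce Corollary~\ref{cor:abelianization} from Theorem~\ref{thm:empty Spec} by exhibiting a ``spectrum functor'' built from the hypothesized abelianization functor $\alpha$, and then showing that this forces $\Spec(\alpha(\M_n(R))) = \varnothing$, which for a commutative ring can only happen if that ring is zero.

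\begin{proof}
Suppose $\alpha \colon \Ring \to \cRing$ is a functor whose restriction to $\cRing$ is isomorphic to the identity functor $\mathrm{id}_{\cRing}$. Composing $\alpha$ with the prime spectrum functor $\Spec \colon \cRingop \to \Set$ yields a functor
\[
F = \Spec \circ\, \alpha\op \colon \Ringop \to \Set.
\]
For a commutative ring $C$, the natural isomorphism $\alpha|_{\cRing} \cong \mathrm{id}_{\cRing}$ gives a natural isomorphism $F|_{\cRingop} = \Spec \circ\, (\alpha|_{\cRing})\op \cong \Spec$. Thus $F$ satisfies the hypotheses of Theorem~\ref{thm:empty Spec}, so for any ring $R$ and any integer $n \geq 3$ we have
\[
\Spec(\alpha(\M_n(R))) = F(\M_n(R)) = \varnothing.
\]
But $\alpha(\M_n(R))$ is a commutative ring, and by Remark~\ref{rem:empty and nonempty} every nonzero commutative ring has a nonempty spectrum. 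Therefore $\alpha(\M_n(R)) = 0$.
\end{proof}

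The only genuinely delicate point is confirming that composing with $\alpha\op$ preserves the isomorphism-to-$\Spec$ hypothesis on the nose, which is immediate from functoriality of composition with a natural isomorphism; everything else is a direct citation of Theorem~\ref{thm:empty Spec} and the standard fact (Remark~\ref{rem:empty and nonempty}) that $\Spec$ detects the zero ring among commutative rings. No separate treatment of the topological case is needed here since $\Spec$ lands in $\Set$ already suffices.
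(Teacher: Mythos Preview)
Your proof is correct and follows essentially the same approach as the paper: compose $\alpha$ with $\Spec$ to obtain a functor satisfying the hypotheses of Theorem~\ref{thm:empty Spec}, conclude that $\Spec(\alpha(\M_n(R))) = \varnothing$, and then invoke the fact that a commutative ring with empty spectrum is zero. If anything, your write-up is slightly more careful in stating the conclusion for general $\M_n(R)$ rather than specializing.
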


\begin{proof}
The hypothesis on $\alpha$ ensures that the composite functor $\Spec \circ \alpha \colon
\Ringop \to \Set$ has restriction to $\cRingop$ isomorphic to $\Spec$. By Theorem~\ref{thm:empty Spec}
we have $\Spec(\alpha(\M_3(\Z)) = \varnothing$. This implies that the commutative ring
$\alpha(\M_3(\Z))$ is zero.
\end{proof}

It was noted in~\cite[p.~689]{Reyes} that the statements of both Corollaries~\ref{cor:any ring}(4)
and~\ref{cor:abelianization} fail in the case where $n = 2$.

\separate

The study of topological spaces and their sheaves, especially including ringed spaces, can be conducted
without any reference to the underlying point set of the topological space via the use of \emph{locales}
and \emph{toposes}~\cite{Johnstone:pointless}. One might therefore expect that obstructions such as 
the one in Theorem~\ref{thm:empty Spec} could be avoided if one considers the Zariski spectrum as a 
functor taking values in the category of locales rather than in the category of topological spaces. 
(We will describe the localic Zariski spectrum below.)
However, it was shown in~\cite{BergHeunen:extending} that
the obstruction of~\cite{Reyes} persists for functors taking values in such categories of ``pointless 
spaces.'' 
We now show that our version of the Kochen-Specker theorem for integer matrices allows us to extend 
the obstruction of Theorem~\ref{thm:empty Spec} in the same manner. 
From this point until the end of the paper, we now consider the Zariski spectrum $\Spec(R)$ of a
commutative ring $R$ as a topological space, with the usual Zariski topology whose open sets are
those of the form $D(I) = \{\p \in \Spec(R) \mid I \nsubseteq \p\}$ for all ideals $I$ of $R$.
%
%
We include the basic definitions of locales and the localic spectrum below, although we must invoke some results
from locale theory in the proof below without a full survey of this theory, which would take us too far afield.

A \emph{frame} $(F,\bigvee,\wedge,0,1)$ is a complete lattice which satisfies the ``infinite distributive
law'' $a \wedge \left( \bigvee b_i \right) = \bigvee (a \wedge b_i)$ for any family $\{b_i\} \subseteq F$. 
The motivating example of a frame is the collection of open sets in a topological space $X$. Frames form 
a category $\Frm$ whose morphisms are the homomorphisms of posets that preserve finite meets and 
arbitrary joins; in particular, these maps preserve $0$ and $1$.
The category $\Loc = \Frm\op$ of \emph{locales} is defined to be the opposite of the category of frames.
If $L$ denotes a locale, we will use $\Omega(L)$ to denote its underlying frame (so that $L$ is
``opposite'' to $\Omega(L)$ in $\Loc = \Frm\op$); we call the elements of $\Omega(L)$ the 
\emph{opens} of $L$. Given a morphism $f \colon L \to S$ in $\Loc$, we denote the corresponding
morphism of frames as $f^* \colon \Omega(S) \to \Omega(L)$.

We recall one formulation of the localic Zariski spectrum from~\cite[V.3]{Johnstone:Stone} 
(and especially Corollary~V.3.2(i) of that reference). 
For a commutative ring $R$ and an ideal $I$ of $R$, recall that the \emph{radical of $I$} is the ideal
$\sqrt{I} = \{x \in R \mid x^n \in I \mbox{ for some integer } n \geq 1\}$, and that $I$ is
called a \emph{radical ideal} if $I = \sqrt{I}$ (that is, $x^n \in I$ for $x \in R$ and some integer
$n \geq 1$ implies $x \in I$). Let $\RIdl(R)$ denote the set of radical ideals of $R$. 
This forms a lattice with respect to inclusion, which is complete since the intersection of an arbitrary
set of radical ideals is again radical. The join of an arbitrary family $\{I_j\} \subseteq \RIdl(R)$
and the pairwise meet of $I,J \in \RIdl(R)$ are given in terms of the usual ideal sum and product by
\[
\bigvee I_j = \sqrt{\sum I_j}, \qquad I \wedge J = I \cap J = \sqrt{I \cdot J}.
\]
With these descriptions, one can verify that the ``infinite distributive law''
\[
J \wedge \left( \bigvee I_j \right) = \sqrt{ J \cdot \sqrt{\sum I_j} } = \sqrt{ \sum JI_j } 
= \sqrt{ \sum \sqrt{JI_j} } = \bigvee (J \wedge I_j)
\]
holds for all $J$ and $\{I_j\}$ as above. Thus $\RIdl(R)$ is a frame. 

We define the \emph{localic (Zariski) spectrum} of $R$ to be the locale $\LSpec(R)$ whose 
corresponding frame is $\Omega(\LSpec(R)) = \RIdl(R)$. A morphism $f \colon R \to S$ in
$\cRing$ induces a function $\RIdl(f) \colon \RIdl(R) \to \RIdl(S)$ via
$I \mapsto \sqrt{S \cdot f(I)}$, which one may verify to be a morphism of frames;
we denote the corresponding morphism of locales by $\LSpec(f) \colon \LSpec(S) \to \LSpec(R)$. 
In this way, the localic spectrum forms a functor
\[
\LSpec \colon \Ringop \to \Loc.
\]

Readers familiar with the (``spatial'') Zariski spectrum will recognize that 
$\RIdl(R)$ is isomorphic to the lattice of open sets of the Zariski topology on 
$\Spec(R)$ (see~\cite[Lemma~2.1]{Hartshorne}, for instance). Indeed, the
spatial Zariski spectrum $\Spec(R)$ is isomorphic to the \emph{space of points}~\cite[II.1.3]{Johnstone:Stone} of the locale $\LSpec(R)$;
see~\cite[V.3.2]{Johnstone:Stone}. However, 
the definition of $\LSpec$ is entirely constructive, while one must invoke the
Axiom of Choice (or at least the Boolean Prime Ideal Theorem) to verify that
$\Spec(R)$ is nonempty for nonzero rings $R$. For this reason, localic spectra
are preferred in the setting of constructive mathematics.

We say that a locale is \emph{trivial} if its frame of opens is a singleton (i.e., satisfies 
$0 = 1$), or equivalently, if it is an initial object of $\Loc$. 
Trivial locales play the role of the empty space in pointless topology.
If $L$ is a locale with a morphism to a trivial locale, then the top and bottom elements 
of the frame of opens of $L$ are also equal, making $L$ a trivial locale. 

\begin{corollary}\label{cor:localic Spec}
Let $F \colon \Ringop \to \Loc$ be a functor whose restriction to $\cRingop$ is isomorphic
to $\LSpec \colon \cRingop \to \Loc$. Then $F(\M_n(R))$ is the trivial locale for every ring $R$ and every
integer $n \geq 3$.
\end{corollary}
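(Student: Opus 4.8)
The plan is to mimic the argument of Theorem~\ref{thm:empty Spec}, replacing the set-valued obstruction with its localic analogue. The key input is an analogue of~\cite[Theorem~2.15]{Reyes} for locales, established in~\cite{BergHeunen:extending}: since $F$ restricts to $\LSpec$ on $\cRingop$, there is a natural transformation $F \to \LSpec'$, where $\LSpec'$ is an appropriate extension of the localic spectrum to all of $\Ringop$. In fact the cleanest route is to note that the localic spectrum factors through the partial spectrum: for a commutative ring $C$, the frame $\RIdl(C)$ of radical ideals has $\pSpec(C) = \Spec(C)$ as its space of points, and more importantly, the join-irreducible (or prime) elements correspond to points, so that an empty point set forces the frame to collapse. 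The precise statement I would invoke is that there is a natural transformation from $F$ to a localic version of $\pSpec$, together with the fact that when $\pSpec(R) = \varnothing$ this localic invariant is the trivial locale --- this is exactly the content of~\cite{BergHeunen:extending}.

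First I would record that, by Corollary~\ref{cor:any ring}(2), $\pSpec(\M_n(\Z)) = \varnothing$ for every $n \geq 3$. Next, for a fixed ring $R$ and integer $n \geq 3$, the unique ring homomorphism $\Z \to R$ induces $\M_n(\Z) \to \M_n(R)$ in $\Ring$, hence a morphism $F(\M_n(R)) \to F(\M_n(\Z))$ in $\Loc$ by contravariant functoriality of $F$. The crux is then to show $F(\M_n(\Z))$ is the trivial locale. For this I would use the natural transformation $F \to G$ to whatever localic invariant $G$ is furnished by~\cite{BergHeunen:extending} (one whose restriction to $\cRingop$ is $\LSpec$ and which sends a ring with empty partial spectrum to the trivial locale); applying $G$ to $\M_n(\Z)$ gives the trivial locale, and the component $F(\M_n(\Z)) \to G(\M_n(\Z))$ is then a morphism of locales into the trivial locale. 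By the last sentence of the paragraph preceding the corollary, any locale admitting a morphism to a trivial locale is itself trivial. Hence $F(\M_n(\Z))$ is trivial.

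Finally, chaining the two steps: $F(\M_n(R))$ admits a morphism to $F(\M_n(\Z))$, which is trivial, so $F(\M_n(R))$ is trivial as well. That completes the argument.

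The main obstacle is purely bookkeeping about which precise result from~\cite{BergHeunen:extending} to cite: one needs the localic analogue of the universal factorization through $\pSpec$ (the analogue of~\cite[Theorem~2.15]{Reyes}), phrased so that emptiness of $\pSpec(\M_n(\Z))$ --- now available unconditionally from Corollary~\ref{cor:any ring} rather than only over $\C$ --- immediately yields triviality of the target locale. Once that citation is in hand, the proof is a three-line diagram chase identical in structure to that of Theorem~\ref{thm:empty Spec}, with ``empty set'' replaced throughout by ``trivial locale'' and the elementary observation that a set mapping to $\varnothing$ is empty replaced by the equally elementary observation that a locale mapping to the initial locale is initial.
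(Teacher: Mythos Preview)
Your outline has the right shape---natural transformation $F \to G$ for a suitable localic invariant $G$, then show the target is trivial and pull back---but there is a genuine gap at the heart of it. You assert that ``an empty point set forces the frame to collapse'' and that this step is ``exactly the content of~\cite{BergHeunen:extending}.'' Neither claim holds. A nontrivial locale may very well have no points; this is the entire raison d'\^etre of pointless topology. And the argument in~\cite{BergHeunen:extending} for the case $R=\C$ relies on the fact that the relevant commutative subalgebras are finite-dimensional over an algebraically closed field, so that $\LSpec$ preserves the limit in question. That argument does not transfer to commutative subrings of $\M_n(\Z)$, which are not finite-dimensional over any field; the paper's own Remark following the corollary makes exactly this point.

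The paper closes this gap by a different route. It takes $G(R)=\varprojlim_{C\in\catC(R)}\LSpec(C)$ and observes that each $\LSpec(C)$ is a \emph{coherent} locale, and that coherent locales are closed under limits in $\Loc$ (this is the one place~\cite[Lemma~2.6]{BergHeunen:extending} is actually used). Coherent locales are spatial, so $G(\M_n(R))$ is determined by its space of points; since $\pt$ preserves limits one computes $\pt(G(\M_n(R)))\cong\pSpec(\M_n(R))=\varnothing$, and \emph{now} spatiality forces $G(\M_n(R))$ to be trivial. Your reduction to $\M_n(\Z)$ via functoriality is fine (and arguably tidier than the paper's direct treatment of $\M_n(R)$), but it does not let you bypass the coherence/spatiality argument: you still need a reason why pointlessness of $G(\M_n(\Z))$ implies triviality, and that reason is not a black-box citation.
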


\begin{proof}
Let $G \colon \Ringop \to \Loc$ be the functor that assigns to a ring $R$ the limit of the
locales $\LSpec(C)$ where $C$ ranges over the diagram $\catC(R)$ of commutative subrings of $R$. It is clear from the
construction of $G$ that for any functor $F$ as in the statement, there is a natural transformation 
$G \to F$, as in~\cite[Theorem~2.15]{Reyes}.
The category of \emph{coherent locales} is the subcategory of $\Loc$ that is opposite to the 
essential image of the functor that assigns to each distributive lattice its frame of ideals; 
see~\cite[II.3]{Johnstone:Stone}.
It is known that $\LSpec$ has image in the category of coherent locales~\cite[V.3.1]{Johnstone:Stone}.
Furthermore, it is known~\cite[Lemma~2.6]{BergHeunen:extending} that the subcategory of 
coherent locales is closed under limits in $\Loc$.

Let $*$ denote the locale corresponding to the one-point space, so that the point-set functor
$\pt \colon \Loc \to \Set$ is $\pt = \Loc(*,-)$. Being representable, this functor preserves 
limits~\cite[Theorem~V.4.1]{MacLane}. 
Thus for any ring $R$ we have natural isomorphisms
\[
\pt(G(R)) = \pt\left( \varprojlim_{C \in \catC(R)} \LSpec(C) \right) \cong \varprojlim_{C \in \catC(R)} \pt(\LSpec(C)).
\]
As discussed above, the composite $\pt \circ \Spec \colon \cRingop \to \Set$ is isomorphic to the usual 
Zariski prime spectrum functor $\Spec$.
It follows from~\eqref{eq:Spec limit} that the limit above is naturally isomorphic to $\pSpec(R)$, 
so we in fact have a natural isomorphism $\pt \circ \, G \cong \pSpec$. 

Now for any ring $R$ and any integer $n \geq 3$, we have $\pt(G(\M_n(R)) \cong \pSpec(\M_n(R)) = 
\varnothing$ thanks to Theorem~\ref{thm:empty Spec}. Because $G(\M_n(R))$ is coherent, it is a 
\emph{spatial} locale~\cite[II.3.4]{Johnstone:Stone}, so that its frame of opens is isomorphic to the
frame of open subsets of its space of points. Thus $G(\M_n(R))$ is the trivial locale. Finally,
the existence of a morphism of locales $G(\M_n(R)) \to F(\M_n(R))$ implies that $F(\M_n(R))$ 
is also trivial.
\end{proof}

We also remark that as in~\cite{BergHeunen:extending}, similar obstructions hold if we regard 
$\Spec$ as a functor from commutative rings into the any one of the categories of 
toposes, ringed toposes, ringed locales, or ringed spaces. 

\begin{corollary}
Let $\catC$ be any of the categories 
of toposes, ringed toposes, 
ringed locales, or ringed spaces, and consider $\Spec$ as a functor  $\cRingop \to \catC$ in the 
usual way. 
Suppose that $F \colon \Ringop \to \catC$ is a functor whose restriction to $\cRingop$ is 
isomorphic to $\Spec$. Then $F(\M_n(R))$ is the trivial (initial) object of $\catC$ for any ring 
$R$ and any integer $n \geq 3$.
\end{corollary}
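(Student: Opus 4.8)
The plan is to reduce each of the four cases to a situation already handled --- Theorem~\ref{thm:empty Spec} for spaces and Corollary~\ref{cor:localic Spec} for locales --- by composing $F$ with an appropriate ``underlying object'' functor landing in $\Top$, $\Loc$, or the category of toposes. In each case one checks that this functor carries the usual $\Spec \colon \cRingop \to \catC$ to the corresponding usual spectrum functor, so that its composite with $F$ again restricts (up to isomorphism) to that spectrum functor on $\cRingop$; one then deduces that $F(\M_n(R))$ has initial underlying object, and finally that this forces $F(\M_n(R))$ itself to be the initial object of $\catC$.

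I would dispatch the case of toposes first. Here $\Spec$ sends a commutative ring $R$ to the localic topos $\mathrm{Sh}(\LSpec(R))$ of sheaves on its localic spectrum. The full and faithful inclusion $\mathrm{Sh} \colon \Loc \to \catC$ of localic toposes is reflective; write $\mathrm{loc} \colon \catC \to \Loc$ for the reflector, which sends a topos $\mathcal{E}$ to the locale whose frame of opens is the frame of subobjects of the terminal object of $\mathcal{E}$ and which satisfies $\mathrm{loc} \circ \mathrm{Sh} \cong \id_{\Loc}$. Then $\mathrm{loc} \circ F \colon \Ringop \to \Loc$ restricts on $\cRingop$ to $\mathrm{loc} \circ \mathrm{Sh} \circ \LSpec \cong \LSpec$, so Corollary~\ref{cor:localic Spec} shows $\mathrm{loc}(F(\M_n(R)))$ is the trivial locale. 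Triviality of $\mathrm{loc}(\mathcal{E})$ means exactly that the subobjects $0$ and $1$ of the terminal object coincide, i.e.\ $0 \cong 1$ in $\mathcal{E}$; since initial objects in toposes are strict, this collapses $\mathcal{E}$ to the degenerate topos, which is the initial object of $\catC$. Hence $F(\M_n(R))$ is the trivial (initial) object.

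The cases of ringed spaces, ringed locales, and ringed toposes follow the same template, now with $U \colon \catC \to \D$ the forgetful functor discarding the sheaf of rings, where $\D$ is $\Top$, $\Loc$, or the category of toposes respectively. In each case the usual $\Spec \colon \cRingop \to \catC$ assigns to $R$ a ringed object with underlying object $\Spec(R)$ (as a topological space), $\LSpec(R)$, or $\mathrm{Sh}(\LSpec(R))$, so $U \circ F$ satisfies the hypothesis of Theorem~\ref{thm:empty Spec}, of Corollary~\ref{cor:localic Spec}, or of the topos case above; therefore $U(F(\M_n(R)))$ is the empty space, the trivial locale, or the degenerate topos. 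Each of these is the initial object of $\D$, and each admits --- up to unique isomorphism --- exactly one sheaf of rings (for instance the degenerate topos has a single object, which is its unique ring object), so $F(\M_n(R))$ is the unique ringed object over this initial base; that object admits a unique morphism to every object of $\catC$, hence is initial in $\catC$.

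The argument is thus not a calculation but an assembly of standard facts, and the only step requiring genuine care is the topos case: one must invoke the reflectivity of localic toposes inside the category of toposes, identify the ``usual'' topos-valued Zariski spectrum with $\mathrm{Sh} \circ \LSpec$, and translate triviality of the localic reflection into degeneracy --- hence initiality --- of the ambient topos. (Alternatively one could imitate the proof of Corollary~\ref{cor:localic Spec} directly, forming $G(R) = \varprojlim_{C \in \catC(R)} \Spec(C)$ inside $\catC$ and using a ``coherence'' subcategory closed under limits together with a limit-preserving points functor; but verifying closure under limits in each $\catC$ seems more laborious than the reduction above.) Everything else --- that forgetting the structure sheaf commutes with the spectrum functors, and that a ringed object over an initial base is itself initial --- is routine.
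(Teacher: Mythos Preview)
Your proposal is correct and follows essentially the same strategy that the paper defers to in citing~\cite[Corollaries~6.2 and~6.3]{BergHeunen:extending}: reduce each category $\catC$ to $\Loc$ or $\Top$ via the localic reflection or the forgetful functor discarding the structure sheaf, apply Theorem~\ref{thm:empty Spec} or Corollary~\ref{cor:localic Spec}, and observe that an object of $\catC$ with trivial underlying space/locale/topos is itself initial. You have in fact supplied more detail than the paper, which simply points to the analogous arguments in the cited reference; your handling of the topos case via $\mathrm{loc} \circ \mathrm{Sh} \cong \id_{\Loc}$ and the strictness of initial objects in toposes is exactly the right mechanism.
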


\begin{proof}
%
The proofs 
are direct analogues of those given in~\cite[Corollaries~6.2 and~6.3 ]{BergHeunen:extending}.
\end{proof}

\begin{remark}
Given the emphasis in the locale theory literature on constructive proofs, we wish to emphasize 
that the proof of Corollary~\ref{cor:localic Spec} above invokes the nonconstructive technique of 
reducing to spaces of points; by contrast, the proof for the case where $R = \C$ given 
in~\cite[Corollary~6.1]{BergHeunen:extending} is constructive. 
The proofs of these localic obstructions are complicated by the fact that the Zariski spectrum (either spatial or localic) generally 
does not preserve limits out of $\Ringop$. This was handled in~\cite{BergHeunen:extending} 
by noting that the Zariski spectrum \emph{does} preserve limits when restricted to 
finite-dimensional algebras over the algebraically closed field $\C$. 
Unfortunately, more work is required in our context because our algebras are not finite-dimensional
over a field. There is a proof of Corollary~\ref{cor:localic Spec} that is constructive in principle.
The idea is to consider the localic spectra of the finite diagram of the commutative subrings of $\M_3(\Z)$ 
that are generated by orthogonal sets of idempotents from the proof of Theorem~\ref{thm:integer KS}, 
to show that $\LSpec$ restricted to this diagram (whose limit is trivial) has image in the subcategory 
$\fBool\op$ opposite to finite Boolean algebras within $\Loc = \Frm\op$, and to proceed as 
in~\cite[Corollary~6.1]{BergHeunen:extending} noting that the inclusion $\fBool\op \hookrightarrow \Loc$ 
preserves limits.
With such techniques in hand, one may also extend the proof of~\cite[Corollary~5.7]{BergHeunen:extending}
to show that a similar obstruction holds for extensions of $\Spec$ into the opposite of the category 
of (unital or strong) quantales. 
For the sake of brevity, we do not provide further details here.
\end{remark}

In closing, we note that readers seeking further open problems regarding contextuality 
in noncommutative algebraic geometry will find some in~\cite[Question~4.9]{Reyes2}, 
while readers interested in positive results on noncommutative spectrum functors are 
referred to~\cite{HeunenReyes:active} for one successful example.

\appendix
\renewcommand{\thesection}{A}
\setcounter{section}{0}

\section[Partial algebra morphisms from Kochen-Specker colorings]{Partial algebra morphisms from Kochen-Specker colorings,\\ by Alexandru Chirvasitu}

We prove here the following result relating idempotent colorings to morphisms of partial algebras.

\begin{lemma}\label{lem:coloring to morphism}
  Let $F$ be a perfect field, $K\supseteq F$ a field extension
  containing an isomorphic copy of every degree-two and degree-three
  extension of $F$, and $A$ a partial $F$-subalgebra of
  $\M_3=\M_3(F)$.

  If there exists a Kochen-Specker coloring of $\Idpt(A)$, then there
  exists a morphism of partial $F$-algebras $A \to K$. Consequently,
  $\pSpec(R)$ is nonempty.
\end{lemma}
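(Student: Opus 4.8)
The plan is to take a Kochen-Specker coloring of $\Idpt(A)$ and build a partial algebra morphism $A \to K$ context by context, i.e.\ by defining a compatible family of ring homomorphisms out of each commeasurable (hence commutative) subalgebra of $A$ and gluing them. First I would recall that a commutative subalgebra $C \subseteq A \subseteq \M_3(F)$ decomposes, via its idempotents, according to the primitive orthogonal idempotents $e_1, \dots, e_r$ available in $C$ (with $\sum e_i = 1$); the coloring picks out exactly one $e_i$ as white, say $e_1$. The morphism on $C$ should "live at" the white idempotent: it must kill every idempotent of $C$ below $1 - e_1$ and send $e_1$ to $1$. So the target of $C$ is forced to be a quotient of $e_1 C e_1$, which is a commutative $F$-subalgebra of $e_1 \M_3(F) e_1$. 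The key dimension bound is that $e_1$ has rank $1$, $2$, or $3$ in $\M_3(F)$, so $e_1 C e_1$ embeds in $\M_d(F)$ with $d \le 3$; a commutative subalgebra of $\M_d(F)$ has $F$-dimension at most $d \le 3$ (indeed any commutative subalgebra of $\M_d(F)$ has dimension $\le d$ when... — more carefully, one wants: a \emph{local} commutative $F$-subalgebra of $\M_3(F)$ with residue field separable over $F$ has residue field of degree $\le 3$). After quotienting $e_1 C e_1$ by a maximal ideal we land in a field extension of $F$ of degree at most $3$, which by hypothesis embeds in $K$. This gives, for each commeasurable subalgebra $C$, a homomorphism $\phi_C \colon C \to K$ with $\phi_C^{-1}(0) \cap \Idpt(C)$ equal to the black idempotents of $C$.

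The heart of the argument — and the step I expect to be the main obstacle — is \emph{consistency}: showing that the $\phi_C$ can be chosen so that $\phi_{C}|_{C \cap C'} = \phi_{C'}|_{C \cap C'}$ whenever $C, C'$ are two commeasurable subalgebras, so that they assemble into a single function $A \to K$ that restricts to a homomorphism on each context. The point is that a homomorphism from a commutative $F$-algebra $C$ is \emph{not} determined by which idempotents it kills once $C$ is not spanned by idempotents (e.g.\ when $C$ contains nilpotents or a nontrivial field extension of $F$ inside a rank-$\ge 2$ corner). This is exactly where perfectness of $F$ and the "degree two and three" hypothesis on $K$ enter: perfectness guarantees the relevant residue field extensions are separable, hence (by the primitive element theorem and separability) that each local factor $e_1 C e_1 / \mathfrak{m}$ is a \emph{simple} separable extension, and one must check that the finitely many choices of embedding into $K$ can be made coherently. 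I would handle this by a direct structural analysis: reduce to understanding, for a \emph{maximal} commeasurable subalgebra $C$ containing a given element $a \in A$, the possible homomorphisms $C \to \overline{F}$, show they are governed by the action of $\mathrm{Gal}$ on the rank-$\le 3$ eigenstructure of $a$, and use the coloring to pin down a canonical Galois orbit; the separability from perfectness is what makes "eigenstructure" behave well (semisimple part genuinely splitting over the separable closure). A clean way to organize this is to first treat the case $A = \M_3(F)$ itself is irrelevant (it is uncolorable for most $F$), and instead note that colorability of $\Idpt(A)$ forces every commeasurable subalgebra to have a white idempotent of a controlled rank, then argue that the assignment $a \mapsto (\text{the eigenvalue of } a \text{ on the white context})$ is well defined independent of the chosen context — this is the genuine content, and it is plausibly where one invokes that the coloring itself is a global, context-independent datum.

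Once the compatible family $\{\phi_C\}$ is in hand, defining $\phi \colon A \to K$ by $\phi(a) = \phi_C(a)$ for any commeasurable $C \ni a$ is well defined by consistency, and it is a morphism of partial $F$-algebras because it is $F$-linear and multiplicative on each context and the partial operations of $A$ are by definition computed inside contexts; $\phi(0) = 0$ and $\phi(1) = 1$ are immediate. Finally, for the last sentence: $\phi$ is a morphism of partial rings $A \to K$ to a (total, nonzero) commutative ring, so $\phi^{-1}(0)$ is the preimage of the prime ideal $(0) \subseteq K$ and hence lies in $\pSpec(A)$ by \cite[Lemma~2.10]{Reyes}; in particular $\pSpec(A) \neq \varnothing$. (Here I read the lemma's "$\pSpec(R)$" as "$\pSpec(A)$".) The routine verifications — $F$-linearity of $\phi$, that each $\phi_C$ respects the partial operations, and that a degree-$\le 3$ separable extension of $F$ really does embed in $K$ — I would dispatch briefly, since the definitions make them mechanical; the real work, as noted, is the context-independence of the eigenvalue assignment.
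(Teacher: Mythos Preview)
Your framework has the right shape, but the step you yourself flag as ``the heart of the argument'' --- the context-independence of the assignment --- is never actually carried out. You say you would ``handle this by a direct structural analysis'' involving Galois orbits, but give no concrete mechanism for why the embeddings $\phi_C$ chosen for different maximal commeasurable $C$ can be made to agree on overlaps. When the white idempotent has rank~$2$ or~$3$ and the corner algebra contains a genuine degree-$2$ or degree-$3$ field extension of $F$, there are several embeddings into $K$, and nothing in your outline pins down a coherent choice across contexts. Without this, you have not defined a function on $A$, let alone a morphism.

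The paper's proof sidesteps this difficulty by defining $\varphi$ \emph{element by element} rather than context by context. Using perfectness, it first splits $x = x_s + x_n$ (Jordan decomposition) and sends nilpotents to~$0$. It then splits each semisimple $x_s$ as $x_d + x_{nd}$, where $x_d$ is $F$-diagonalizable (so $\varphi(x_d) = \sum t_i \varphi(p_i)$ is forced by the coloring via the spectral idempotents) and $x_{nd}$ is purely non-diagonalizable, meaning the non-unital algebra $\langle x_{nd}\rangle$ is a field extension $L$ of $F$ of degree~$2$ or~$3$ with unit $e = \mathrm{supp}(x_{nd})$. One then chooses an arbitrary embedding $\psi \colon L \to K$ and sets $\varphi(x_{nd}) = \psi(x_{nd})\varphi(e)$. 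The crucial observation --- and this is exactly the structural fact your outline is missing --- is that if $x,y$ commute and $\varphi(e)=1$, then $ey$ is forced to lie in the field $L = \langle ex\rangle$ itself (since $ey$ commutes with $L$ inside $e\M_3e \cong \mathrm{End}_F(L)$, it must act $L$-linearly, hence be a scalar in $L$). Thus a single choice of embedding for $L$ already determines $\varphi$ on everything commuting with $x$ in that corner, and the consistency problem evaporates. Your context-by-context approach could in principle be completed using this same fact, but as written the proposal stops precisely where the content begins.
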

\begin{proof}
  A Kochen-Specker coloring provides a map $\varphi$ from $\Idpt(A)$
  to $\{0,1\}$ compatible with addition of orthogonal idempotents, and
  we wish to extend this map to all of $A$.

  {\bf Step 1: Reducing to semisimple operators.} The fact that $F$ is
  perfect ensures that we can decompose every $x\in A$ as a sum
  $x_s+x_n$, where $x_s\in A$ is semisimple, $x_n\in A$ is nilpotent,
  and each is a polynomial in $x$ with no constant term. Because
  $x,y\in A$ commute if and only if $x_s$ and $x_n$ both commute with
  $y_s$ and $y_n$, we can simply extend $\varphi$ to the partial
  algebra generated by idempotents and nilpotent operators by sending
  the latter to zero.

  If we had an extension of $\varphi$ to the partial subalgebra
  $A_{ss}\subset A$ consisting of semisimple elements, then we could
  set $\varphi(x)=\varphi(x_n)+\varphi(x_s)$. The above observation
  that $x,y\in A$ commute if and only if $x_s$ and $x_n$ commute with
  $y_s$ and $y_n$ then ensures that this is well defined and a partial
  algebra morphism.

  We may now assume that all elements of $A$ are semisimple; this
  assumption will be in place throughout the rest of the proof.

  The {\it support} $\mathrm{supp}(x)\in\Idpt(\M_3)$ of a semisimple
  element $x\in\M_3$ is the idempotent with the same range and kernel
  as $x$. For every $x\in A$ consider the element $x_d$ (for
  `diagonalizable') defined as $\sum t_ip_i$, where $t_i$ are the
  distinct non-zero eigenvalues of $x$ and $p_i$ are the corresponding
  spectral idempotents.

  The element $x_d$ is expressible as a polynomial in $x$ with no
  constant term (because $p_i$ are so expressible) and hence belongs
  to $A$. Moreover, it is the unique element of $\M_3$ that is
  diagonalizable in $\M_3$, a polynomial in $x$ with no constant term,
  and whose support is maximal among elements with this property.

  It follows from the description of $x$ that $x-x_d$ is either zero
  or {\it purely non-diagonalizable}, in the sense that $(x-x_d)_d$
  vanishes (i.e. $x-x_d$ has no non-zero eigenvalues). Denote
  $x_{nd}=x-x_d$.

  {\bf Step 2: Diagonalizable operators.} The decomposition
$x=x_d+x_{nd}$ is similar in spirit to the Jordan decomposition, and
we can put it to similar use.

  Any diagonalizable $x\in A$ breaks up uniquely as $\sum t_ip_i$
where $t_i\in F$ and $p_i\in \Idpt(A)$ are as above. Now simply set
$\varphi(x)=\sum t_i\varphi(p_i)$. This is easily seen to be a partial
algebra morphism from the partial subalgebra $A_d\subseteq A$
consisting of diagonalizable operators to $F\subseteq K$.

{\bf Step 3: Purely non-diagonalizable operators.} Let $x\ne 0$ be
such an operator and $\langle x\rangle$ the (non-unital) subalgebra of
$A$ that it generates. It is isomorphic to a field extension of $F$
generated by any one of the non-zero eigenvalues of $x$, with unit
$\mathrm{supp}(x)$. Define an arbitrary unital algebra morphism
\begin{equation*} \psi:\langle x\rangle\to K
\end{equation*} and extend $\varphi$ to $\langle x\rangle$ via
$\varphi(x)=\psi(x)\varphi(\mathrm{supp}(x))$.

{\bf Step 4: Putting the ingredients together.} For $x\in A$ set
$\varphi(x)=\varphi(x_d)+\varphi(x_{nd})$.

{\bf Step 5: Checking that $\varphi$ is a morphism.} We have to check
that $\varphi$ as defined above preserves products and sums of
commuting elements $x,y$, which we fix throughout the rest of the
proof.

Because both $x_d$ and $x_{nd}$ can be expressed as polynomials with
no constant term in $x$, an operator commutes with $x$ if and only if
it commutes with $x_d$ and $x_{nd}$. Consequently, $x,y\in A$ commute
if and only if $x_d$, $x_{nd}$, $y_d$ and $y_{nd}$ all commute.

If $x$ and $y$ are diagonalizable there is nothing to check, as we
already know that $\varphi|_{A_d}$ is a morphism of partial
algebras. So we may as well suppose $x_{nd}\ne 0$.

Now $y$ commutes with the idempotent $e=\mathrm{supp}(x_{nd})$, and
since $\varphi$ annihilates exactly one of $e$ and $1-e$ we may as
well restrict our attention to $e\M_3e$ or $(1-e)\M_3(1-e)$, depending
on whether $\varphi(e)=1$ or $\varphi(1-e)=1$ respectively.

There are two possibilities for $e$: either it has rank two and
$\langle ex\rangle\subset e\M_3e$ is a field $L$ of degree two over
$F$, or $e=1$ and $\langle x\rangle$ is a field of degree three over
$F$. This means that if $\varphi(1-e)=1$ then $(1-e)\M_3(1-e)$ is (at
most) one-dimensional with $(1-e)x$ and $(1-e)y$ both scalars therein,
so there is nothing to check.

On the other hand, if $\varphi(e)=1$ then $ey\in e\M_3e$ commutes with
$ex$ and hence acts as $L$-linear endomorphisms of $L$. Putting $ex\in
e\M_3e$ in rational normal form will identify $eF^n$ with $eL$ and
hence $e\M_3e$ with $\mathrm{End}_F(L)$ in such a manner that $\langle
ex\rangle$ gets identified with $L\subset\mathrm{End}_F(L)$ (acting on
itself by multiplication). Since $ey\in\mathrm{End}_F(L)$ acts on $L$
as $L$-linear endomorphisms (because it commutes with $\langle
ex\rangle$) we must have $ey\in L=\langle ex\rangle$. We can now
conclude from the fact that $\varphi$ is a morphism when restricted to
$\langle x\rangle$.
\end{proof}

\begin{remark}\label{rem:finite extension}
  In particular, we can take $K=\overline{F}$, the algebraic closure
  of $F$. On the other hand, if $F$ is finite then $K$ can be taken to
  be its unique degree-six extension.
\end{remark}

\bibliographystyle{plain}
\bibliography{kochenspecker_arxiv_v3}
\end{document}